\DeclareSymbolFont{bbold}{U}{bbold}{m}{n}
\DeclareSymbolFontAlphabet{\mathbbold}{bbold}
\newtheorem{theorem}{Theorem}
\newtheorem{lemma}{Lemma}
\newtheorem{corollary}{Corollary}
\newtheorem{proposition}{Proposition}
\theoremstyle{definition}
\theoremstyle{remark}
\newtheorem{condition}{Condition}
\let\hat\widehat
\definecolor{dkgreen}{rgb}{0,0.6,0}
\definecolor{gray}{rgb}{0.5,0.5,0.5}
\definecolor{mauve}{rgb}{0.58,0,0.82}
\definecolor{brightblue}{HTML}{00BFC4}
\newcommand\blfootnote[1]{%
  \begingroup
  \renewcommand\thefootnote{}%
  \footnote{#1}%
  \addtocounter{footnote}{-1}%
  \endgroup
}
\pgfplotsset{compat=1.18}
\begin{document}

\def\spacingset#1{\renewcommand{\baselinestretch}%
{#1}\small\normalsize} \spacingset{1}

\raggedbottom
\allowdisplaybreaks[1]


  \title{\vspace*{-.4in} {A Unified Framework for Rerandomization using Quadratic Forms}}
   \author{\\ $\text{Kyle Schindl}^{\dagger}$, $\text{Zach Branson}^{\ddag}$ \\ \\
    $^{\dag}$Department of Statistics \\
    Iowa State University \\
    \texttt{kschindl@iastate.edu} \\ \\ 
    $^\ddag$Department of Statistics \& Data Science \\
    Carnegie Mellon University \\
    \texttt{zach@stat.cmu.edu}
\date{}
    }

  \maketitle
  \blfootnote{Accompanying \texttt{R} code is available via \href{https://github.com/kyleschindl/rerandomization-quadratic-forms}{github.com/kyleschindl/rerandomization-quadratic-forms}}
  \thispagestyle{empty}
\spacingset{1.2}
\begin{abstract}
{\em When designing a randomized experiment, one way to ensure treatment and control groups exhibit similar covariate distributions is to randomize treatment until some prespecified level of covariate balance is satisfied; this strategy is known as rerandomization. Most rerandomization methods utilize balance metrics based on a quadratic form $\mathbf{v}^T \mathbf{A} \mathbf{v}$, where $\mathbf{v}$ is a vector of covariate mean differences and $\mathbf{A}$ is a positive semi-definite matrix. In this work, we derive general results for treatment-versus-control rerandomization schemes that employ quadratic forms for covariate balance. In addition to allowing researchers to quickly derive properties of rerandomization schemes not previously considered, our theoretical results provide guidance on how to choose $\mathbf{A}$ in practice. We find the Mahalanobis and Euclidean distances optimize different measures of covariate balance. Furthermore, we establish how the covariates' eigenstructure and their relationship to the outcomes dictate which matrix $\mathbf{A}$ yields the most precise difference-in-means estimator for the average treatment effect. We find the Euclidean distance is minimax optimal, in the sense that the difference-in-means estimator's precision is never too far from the optimal choice. We verify our theoretical results via simulation and a real data application, and demonstrate how the choice of $\mathbf{A}$ impacts the variance reduction of rerandomized experiments.}
\end{abstract}

\noindent
{\it Keywords: Experimental Design, Rerandomization, Quadratic Forms, Mahalanobis Distance, Randomized Experiments} 

\bigskip

\section{Introduction}

In the design stage of randomized experiments, it is important to address covariate imbalance between treatment and control groups. Large covariate imbalances can lead to increased standard errors when estimating causal effects --- this can reduce statistical power and make it more difficult to interpret results \citep{lachin1988statistical, senn1989covariate, lin2015pursuit, branson2022power}. Therefore, it is often preferable to ensure treatment groups exhibit covariate balance before the experiment is conducted. One classical experimental design strategy for ensuring covariate balance is blocking, where treatment is randomized within groups of subjects with similar categorical covariates \citep{box1978statistics, pashley2021insights}. However, blocking cannot be easily extended to accommodate many continuous variables. Instead, an experimental design tool that can handle categorical or continuous variables is rerandomization, where subjects are randomized until some prespecified level of covariate balance is achieved. Although the concept of rerandomization had been discussed as early as Fisher \citep{fisher1992arrangement}, it was not until \cite{morgan2012rerandomization} that a theoretical framework was established in which covariate balance is measured using the Mahalanobis distance between the covariate means of the treatment and control groups (often abbreviated as ``ReM''; hereafter, we simply use ``Mahalanobis Rerandomization''). Since then, there have been many extensions that also use the Mahalanobis distance, including those for tiers of covariates that vary in importance \citep{morgan2015rerandomization}, factorial designs \citep{branson2016improving, li2020rerandomization_2k}, sequential designs \citep{zhou2018sequential}, clustered experiments \citep{lu2023design}, and Bayesian designs \citep{liu2023bayesian}.

One key property of Mahalanobis Rerandomization is that it reduces the variance of all covariate mean differences by an equal amount. While this can be advantageous in some contexts, placing equal weight on all covariates can lead to poor performance when covariates are high-dimensional. One option is to define tiers of covariates based on variable importance, as suggested in \cite{morgan2015rerandomization}. However, it is necessary to specify which covariates are most important, which can be difficult to do in practice. Two methods that automatically place importance on particular covariates in a high-dimensional space are Ridge Rerandomization \citep{branson2021ridge} and PCA Rerandomization \citep{zhang2023pca}. Ridge Rerandomization uses a ridge penalty within the Mahalanobis distance to diagonalize the inverse covariance matrix, thereby placing more importance on top eigenvectors. Meanwhile, PCA Rerandomization uses the Mahalanobis distance only on the top $k$ principal components. This applies a greater amount of variance reduction to the top $k$ components than classical Mahalanobis Rerandomization, but no reduction to the bottom $d - k$ components, where $d$ is the dimension of the covariates.

Most rerandomization methods --- including Ridge Rerandomization and PCA Rerandomization --- use some kind of quadratic form $\mathbf{v}^T \mathbf{A} \mathbf{v}$, where $\mathbf{v} \in \mathbb{R}^d$ is a vector of covariate mean differences and $\mathbf{A}\in \mathbb{R}^{d \times d}$ is a positive semi-definite matrix. For example, if we let $\mathbf{\Sigma}$ denote the covariance matrix of the covariate mean differences, $\mathbf{A} = \mathbf{\Sigma}^{-1}$ corresponds to Mahalanobis Rerandomization and $\mathbf{A} = (\mathbf{\Sigma} + \lambda \mathbf{I}_d)^{-1}$ corresponds to Ridge Rerandomization. To our knowledge, this general framing of quadratic forms for rerandomization has only been noted in \cite{lu2023design} and remains largely unexplored. To fill this gap, we make several contributions. First, we derive general results for any treatment-versus-control rerandomization scheme that uses a quadratic form as its balance metric, as compared to a completely randomized experiment. This allows us to more quickly rederive previous results in the literature, as well as derive results for rerandomization schemes that have not been previously considered. Second, we establish guidance on how to optimally choose the matrix $\mathbf{A}$ in practice. While others such as \cite{liu2023bayesian} and \cite{lu2023design} have considered optimal rerandomization schemes for minimizing the variance of the difference-in-means estimator for the average treatment effect, their results require knowledge about how covariates are related to outcomes in order to be implemented in practice, which is often not available before the start of the experiment. Thus, one of our contributions is deriving optimal rerandomization schemes when outcome information is not available. We show that the Mahalanobis distance maximizes the total variance reduction across covariates and the Euclidean distance minimizes the Frobenius norm of the covariate mean differences' covariance matrix. Furthermore, we establish how the covariates' eigenstructure and their relationship to the outcomes dictate which matrix $\mathbf{A}$ yields the most precise difference-in-means estimator for the average treatment effect. We find that the Euclidean distance is minimax optimal, in the sense that the difference-in-means estimator's precision is never too far from the optimal choice, regardless of the relationship between covariates and outcomes. Consequently, our results are useful for practitioners who seek guidance on which rerandomization method is most appropriate for their data set.

That said, there are limitations to our work. First, not every rerandomization method can be expressed using quadratic forms. For example, \cite{zhao2021no} consider randomizing treatment until the $p$-values for statistical tests of covariate imbalance are above some threshold; while some of these tests can be written as a quadratic form (e.g.\ their joint acceptance rules), other tests cannot, making it unclear how they compare to the rerandomization procedures we consider in this work. Second, there are many experimental design strategies that cannot be framed as a rerandomization procedure; \cite{kallus2018optimal} suggest covariate balancing through kernel allocation, \cite{li2021covariate} suggest partitioning experimental units based on kernel density estimates, and \cite{Harshaw01102024} suggest optimizing a trade-off between robustness and covariate balance. Although we establish optimality results for rerandomization schemes involving quadratic forms, our results do not suggest whether alternative experimental design strategies may be preferable. Nonetheless, our work considers a broad class of rerandomization schemes, and furthermore provides guidance on how researchers can choose designs within this class in practice. 

The remainder of the paper is as follows. In \cref{notation} we define all important notation used in the paper. In \cref{review} we review existing rerandomization methods, with a focus on methods whose balance metrics can be expressed as a quadratic form. In \cref{QFR_section} we derive general results for the covariance of the covariate mean differences after rerandomization using quadratic forms, including optimality results for covariate balance and variance reduction for the difference-in-means estimator. In \cref{simulations} we validate our theoretical results via simulation and a real data application; we find that Euclidean Rerandomization's performance is robust across many different settings. In \cref{conclusion} we conclude with a discussion of our results and directions for future work.

\section{Notation} \label{notation}

Let $\mathbf{X} = (\mathbf{X}_1, \ldots, \mathbf{X}_n)^T \in \mathbb{R}^{n \times d}$ be the covariate matrix representing $n$ experimental units, where $\mathbf{X}_i \in \mathbb{R}^d$ denotes the vector of $d$ covariates for subject $i$. To make theoretical results notationally succinct, we assume that the columns of $\mathbf{X}$ have been centered and standardized such that each column has mean zero and variance one. Next, define $\mathbf{W} = (W_1, \ldots, W_n)^T \in \mathbb{R}^n$ to be the treatment assignment vector where $W_i = 1$ if unit $i$ has been assigned treatment and $W_i = 0$ otherwise. Let $n_1 = \sum^n_{i=1} W_i$ and $n_0 = \sum^n_{i=1}(1 - W_i)$ be the number of units in treatment and control, respectively, and $p = \frac{1}{n}\sum^n_{i=1}W_i$ be the proportion of treated subjects. We use the potential outcomes framework, where unit $i$ has fixed potential outcomes, $Y_i(1)$ and $Y_i(0)$, denoting their outcomes when $W_i = 1$ and $W_i = 0$. Thus, we assume the stable unit treatment value assumption holds, such that one subject's outcome does not depend on another subject's treatment assignment, i.e.\ no interference. Our goal is to estimate the average treatment effect, given by $\tau = \frac{1}{n} \sum^n_{i=1} (Y_i(1) - Y_i(0))$ using the difference-in-means estimator, $\widehat{\tau} = \frac{1}{n_1} \sum^n_{i=1} W_i Y_i(1) - \frac{1}{n_0} \sum^n_{i=1} (1 - W_i) Y_i(0)$. To measure covariate balance, we consider the covariate mean differences, defined as
\begin{align*}
    \widehat{\bm{\tau}}_{\mathbf{X}} :=  \bar{\mathbf{X}}_T - \bar{\mathbf{X}}_C = \frac{1}{n_1} ( \mathbf{X}^T \mathbf{W} ) - \frac{1}{n_0} ( \mathbf{X}^T(\mathbf{1}_n - \mathbf{W}) )
\end{align*}
where $\mathbf{1}_n \in \mathbb{R}^n$ is a vector whose coefficients are all equal to one. We consider the covariates $\mathbf{X}$, the proportion of treated subjects $p$, and potential outcomes as fixed, such that the only stochastic element is the treatment assignment $\mathbf{W}$. Next, let $\mathbf{\Sigma}$ denote the covariance matrix of the covariate mean differences when $\mathbf{W}$ is assigned according to complete randomization, where a random $n_1$ units are assigned to treatment and $n_0$ units are assigned to control. As shown in \cite{morgan2012rerandomization}, $\mathbf{\Sigma}$ is a fixed matrix that can be expressed in terms of the sample covariance of $(\mathbf{X}_1, \ldots, \mathbf{X}_n)$, 
\begin{align*}
    \mathbf{\Sigma} = \text{Cov}\left(\sqrt{n}\left(\bar{\mathbf{X}}_T - \bar{\mathbf{X}}_C \right) \mid \mathbf{X}\right) = \frac{\frac{1}{n-1} \sum^n_{i=1} (\mathbf{X}_i - \bar{\mathbf{X}}) (\mathbf{X}_i - \bar{\mathbf{X}})^T}{p(1-p)}.
\end{align*}
Following other works (e.g.\ \cite{li2018asymptotic}), we use the scaling term $\sqrt{n}$ in the definition of $\mathbf{\Sigma}$ to simplify asymptotic results in \cref{QFR_section}. Furthermore, throughout we condition on $\mathbf{X}$ to emphasize that the only random variable is treatment assignment. Because the potential outcomes are fixed, we implicitly condition on them as well. We define $\mathbf{S}^d_{+} = \{ \mathbf{M} \in \mathbb{R}^{d \times d} \mid \mathbf{M} \succeq 0, \mathbf{M} = \mathbf{M}^T \}$ to be the set of symmetric, positive semi-definite matrices. We use $\lambda_1, \ldots, \lambda_d$ to denote the eigenvalues of $\mathbf{\Sigma}$ and $\mathbf{\Gamma}$ to denote the orthogonal matrix of eigenvectors of $\mathbf{\Sigma}$. Similarly, we use $\eta_1, \ldots, \eta_d$ for the eigenvalues of $\mathbf{\Sigma}^{\nicefrac{1}{2}} \mathbf{A} \mathbf{\Sigma}^{\nicefrac{1}{2}}$ and $\mathbf{\Omega}$ for its eigenvectors. We use $\text{diag}\{(a_{j})_{1 \leq j \leq d} \}$ to refer to a diagonal matrix with elements $a_1, \ldots, a_d$. Finally, for $\mathbf{x} \in \mathbb{R}^d$ we define $||\mathbf{x}||^2_2 = \sum^d_{j=1}x^2_j$ to be the squared $L_2$ norm.

\section{Review of Rerandomization Methods} \label{review}

Here, we review existing rerandomization methods that utilize particular quadratic forms to balance treatment and control groups. While there are other rerandomization methods that do not rely on quadratic forms such as the marginal acceptance rules defined in \cite{zhao2021no}, we focus less on these as our goal is to establish results for distance metrics that can be written as quadratic forms. Along the way, we note how each method incorporates the principal components of $\mathbf{X}$ and the eigenstructure of $\mathbf{\Sigma}$, i.e.\ the covariance of $\sqrt{n} \widehat{\bm{\tau}}_{\mathbf{X}}$ under complete randomization. As we show in \cref{QFR_section}, the eigenstructure plays a crucial role in determining which rerandomization method is optimal for reducing the variance of the difference-in-means estimator $\widehat{\tau}$.

\subsection{Mahalanobis Rerandomization}

First investigated by \cite{morgan2012rerandomization} and extended to many other experimental design settings \citep{morgan2015rerandomization, branson2016improving, li2020rerandomization_2k, li2020rerandomization_regression, zhou2018sequential, wang2022rerandomization, shi2022rerandomization, lu2023design, wang2023rerandomization}, Mahalanobis Rerandomization balances treatment and control groups by randomizing until $M \leq a$, where $a$ is a prespecified threshold and $M = ( \sqrt{n}\widehat{\bm{\tau}}_{\mathbf{X}})^T \mathbf{\Sigma}^{-1} ( \sqrt{n}\widehat{\bm{\tau}}_{\mathbf{X}}) $ is the Mahalanobis distance between $\bar{\mathbf{X}}_T$ and $\bar{\mathbf{X}}_C$. \cite{morgan2012rerandomization} establish several key properties of Mahalanobis Rerandomization that serve as benchmarks for rerandomization research. First, they show that the difference-in-means estimator $\widehat{\tau}$ remains unbiased in finite samples conditional on $M \leq a$ provided that $\sum^n_{i=1} W_i = \sum^n_{i=1} (1 - W_i)$. Furthermore, the expectation of all observed and unobserved covariate mean differences is still zero under rerandomization. Next, the authors show that Mahalanobis Rerandomization applies an equal-percentage variance reduction to all covariates, in the sense that
\begin{align*}
    \text{Cov}(\sqrt{n}\widehat{\bm{\tau}}_{\mathbf{X}} \mid \mathbf{X}, M \leq a ) = v_a \text{Cov}( \sqrt{n}\widehat{\bm{\tau}}_{\mathbf{X}} \mid \mathbf{X})
\end{align*}
where $v_a = \mathbb{P}(\chi^2_{d+2} \leq a) / \mathbb{P}(\chi^2_d \leq a) \leq 1$ is the variance reduction term, i.e., the amount that rerandomization reduces the variances and covariances of $\widehat{\bm{\tau}}_{\mathbf{X}}$, compared to complete randomization. Finally, when the treatment effect is additive, Mahalanobis Rerandomization reduces the variance of $\widehat{\tau}$ by $1 - (1-v_a)R^2$ in finite samples, where $R^2$ is the squared multiple correlation between the potential outcomes and $\mathbf{X}$, i.e.
\begin{align*}
    R^2 = \frac{\text{Cov}(\widehat{\tau}, \widehat{\bm{\tau}}_{\mathbf{X}} \mid \mathbf{X}) \text{Cov}(\widehat{\bm{\tau}}_{\mathbf{X}} \mid \mathbf{X})^{-1} \text{Cov}(\widehat{\bm{\tau}}_{\mathbf{X}}, \widehat{\tau} \mid \mathbf{X})}{\text{Var}(\widehat{\tau} \mid \mathbf{X})}.
\end{align*}
\cite{li2018asymptotic} show that these results hold asymptotically under non-additivity and unequal sample sizes.

Despite these useful properties, placing equal priority on all covariates creates challenges in high-dimensional settings because the variance reduction term $v_a$ is increasing in $d$; as $d \to \infty$ then $v_a \to 1$, such that there is no variance reduction. Consequently, others have recommended balancing metrics that place higher priority on smaller-dimensional spaces.

\subsection{Ridge Rerandomization}

To address the problems that Mahalanobis Rerandomization suffers from in high-dimensional settings, \cite{branson2021ridge} introduce a ridge penalty $\lambda \geq 0$ to the Mahalanobis distance. They suggest randomizing until
\begin{align} \label{ridge_m}
    M_\lambda &= (\sqrt{n}\widehat{\bm{\tau}}_{\mathbf{X}})^T (\mathbf{\Sigma} + \lambda \mathbf{I}_d)^{-1} (\sqrt{n} \widehat{\bm{\tau}}_{\mathbf{X}}) 
\end{align}
is less than some prespecified $a_\lambda > 0$. Under Ridge Rerandomization, there is no longer an equal-percentage variance reduction to the covariance of the covariate mean differences. Instead, $\text{Cov}(\sqrt{n}\widehat{\bm{\tau}}_{\mathbf{X}} \mid \mathbf{X}, M_\lambda \leq a_\lambda ) = \mathbf{\Gamma} \mathbf{\Lambda}^{\nicefrac{1}{2}}( \text{diag}\{(d_{j, \lambda})_{1 \leq j \leq d} \}) \mathbf{\Lambda}^{\nicefrac{1}{2}} \mathbf{\Gamma}^T$ where $\mathbf{\Gamma}$ is the matrix of eigenvectors of $\mathbf{\Sigma}$, $\mathbf{\Lambda}$ is the diagonal matrix of eigenvalues of $\mathbf{\Sigma}$, and
\begin{align} \label{ridge_d}
    d_{j, \lambda} = \mathbb{E}\left[\mathcal{Z}^2_j \mid \mathbf{X}, \sum^d_{\ell=1} \frac{\lambda_\ell}{\lambda_\ell + \lambda} \mathcal{Z}^2_\ell \leq a_\lambda \right]
\end{align}
where $\mathcal{Z}_1, \ldots, \mathcal{Z}_d \overset{iid}{\sim} \mathcal{N}(0, 1)$ and $d_{1, \lambda} \leq \cdots \leq d_{d, \lambda} \leq 1$ are variance reduction terms. Importantly, because the covariance matrix under Mahalanobis Rerandomization can be written as $\text{Cov}(\sqrt{n}\widehat{\bm{\tau}}_{\mathbf{X}} \mid \mathbf{X}, M \leq a ) = \mathbf{\Gamma} \mathbf{\Lambda}^{\nicefrac{1}{2}}\big( v_a \mathbf{I}_d \big) \mathbf{\Lambda}^{\nicefrac{1}{2}} \mathbf{\Gamma}^T$ we can see that these two methods differ by the variance reduction they apply to the eigenvectors of $\mathbf{\Sigma}$ (or equivalently, the principal components of $\mathbf{X}$). Ridge Rerandomization applies a greater variance reduction to the top eigenvectors of $\mathbf{\Sigma}$, but a lesser variance reduction to the bottom eigenvectors as weighted by $\nicefrac{\lambda_j}{\lambda_j + \lambda}$ where $\lambda_1 \geq \cdots \geq \lambda_d$. This differs from Mahalanobis Rerandomization, which reduces the variance of each eigenvector equally. As a consequence, the variance of $\widehat{\tau}$ tends to be less under Ridge Rerandomization than under Mahalanobis Rerandomization in high-dimensional settings. However, Ridge Rerandomization does not strictly dominate Mahalanobis Rerandomization because, ultimately, the precision of $\widehat{\tau}$ after rerandomization depends on the relationship between the principal components and the potential outcomes. Furthermore, Ridge Rerandomization requires selecting the tuning parameter $\lambda$, which can be computationally intensive.

\subsection{PCA Rerandomization} \label{pcarerand}

PCA Rerandomization, introduced by \cite{zhang2023pca}, also attempts to improve upon Mahalanobis Rerandomization in high-dimensional settings. Instead of adding a penalization term, PCA Rerandomization only considers the top $k$ principal components. Let $\mathbf{X} = \mathbf{UDV}^T$ be the singular value decomposition of $\mathbf{X}$, where $\mathbf{U} \in \mathbb{R}^{n \times d}$ and $\mathbf{V} \in \mathbb{R}^{d \times d}$ are the orthogonal matrices of left and right singular vectors, and $\mathbf{D}$ is a diagonal matrix of singular values. Then, $\mathbf{Z} = \mathbf{UD}$ is the matrix of principal components of $\mathbf{X}$ and $\mathbf{Z}_k = \mathbf{U}_k \mathbf{D}_k = (\mathbf{Z}^{(k)}_1, \ldots, \mathbf{Z}^{(k)}_n)^T \in \mathbb{R}^{n \times k}$ is the matrix of the top $k$ principal components of $\mathbf{X}$. Then, PCA Rerandomization randomizes until
\begin{align} \label{mk_def_pos_df}
    M_k = \big\{\sqrt{n}(\bar{\mathbf{Z}}^{(k)}_T - \bar{\mathbf{Z}}^{(k)}_C) \big\}^T \mathbf{\Sigma}^{-1}_Z \big\{\sqrt{n}(\bar{\mathbf{Z}}^{(k)}_T - \bar{\mathbf{Z}}^{(k)}_C) \big\}
\end{align}
is less than some prespecified $a_k > 0$, where $ \mathbf{\Sigma}_Z = \text{Cov}(\mathbf{Z}_k) / p(1-p)$ is the covariance matrix of $\sqrt{n}(\bar{\mathbf{Z}}_T^{(k)} - \bar{\mathbf{Z}}_C^{(k)})$. The authors show that
\begin{align*}
    \text{Cov}(\sqrt{n}\widehat{\bm{\tau}}_{\mathbf{X}} \mid \mathbf{X}, M_k \leq a_k ) &= C_n \mathbf{V} \begin{pmatrix}
        v_{a_k} \mathbf{D}^2_k & 0 \\
        0 & \mathbf{D}^2_{d - k}
    \end{pmatrix} \mathbf{V}^T = \mathbf{\Gamma} \mathbf{\Lambda}^{\nicefrac{1}{2}} \begin{pmatrix}
        v_{a_k} \mathbf{I}_k & 0 \\
        0 & \mathbf{I}_{d - k}
    \end{pmatrix} \mathbf{\Lambda}^{\nicefrac{1}{2}} \mathbf{\Gamma}^T
\end{align*}
where $v_{a_k} = \mathbb{P}(\chi^2_{k+2} \leq a_k) / \mathbb{P}(\chi^2_k \leq a_k)$ and $C_n= [(n-1)p(1-p)]^{-1}$. Thus, PCA Rerandomization is still an equal-percentage variance reduction method, but only for the top $k$ principal components. This introduces a trade-off: for $k < d$ it follows that $v_{a_k} < v_{a_d}$, so there is a greater variance reduction to the top $k$ principal components than if the full set of covariates were included during rerandomization. However, there is a loss of reduction to the bottom $d - k$ components. Again, depending on the relationship between principal components and potential outcomes, different rerandomization schemes will be preferable. 

Although the variance of $\widehat{\tau}$ tends to be less under PCA Rerandomization than under Mahalanobis Rerandomization in high-dimensional settings, it also requires choosing the number of principal components to include. The authors suggest either choosing a fixed amount of variation explained by the top $k$ principal components (e.g.\ 50\%, 70\%, or 90\%) or using the Kaiser rule, which takes the top components whose variation is larger than the average amount of variance explained. In \cref{pca_k_section} we introduce new decision rules for the number of principal components to keep based on comparing the benefits in variance reduction to the cost of dropping principal components.

\subsection{Other Related Methods}

There are several other rerandomization methods that balance covariates based on some quadratic form. For example, \cite{zhao2021no} analyze covariate balance using $p$-values from various statistical tests. Their ``joint acceptance rules'' based on regressing the treatment vector onto the covariates can be written as a quadratic form using the Wald test statistic $\widehat{\bm{\beta}}^T \widehat{\mathbf{V}}^{-1} \widehat{\bm{\beta}}$ where $\widehat{\bm{\beta}}$ is the vector of estimated coefficients and $\widehat{\mathbf{V}}$ is the Huber-White robust covariance matrix. In \cite{liu2023bayesian} the authors suggest a prior-induced distance given by $d_\pi = \widehat{\bm{\tau}}_{\mathbf{X}}^T (\bm{\mu}_\pi \bm{\mu}_\pi^T + \mathbf{\Sigma}_\pi) \widehat{\bm{\tau}}_{\mathbf{X}}$ for some prior distribution with mean $\bm{\mu}_{\pi}$ and covariance $\mathbf{\Sigma}_{\pi}$. Finally, \cite{lu2023design} make use of weighted Euclidean distances for rerandomization, which they note are widely used in practice \citep{liweighted, hayescluster}. These, too, can be expressed as quadratic forms.

\section{Quadratic Form Rerandomization} \label{QFR_section}

The rerandomization methods in \cref{review} all balance some kind of quadratic form, i.e.\ $Q_{\mathbf{A}}(\sqrt{n} \widehat{\bm{\tau}}_{\mathbf{X}}) := (\sqrt{n} \widehat{\bm{\tau}}_{\mathbf{X}})^T \mathbf{A} (\sqrt{n} \widehat{\bm{\tau}}_{\mathbf{X}})$ for some positive semi-definite $\mathbf{A} \in \mathbb{R}^{d \times d}$. Therefore, the difference between each of these methods depends on the choice of $\mathbf{A}$, which we summarize in \cref{a_table}. Derivations for each table entry are provided in the supplementary material.
\begin{table}[h]
\centering
\begin{tabular}{r|c|c}
\textit{Method} & $\mathbf{A}$ & Reference\\ \hline
Mahalanobis & $\mathbf{\Sigma}^{-1}$  & \cite{morgan2012rerandomization, morgan2015rerandomization} \\[0.05in]
Ridge & $(\mathbf{\Sigma} + \lambda \mathbf{I}_d)^{-1}$ & \cite{branson2021ridge}  \\[0.05in]
PCA & $\propto \mathbf{V} \begin{psmallmatrix}
        \mathbf{D}^{-2}_k & \mathbf{0} \\
        \mathbf{0} & \mathbf{0}
    \end{psmallmatrix} \mathbf{V}^T$ & \cite{zhang2023pca} \\[0.05in]
$p$-value (joint test) & $\propto \mathbf{\Sigma}^{-1} \widehat{\mathbf{V}}^{-1} \mathbf{\Sigma}^{-1}$ & \cite{zhao2021no} \\[0.05in]
Weighted Euclidean & $\text{diag}\{a_1, \ldots, a_d\}$ & \cite{lu2023design}
\end{tabular}
\caption{Choices of $\mathbf{A}$ for various rerandomization methods.}
\label{a_table}
\end{table}

In this section, we establish formal results for rerandomization using any quadratic form $Q_{\mathbf{A}}(\sqrt{n} \widehat{\bm{\tau}}_{\mathbf{X}})$ as a balance metric where $\mathbf{A} \in \mathbb{R}^{d \times d}$ is a fixed, nonzero, positive semi-definite matrix. This allows us to quickly derive properties of rerandomization schemes not previously considered and determine which choice of $\mathbf{A}$ is optimal for balancing covariates and minimizing the variance of $\widehat{\tau}$. We follow the traditional rerandomization procedure: generate potential randomizations until $Q_{\mathbf{A}}(\sqrt{n} \widehat{\bm{\tau}}_{\mathbf{X}}) \leq a$. We choose $a$ based on a given acceptance probability $\alpha$, such that $\mathbb{P}(Q_{\mathbf{A}}(\sqrt{n} \widehat{\bm{\tau}}_{\mathbf{X}}) \leq a \mid \mathbf{X}) = \alpha$. One can quickly determine $a$ by Monte Carlo simulation; since asymptotically $Q_{\mathbf{A}}(\sqrt{n} \widehat{\bm{\tau}}_{\mathbf{X}}) \mid \mathbf{X} \sim \sum^d_{j=1} \eta_j \mathcal{Z}^2_j$ where $\mathcal{Z}_1, \ldots, \mathcal{Z}_d \overset{iid}{\sim} \mathcal{N}(0, 1)$ and $\eta_1, \ldots, \eta_d$ are the eigenvalues of $ \mathbf{\Sigma}^{\nicefrac{1}{2}}\mathbf{A} \mathbf{\Sigma}^{\nicefrac{1}{2}}$ \citep{mathai1992quadratic}, we can simulate from this distribution many times and then define $a$ as an empirical quantile of these draws. Alternatively, we can approximate the distribution of $Q_{\mathbf{A}}(\sqrt{n} \widehat{\bm{\tau}}_{\mathbf{X}}) \mid \mathbf{X}$ using an extension of the Welch–Satterthwaite method \citep{stewart2007simple}, which we discuss in the supplementary material. 

\subsection{Theoretical Properties} \label{theoretical_properties_sec}

First, we note that under Quadratic Form Rerandomization all covariate mean differences, whether they are observed or unobserved, are centered at zero, and $\widehat{\tau}$ is unbiased due to $Q_{\mathbf{A}}(\sqrt{n} \widehat{\bm{\tau}}_{\mathbf{X}})$ being symmetric in the treatment assignment $\mathbf{W}$, i.e.\ the acceptance rule $\varphi(\, \cdot \, , \mathbf{W})$ satisfies $\varphi(\, \cdot \, , \mathbf{W}) = \varphi(\, \cdot \, , \mathbf{1}_n - \mathbf{W})$ \citep{morgan2012rerandomization}. Clearly, treatment symmetry is satisfied since $Q_{\mathbf{A}}(\sqrt{n} \widehat{\bm{\tau}}_{\mathbf{X}}) = Q_{\mathbf{A}}(- \sqrt{n} \widehat{\bm{\tau}}_{\mathbf{X}})$. Throughout the paper we leverage results from \cite{li2018asymptotic} that establish the conditions required for us to characterize asymptotic distributions under Quadratic Form Rerandomization; we describe these conditions in detail in the supplementary material. In short, two conditions are required: \cref{asymptotic_norm_condition} states that the finite-population variance and covariance of potential outcomes, individual treatment effects, and covariates are well-defined and have limiting values asymptotically. Meanwhile, \cref{general_balance_condition} ensures that the rerandomization criterion is symmetric in the treatment assignment and places a non-zero probability on some treatment assignments. The following result establishes the covariance of the covariate mean differences under Quadratic Form Rerandomization.

\newpage 
\begin{theorem}\label{Theorem1}
    Let $Q_{\mathbf{A}}(\sqrt{n} \widehat{\bm{\tau}}_{\mathbf{X}}) = (\sqrt{n} \widehat{\bm{\tau}}_{\mathbf{X}})^T \mathbf{A} (\sqrt{n} \widehat{\bm{\tau}}_{\mathbf{X}})$ where $\mathbf{A} \in \mathbf{S}^d_{+}$ and $a > 0$. Then, under \cref{asymptotic_norm_condition} and \cref{general_balance_condition}, as $n \to \infty$,
\begin{align*}
     \text{Cov}\left(\sqrt{n}\widehat{\bm{\tau}}_{\mathbf{X}} \mid \mathbf{X},  Q_\mathbf{A}(\sqrt{n} \widehat{\bm{\tau}}_{\mathbf{X}}) \leq a\right) &= \mathbf{\Sigma}^{\nicefrac{1}{2}} \mathbf{\Omega} \Big( \text{diag}\{(\nu_{j, \eta})_{1 \leq j \leq d} \} \Big) \mathbf{\Omega}^T \mathbf{\Sigma}^{\nicefrac{1}{2}}
\end{align*}
where $\mathbf{\Omega} \in \mathbb{R}^{d \times d}$ is the orthogonal matrix of eigenvectors of $\mathbf{\Sigma}^{\nicefrac{1}{2}} \mathbf{A} \mathbf{\Sigma}^{\nicefrac{1}{2}}$ and
\begin{align} \label{q_definition}
    \nu_{j,\eta} = \mathbb{E}\left[\mathcal{Z}^2_j \mid  \sum^d_{\ell =1} \eta_\ell \mathcal{Z}^2_\ell \leq a \right] \leq 1
\end{align}
where $\mathcal{Z}_1, \ldots,\mathcal{Z}_d \overset{iid}{\sim} \mathcal{N}(0, 1)$ and $\eta_1 \geq \cdots \geq \eta_d \geq 0$ are the eigenvalues of $\mathbf{\Sigma}^{\nicefrac{1}{2}} \mathbf{A} \mathbf{\Sigma}^{\nicefrac{1}{2}}$. 
\end{theorem}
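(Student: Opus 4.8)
The plan is to reduce the problem to the known Mahalanobis case via a change of basis, then apply the asymptotic representation theorem of \cite{li2018asymptotic}. First I would recall that, under \cref{asymptotic_norm_condition}, the unconstrained limiting distribution of $\sqrt{n}\widehat{\bm{\tau}}_{\mathbf{X}}$ given $\mathbf{X}$ is $\mathcal{N}(\mathbf{0}, \mathbf{\Sigma})$. Writing $\sqrt{n}\widehat{\bm{\tau}}_{\mathbf{X}} = \mathbf{\Sigma}^{\nicefrac{1}{2}} \mathbf{Z}$ with $\mathbf{Z} \sim \mathcal{N}(\mathbf{0}, \mathbf{I}_d)$ asymptotically, the balance criterion becomes
\begin{align*}
    Q_{\mathbf{A}}(\sqrt{n}\widehat{\bm{\tau}}_{\mathbf{X}}) = \mathbf{Z}^T \mathbf{\Sigma}^{\nicefrac{1}{2}} \mathbf{A} \mathbf{\Sigma}^{\nicefrac{1}{2}} \mathbf{Z}.
\end{align*}
Diagonalizing $\mathbf{\Sigma}^{\nicefrac{1}{2}} \mathbf{A} \mathbf{\Sigma}^{\nicefrac{1}{2}} = \mathbf{\Omega}\, \text{diag}\{(\eta_j)\}\, \mathbf{\Omega}^T$ and setting $\widetilde{\mathbf{Z}} = \mathbf{\Omega}^T \mathbf{Z}$ (still standard normal, by orthogonal invariance), the event $\{Q_{\mathbf{A}} \leq a\}$ is exactly $\{\sum_j \eta_j \widetilde{Z}_j^2 \leq a\}$, which is symmetric under sign flips of each coordinate of $\widetilde{\mathbf{Z}}$. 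Hence the conditional distribution of $\widetilde{\mathbf{Z}}$ has mean zero and diagonal covariance, with $j$-th diagonal entry $\nu_{j,\eta} = \mathbb{E}[\widetilde{Z}_j^2 \mid \sum_\ell \eta_\ell \widetilde{Z}_\ell^2 \leq a]$, matching \eqref{q_definition}.

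Next I would transport this back: $\text{Cov}(\mathbf{Z} \mid Q_{\mathbf{A}} \leq a) = \mathbf{\Omega}\, \text{diag}\{(\nu_{j,\eta})\}\, \mathbf{\Omega}^T$, so multiplying on both sides by $\mathbf{\Sigma}^{\nicefrac{1}{2}}$ gives the claimed formula. The remaining work is to justify that the finite-sample conditional covariance of $\sqrt{n}\widehat{\bm{\tau}}_{\mathbf{X}}$ actually converges to this Gaussian-computed quantity. This is where \cite{li2018asymptotic} does the heavy lifting: their framework shows that for any rerandomization acceptance criterion satisfying a symmetry condition (\cref{general_balance_condition}) with positive acceptance probability, the asymptotic joint distribution of $(\sqrt{n}\widehat{\bm{\tau}}_{\mathbf{X}}, \sqrt{n}(\widehat{\tau}-\tau))$ conditional on acceptance is the corresponding truncated Gaussian. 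I would verify that $Q_{\mathbf{A}}$ — being a quadratic form in $\widehat{\bm{\tau}}_{\mathbf{X}}$, hence symmetric in $\mathbf{W}$ after swapping treatment and control labels — meets their condition, and that $\mathbb{P}(Q_{\mathbf{A}} \leq a \mid \mathbf{X}) \to \alpha > 0$.

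The main obstacle is not the linear algebra, which is routine, but rather checking that \cref{general_balance_condition} applies to a general positive semi-definite $\mathbf{A}$ — in particular when $\mathbf{A}$ is singular (as in PCA Rerandomization), so that $\mathbf{\Sigma}^{\nicefrac{1}{2}}\mathbf{A}\mathbf{\Sigma}^{\nicefrac{1}{2}}$ has zero eigenvalues and the quadratic form degenerates on part of the space. In that case $Q_{\mathbf{A}}$ is a function only of a lower-dimensional projection of $\widehat{\bm{\tau}}_{\mathbf{X}}$, and one must confirm the conditional-covariance representation still holds with $\nu_{j,\eta} = 1$ for the directions with $\eta_j = 0$ (i.e.\ no variance reduction there), which follows because $\sum_\ell \eta_\ell \widetilde{Z}_\ell^2$ does not involve those coordinates and hence the conditioning leaves them standard normal. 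I would also note that the eigenvector ordering $\eta_1 \geq \cdots \geq \eta_d$ is merely a labeling convention and plays no role in the argument, though it will matter for the optimality results that follow. With these points addressed, the theorem follows by assembling the change-of-basis computation with the asymptotic-distribution result.
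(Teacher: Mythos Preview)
Your proposal is correct and follows essentially the same route as the paper: invoke the asymptotic truncated-Gaussian representation from \cite{li2018asymptotic}, whiten to a standard normal, diagonalize $\mathbf{\Sigma}^{\nicefrac{1}{2}}\mathbf{A}\mathbf{\Sigma}^{\nicefrac{1}{2}}$ via $\mathbf{\Omega}$, use coordinate-wise sign-flip symmetry to conclude the conditional covariance is $\text{diag}\{(\nu_{j,\eta})\}$, and transport back by $\mathbf{\Sigma}^{\nicefrac{1}{2}}$; the paper likewise treats the semi-definite case by noting $\eta_j=0$ leaves that coordinate unconstrained so $\nu_{j,\eta}=1$. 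The only small addition in the paper is an explicit argument (via Lemma~4.2 of \cite{branson2021ridge}) that $\nu_{j,\eta}\leq 1$ for all $j$, which you might want to state.
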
  

Because the reduction terms $\nu_{1,\eta}, \dots, \nu_{d, \eta}$ are not analytically tractable, throughout the paper we make use of an approximation for $\nu_{1, \eta}, \ldots, \nu_{d, \eta}$ derived in the supplementary material of \cite{lu2023design}. They show that for $\mathbf{\Sigma}^{\nicefrac{1}{2}} \mathbf{A} \mathbf{\Sigma}^{\nicefrac{1}{2}} \succ 0$,
\begin{align} \label{q_definition_main_text}
    \nu_{j, \eta} = \mathbb{E}\left[ \mathcal{Z}^2_j \mid \sum^d_{\ell =1} \eta_\ell \mathcal{Z}^2_\ell \leq a \right] = \frac{p_d}{\eta_j} \text{det}(\mathbf{\Sigma}^{\nicefrac{1}{2}}\mathbf{A}\mathbf{\Sigma}^{\nicefrac{1}{2}})^{1/d} \alpha^{2/d} + o(\alpha^{2/d})
\end{align}
where $p_d = \frac{2\pi}{d+2}\left(\frac{2\pi^{\nicefrac{d}{2}}}{d \Gamma(\nicefrac{d}{2})} \right)^{-\nicefrac{2}{d}}$ and $\eta_1, \ldots, \eta_d$ are the eigenvalues of $\mathbf{\Sigma}^{\nicefrac{1}{2}} \mathbf{A} \mathbf{\Sigma}^{\nicefrac{1}{2}}$. Note that the remainder term $o(\alpha^{2/d})$ is taken for a fixed $d$ as $\alpha \to 0$. Thus, the remainder will be small for $\alpha$ close to zero. The terms $p_d$ and $\alpha^{\nicefrac{2}{d}}$ act as scaling factors depending on the number of covariates and acceptance probability; $\nu_{j, \eta}$ is increasing in $d$ and $\alpha$. Intuitively, this result tells us that the choice of $\mathbf{A}$ changes the values of $\nu_{1, \eta}, \ldots, \nu_{d, \eta}$ primarily through the inverse of its eigenvalues, $\frac{1}{\eta_1}, \ldots, \frac{1}{\eta_d}$ and the determinant of $\mathbf{\Sigma}^{\nicefrac{1}{2}}\mathbf{A}\mathbf{\Sigma}^{\nicefrac{1}{2}}$. 

It is useful to reflect on the geometry implied by \cref{Theorem1} in order to better understand the mechanism by which Quadratic Form Rerandomization operates. By \cref{Theorem1}, Quadratic Form Rerandomization simply rotates $\mathbf{\Sigma}^{\nicefrac{1}{2}}$ alongside the eigenvectors of $\mathbf{\Sigma}^{\nicefrac{1}{2}} \mathbf{A} \mathbf{\Sigma}^{\nicefrac{1}{2}}$ and scales them by $\nu_{1, \eta}, \ldots, \nu_{d, \eta}$ (before rotating back). Geometrically, this represents the set of all ellipsoidal constraints on the covariance of the covariate mean differences. While there are other possible shape-based constraints that could be applied (ones that cannot be written as a quadratic form), the ellipsoid is a natural choice as it directly manipulates the eigenstructure of $\mathbf{\Sigma}$. In \cite{morgan2012rerandomization}, the authors suggest that a benefit of Mahalanobis Rerandomization is that the covariance structure stays the same after rerandomization. Soon, we discuss the costs and benefits of changing the shape of the covariance matrix after rerandomization.

\subsection{Choosing the Optimal Quadratic Form for Covariate Balance} \label{opt_no_info}

Because Quadratic Form Rerandomization depends on the choice of the matrix $\mathbf{A}$, it is natural to wonder what choice is most preferable given the covariates $\mathbf{X}$. Here, we determine which choice of $\mathbf{A}$ is optimal, in the sense of constraining the covariance matrix of the covariate mean differences defined in \cref{Theorem1}. Importantly, we explore optimality without assuming that there is any information about the potential outcomes available before the experiment is conducted. Others, such as \cite{lu2023design} and \cite{liu2023bayesian}, have derived rerandomization methods that are optimal for variance reduction (under certain restrictions), but implementing these methods requires information about how the covariates are related to potential outcomes. 

We define two ways of quantifying covariance reduction. First, we consider minimizing a measure of the size of the covariance matrix defined in \cref{Theorem1}. A natural norm to consider is the Frobenius norm, given by $||\mathbf{M}||^2_F = \sum^d_{j=1} \sigma^2_j(\mathbf{M})$ for some $\mathbf{M} \in \mathbb{R}^{d \times d}$ where $\sigma^2_j(\mathbf{M})$ are the eigenvalues of $\mathbf{M}^T \mathbf{M}$. This is a natural norm to consider because $\mathbf{M}^T \mathbf{M}$ is proportional to the sample covariance matrix of $\mathbf{M}$. Second, we consider maximizing the total variance reduction $\sum^d_{j=1}(1 - \nu_{j, \eta})$. We will find that $\mathbf{A} = \mathbf{I}_d$ (Euclidean Rerandomization) is optimal for the former, and $\mathbf{A} = \mathbf{\Sigma}^{-1}$ (Mahalanobis Rerandomization) is optimal for the latter. In \cref{var_reduction_section} we consider how reductions to the covariance matrix impact the variance reduction of the difference-in-means estimator. There, we find that $\mathbf{A} = \mathbf{I}_d$ is minimax optimal, in the sense that the difference-in-means estimator's precision is never too far from the optimal choice, regardless of the relationship between covariates and outcomes. The following theorem establishes that the choice $\mathbf{A} = \mathbf{I}_d$ is optimal for minimizing the Frobenius norm of the covariance matrix.

\begin{theorem}\label{opnorm}
    For all $\mathbf{A} \in \mathbf{S}^{d}_{+}$, under \cref{asymptotic_norm_condition} and \cref{general_balance_condition}, as $\alpha \to 0$ and $n \to \infty$,
    \begin{align*}
        ||\text{Cov}(\sqrt{n} \widehat{\bm{\tau}}_{\mathbf{X}} \mid \mathbf{X}, Q_{\mathbf{I}_d}(\sqrt{n} \widehat{\bm{\tau}}_{\mathbf{X}}) \leq a)||_{F} \leq ||\text{Cov}(\sqrt{n} \widehat{\bm{\tau}}_{\mathbf{X}} \mid \mathbf{X}, Q_{\mathbf{A}}(\sqrt{n} \widehat{\bm{\tau}}_{\mathbf{X}}) \leq a^{\prime})||_{F} + o(\alpha^{\nicefrac{2}{d}})
    \end{align*}
     where $Q_{\mathbf{I}_d}(\sqrt{n} \widehat{\bm{\tau}}_{\mathbf{X}}) = ||\sqrt{n} \widehat{\bm{\tau}}_{\mathbf{X}}||^2_2$ and $(a, a^\prime)$ are chosen to have a common acceptance probability.
\end{theorem}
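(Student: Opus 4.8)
The plan is to start from the covariance formula in \cref{Theorem1} and substitute the approximation \eqref{q_definition_main_text} for the reduction terms $\nu_{j,\eta}$, which converts the otherwise intractable expression into one that can be optimized in closed form. Write $c_{\mathbf{A}} := p_d\,\text{det}(\mathbf{\Sigma}^{\nicefrac{1}{2}}\mathbf{A}\mathbf{\Sigma}^{\nicefrac{1}{2}})^{1/d}\alpha^{2/d} = p_d(\text{det}\,\mathbf{\Sigma})^{1/d}(\text{det}\,\mathbf{A})^{1/d}\alpha^{2/d}$. Then \eqref{q_definition_main_text} gives $\nu_{j,\eta} = c_{\mathbf{A}}/\eta_j + o(\alpha^{2/d})$, so $\text{diag}\{(\nu_{j,\eta})_{1\le j\le d}\} = c_{\mathbf{A}}\,\text{diag}\{(1/\eta_j)_{1\le j\le d}\} + \mathbf{E}_0$ with $\|\mathbf{E}_0\|_F = o(\alpha^{2/d})$. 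Since $\mathbf{\Sigma}^{\nicefrac{1}{2}}\mathbf{A}\mathbf{\Sigma}^{\nicefrac{1}{2}} = \mathbf{\Omega}\,\text{diag}\{(\eta_j)_{1\le j\le d}\}\,\mathbf{\Omega}^T$, we have $\mathbf{\Omega}\,\text{diag}\{(1/\eta_j)_{1\le j\le d}\}\,\mathbf{\Omega}^T = (\mathbf{\Sigma}^{\nicefrac{1}{2}}\mathbf{A}\mathbf{\Sigma}^{\nicefrac{1}{2}})^{-1} = \mathbf{\Sigma}^{-\nicefrac{1}{2}}\mathbf{A}^{-1}\mathbf{\Sigma}^{-\nicefrac{1}{2}}$, so conjugating by $\mathbf{\Sigma}^{\nicefrac{1}{2}}$ collapses the covariance matrix of \cref{Theorem1} to the remarkably simple form $\text{Cov}(\sqrt{n}\widehat{\bm{\tau}}_{\mathbf{X}}\mid\mathbf{X},Q_{\mathbf{A}}(\sqrt{n}\widehat{\bm{\tau}}_{\mathbf{X}})\le a) = c_{\mathbf{A}}\,\mathbf{A}^{-1} + \mathbf{E}_{\mathbf{A}}$, where $\|\mathbf{E}_{\mathbf{A}}\|_F = o(\alpha^{2/d})$ because $\mathbf{\Sigma}$ is fixed and conjugation by a fixed symmetric matrix composed with an orthogonal matrix is Frobenius-bounded. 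Here I use that $\mathbf{\Sigma}$ is positive definite (the standing assumption making Mahalanobis Rerandomization well-posed) and, for now, that $\mathbf{A}$ is invertible; the singular case is handled separately below.

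With this reduction the problem becomes the scalar optimization $\min_{\mathbf{A}\succ 0} c_{\mathbf{A}}\|\mathbf{A}^{-1}\|_F$. Letting $\theta_1,\dots,\theta_d>0$ be the eigenvalues of $\mathbf{A}$, this equals $p_d(\text{det}\,\mathbf{\Sigma})^{1/d}\alpha^{2/d}\,g(\theta)$ with $g(\theta):=\big(\prod_{j}\theta_j\big)^{1/d}\big(\sum_{j}\theta_j^{-2}\big)^{1/2}$, so it suffices to minimize $g$ over positive vectors $\theta$. The function $g$ is invariant under the scaling $\theta\mapsto t\theta$ — consistent with the fact that rerandomization with $\mathbf{A}$ and with $t\mathbf{A}$ produce the identical acceptance region once $\alpha$ is fixed — so I normalize $\prod_j\theta_j=1$ and minimize $\sum_j\theta_j^{-2}$ under that constraint; substituting $v_j=\theta_j^{-2}$ (so $\prod_j v_j = 1$) and applying AM--GM gives $\sum_j v_j\ge d$ with equality iff all $v_j=1$, i.e. all $\theta_j$ equal, i.e. $\mathbf{A}\propto\mathbf{I}_d$. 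Hence $\min g = \sqrt{d}$, attained at $\mathbf{A}=\mathbf{I}_d$, so $c_{\mathbf{I}_d}\|\mathbf{I}_d^{-1}\|_F = p_d(\text{det}\,\mathbf{\Sigma})^{1/d}\alpha^{2/d}\sqrt{d} \le c_{\mathbf{A}}\|\mathbf{A}^{-1}\|_F$ for every $\mathbf{A}\succ 0$.

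The conclusion then follows by two applications of the triangle inequality for $\|\cdot\|_F$: $\|\text{Cov}(\sqrt{n}\widehat{\bm{\tau}}_{\mathbf{X}}\mid\mathbf{X},Q_{\mathbf{I}_d}\le a)\|_F \le c_{\mathbf{I}_d}\|\mathbf{I}_d^{-1}\|_F + \|\mathbf{E}_{\mathbf{I}_d}\|_F \le c_{\mathbf{A}}\|\mathbf{A}^{-1}\|_F + o(\alpha^{2/d}) \le \|\text{Cov}(\sqrt{n}\widehat{\bm{\tau}}_{\mathbf{X}}\mid\mathbf{X},Q_{\mathbf{A}}\le a)\|_F + \|\mathbf{E}_{\mathbf{A}}\|_F + o(\alpha^{2/d})$, which is the claimed bound. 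For singular $\mathbf{A}$ (e.g. PCA Rerandomization) the approximation \eqref{q_definition_main_text} degenerates, but then $\eta_j=0$ for at least one $j$, which forces the corresponding $\nu_{j,\eta}=\mathbb{E}[\mathcal{Z}_j^2]=1$; hence $\text{Cov}(\sqrt{n}\widehat{\bm{\tau}}_{\mathbf{X}}\mid\mathbf{X},Q_{\mathbf{A}}\le a)$ dominates (in the positive-semidefinite order) a fixed nonzero positive semidefinite matrix, so $\|\text{Cov}(\cdot\mid Q_{\mathbf{A}}\le a)\|_F$ is bounded below by a strictly positive constant as $\alpha\to 0$, whereas $\|\text{Cov}(\cdot\mid Q_{\mathbf{I}_d}\le a)\|_F = O(\alpha^{2/d})\to 0$; the inequality is then immediate for $\alpha$ sufficiently small.

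I expect the main obstacle to be not the optimization step — a one-line AM--GM argument once the right scalar quantity is isolated — but the bookkeeping that makes the simplification $\text{Cov} = c_{\mathbf{A}}\mathbf{A}^{-1} + \mathbf{E}_{\mathbf{A}}$ rigorous: one must verify that the $o(\alpha^{2/d})$ remainders of \eqref{q_definition_main_text}, after being conjugated by $\mathbf{\Sigma}^{\nicefrac{1}{2}}\mathbf{\Omega}$ and then passed through the square root defining $\|\cdot\|_F$, are still $o(\alpha^{2/d})$, and to delineate cleanly the invertible and degenerate cases for $\mathbf{A}$.
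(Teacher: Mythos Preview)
Your proof is correct and takes a genuinely different, more direct route than the paper. Your central observation---that after substituting \eqref{q_definition_main_text} the covariance of \cref{Theorem1} collapses to $c_{\mathbf{A}}\mathbf{A}^{-1}$ up to an $o(\alpha^{2/d})$ perturbation---is not made explicit in the paper and is the cleanest way to see the result. From there you minimize $(\det\mathbf{A})^{1/d}\|\mathbf{A}^{-1}\|_F$ over the eigenvalues $\theta_j$ of $\mathbf{A}$ by a one-line AM--GM, which directly targets the Frobenius norm in the statement. The paper instead passes to $\|\text{Cov}^{1/2}\|_F^2 = \text{tr}(\text{Cov})$, rewrites this as $\text{tr}(\mathbf{\Omega}^T\mathbf{\Gamma}\mathbf{\Lambda}\mathbf{\Gamma}^T\mathbf{\Omega}\bm{\eta}^{-1})$ after the same normalization, and applies AM--GM in the form $\text{tr}\ge d\cdot\det^{1/d}$ to that composite matrix; this establishes that $\mathbf{A}=\mathbf{I}_d$ minimizes the \emph{trace} of the covariance, and the step back to $\|\text{Cov}\|_F$ is left implicit (it happens to give the same minimizer here, but your argument avoids the detour entirely). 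Both proofs handle singular $\mathbf{A}$ by the same mechanism---some $\nu_{j,\eta}=1$ forces a nonvanishing lower bound while the $\mathbf{I}_d$ side is $O(\alpha^{2/d})$---and both exploit scale invariance to normalize the determinant. Your flagged concern about propagating the remainder through $\mathbf{\Sigma}^{1/2}\mathbf{\Omega}$ is the right place to be careful, but it is routine: $\mathbf{\Omega}$ is a Frobenius isometry and $\mathbf{\Sigma}^{1/2}$ is fixed, so the conjugated error is still $o(\alpha^{2/d})$ for each fixed $\mathbf{A}$.
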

Note that while the Frobenius norm grows with the number of covariates, the remainder term will remain small for sufficiently small $\alpha$. Indeed, in \cref{simulations} we validate via simulation that the inequality in \cref{opnorm} without the remainder term holds across different covariate dimensions $d$. To our knowledge, Euclidean Rerandomization has not been previously studied in the literature, so we will take a moment to discuss this method. First, note that our definition of $Q_{\mathbf{I}_d}(\sqrt{n} \widehat{\bm{\tau}}_{\mathbf{X}})$ requires the covariates to be standardized such that they have unit variance. While other methods such as Mahalanobis Rerandomization are affine invariant, this is not the case for Euclidean Rerandomization --- thus, if the covariates have not been standardized, then Euclidean Rerandomization's performance may suffer. Second, because $Q_{\mathbf{I}_d}(\sqrt{n} \widehat{\bm{\tau}}_{\mathbf{X}}) \mid \mathbf{X} \sim \sum^d_{j=1} \lambda_j \mathcal{Z}^2_j$ where $\lambda_1, \ldots, \lambda_d$ are the eigenvalues of $\mathbf{\Sigma}$, the variance reduction factors under Euclidean Rerandomization are $\nu_{j, \lambda} = \mathbb{E}[\mathcal{Z}^2_j \mid \sum^d_{\ell=1} \lambda_\ell \mathcal{Z}^2_\ell \leq a ]$. This manifests in a spherical acceptance region where all directions of variation have equal magnitude after Euclidean Rerandomization. As a result, there is a greater variance reduction applied to the top eigenvectors of $\mathbf{\Sigma}$, and less applied to the bottom eigenvectors. \cref{opnorm} implies that in order to control the ``size'' of the covariance matrix after rerandomization, the best thing we can do is ensure all directions of variation are of equal magnitude.

Whereas Euclidean Rerandomization minimizes the Frobenius norm of the covariance matrix after Quadratic Form Rerandomization, the following theorem establishes that $\mathbf{A} = \mathbf{\Sigma}^{-1}$ maximizes the total variance reduction applied to the eigenvectors of $\mathbf{\Sigma}^{\nicefrac{1}{2}} \mathbf{A} \mathbf{\Sigma}^{\nicefrac{1}{2}}$.
\begin{theorem} \label{totalvarredux}
    For all $\mathbf{A} \in \mathbf{S}^d_{+}$, under \cref{asymptotic_norm_condition} and \cref{general_balance_condition}, as $\alpha \to 0$ and $n \to \infty$,
    \begin{align*}
        \sum^d_{j=1} v_a \leq \sum^d_{j=1}  \nu_{j, \eta} + o(\alpha^{\nicefrac{2}{d}})
    \end{align*}
    where $v_a = \mathbb{P}(\chi^2_{d+2} \leq a) /\mathbb{P}(\chi^2_d \leq a)$ and $\nu_{j, \eta}$ defined in \cref{q_definition}, represents the variance reduction factors for any other $\mathbf{A} \in \mathbf{S}^{d}_{+}$, under a common acceptance probability $\alpha$.
\end{theorem}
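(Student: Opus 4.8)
The plan is to reduce the whole statement to a single scale‑invariant matrix inequality via the approximation \eqref{q_definition_main_text}. Fix a positive definite $\mathbf{A} \in \mathbf{S}^d_{+}$ and write $\mathbf{B} := \mathbf{\Sigma}^{\nicefrac{1}{2}} \mathbf{A} \mathbf{\Sigma}^{\nicefrac{1}{2}}$, so that $\eta_1 \geq \cdots \geq \eta_d > 0$ are the eigenvalues of $\mathbf{B}$, $\text{det}(\mathbf{B})^{1/d} = \big(\prod_{j=1}^d \eta_j\big)^{1/d}$, and $\sum_{j=1}^d \tfrac{1}{\eta_j} = \text{tr}(\mathbf{B}^{-1})$. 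Summing \eqref{q_definition_main_text} over the (finitely many, for fixed $d$) indices $j$ gives
\begin{align*}
    \sum_{j=1}^d \nu_{j,\eta} = p_d\, \alpha^{\nicefrac{2}{d}}\, \text{det}(\mathbf{B})^{1/d} \sum_{j=1}^d \frac{1}{\eta_j} + o(\alpha^{\nicefrac{2}{d}}).
\end{align*}
Specializing \eqref{q_definition_main_text} to $\mathbf{A} = \mathbf{\Sigma}^{-1}$ yields $\mathbf{B} = \mathbf{I}_d$, hence every $\eta_j = 1$ and $\nu_{j,\eta} = v_a$ (the equal‑percent reduction of Mahalanobis Rerandomization), so $\sum_{j=1}^d v_a = d\, p_d\, \alpha^{\nicefrac{2}{d}} + o(\alpha^{\nicefrac{2}{d}})$.

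The key step is then to show $\text{det}(\mathbf{B})^{1/d}\, \text{tr}(\mathbf{B}^{-1}) \geq d$ for every positive definite $\mathbf{B}$. Setting $\mu_j = 1/\eta_j > 0$, this is precisely the arithmetic–geometric mean inequality applied to the reciprocal eigenvalues of $\mathbf{B}$,
\begin{align*}
    \frac{1}{d}\sum_{j=1}^d \mu_j \;\geq\; \Big(\prod_{j=1}^d \mu_j\Big)^{1/d},
\end{align*}
with equality if and only if all $\eta_j$ coincide. Combining this with the two displays above gives $\sum_{j=1}^d \nu_{j,\eta} \geq d\, p_d\, \alpha^{\nicefrac{2}{d}} + o(\alpha^{\nicefrac{2}{d}}) = \sum_{j=1}^d v_a + o(\alpha^{\nicefrac{2}{d}})$, which is the claim. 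As a byproduct, equality to leading order forces $\mathbf{B} = c\,\mathbf{I}_d$, i.e. $\mathbf{A} = c\,\mathbf{\Sigma}^{-1}$ for some $c>0$; since rescaling $\mathbf{A}$ leaves the rerandomization scheme unchanged, this identifies Mahalanobis Rerandomization as the unique maximizer of the total variance reduction $\sum_{j=1}^d(1 - \nu_{j,\eta})$.

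For a positive semi‑definite but singular $\mathbf{A}$, \eqref{q_definition_main_text} does not apply directly, but the bound is immediate: since $\mathbf{\Sigma}^{\nicefrac{1}{2}}$ is invertible, $\mathbf{B}$ inherits the kernel of $\mathbf{A}$, and whenever $\eta_j = 0$ the variable $\mathcal{Z}_j$ is absent from the constraint in \eqref{q_definition}, so $\nu_{j,\eta} = \mathbb{E}[\mathcal{Z}_j^2] = 1$; thus $\sum_{j=1}^d \nu_{j,\eta} \geq 1 \geq d\, p_d\, \alpha^{\nicefrac{2}{d}} + o(\alpha^{\nicefrac{2}{d}})$ for $\alpha$ small (alternatively, pass to the limit in $\mathbf{A} + \epsilon\mathbf{I}_d$ as $\epsilon \downarrow 0$ using continuity of $\nu_{j,\eta}$ in $\mathbf{A}$). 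I expect no serious analytic obstacle here: the content is exactly AM–GM. The only care required is the bookkeeping of the $o(\alpha^{\nicefrac{2}{d}})$ remainders (that summing $d$ of them is harmless for fixed $d$, and that the comparison is legitimate because the constant $p_d$ in \eqref{q_definition_main_text} does not depend on $\mathbf{A}$) and the verification that $\mathbf{A} = \mathbf{\Sigma}^{-1}$ indeed gives $\mathbf{B} = \mathbf{I}_d$ and hence the equality case.
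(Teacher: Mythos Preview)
Your proposal is correct and takes essentially the same approach as the paper: both arguments apply the approximation \eqref{q_definition_main_text}, reduce the comparison to an AM--GM inequality on the eigenvalues of $\mathbf{\Sigma}^{\nicefrac{1}{2}}\mathbf{A}\mathbf{\Sigma}^{\nicefrac{1}{2}}$, and then handle the singular case separately. The only cosmetic difference is that the paper normalizes $\text{det}(\mathbf{\Sigma}^{\nicefrac{1}{2}}\mathbf{A}\mathbf{\Sigma}^{\nicefrac{1}{2}})=1$ before applying AM--GM to $\text{tr}(\bm{\eta}^{-1})$, whereas you keep the determinant explicit and prove the scale-invariant inequality $\text{det}(\mathbf{B})^{1/d}\,\text{tr}(\mathbf{B}^{-1})\geq d$ directly; these are equivalent.
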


Taken together, \cref{opnorm} and \cref{totalvarredux} illustrate that Euclidean Rerandomization and Mahalanobis Rerandomization represent two extremes of a spectrum of rerandomization methods. On one side of the spectrum, Mahalanobis Rerandomization leaves the shape of the covariance matrix unchanged; it only scales the eigenvectors by $v_a$ such that there is an equal variance reduction for all eigenvectors. Meanwhile, Euclidean Rerandomization scales each eigenvector by a different factor $\nu_{j, \eta}$ such that each of the eigenvectors after rerandomization has the same magnitude.

Ultimately, the method that best reduces the variance of $\widehat{\tau}$ depends on the relationship between covariates and outcomes. In the next section, we derive the $\mathbf{A}$ that minimizes the variance of $\widehat{\tau}$ for a given relationship between covariates and potential outcomes. We then establish that Euclidean Rerandomization is minimax optimal, in the sense that the variance of $\widehat{\tau}$ after Euclidean Rerandomization is never too far from the minimum variance of $\hat{\tau}$, regardless of the covariates' relationship to the potential outcomes. 

\subsection{Variance reduction of the difference-in-means estimator $\widehat{\tau}$} \label{var_reduction_section}

In the previous section, we established that certain choices of $\mathbf{A}$ are optimal in terms of covariate balance. However, our goal is not just to obtain covariate balance, but also to estimate the average treatment effect (ATE) $\tau$ precisely. In this section, we establish how the relationship between covariates and potential outcomes suggests a choice of $\mathbf{A}$ that minimizes the variance of the difference-in-means estimator. To make this connection, we quantify how the covariates are related to the potential outcomes. 
\begin{proposition} \label{variance_of_tauhat}
    For all $\mathbf{A} \in \mathbf{S}^d_{+}$, under \cref{asymptotic_norm_condition} and \cref{general_balance_condition}, and as $n \to \infty$,
    \begin{align*}
        \text{Var}\left(\sqrt{n}(\widehat{\tau} - \tau) \mid \mathbf{X}, Q_{\mathbf{A}}(\sqrt{n} \widehat{\bm{\tau}}_{\mathbf{X}}) \leq a \right) = \bm{\beta}^T \text{Cov}\left( \sqrt{n} \widehat{\bm{\tau}}_{\mathbf{X}} \mid \mathbf{X}, Q_{\mathbf{A}}(\sqrt{n} \widehat{\bm{\tau}}_{\mathbf{X}}) \leq a  \right) \bm{\beta} + \text{Var}( \varepsilon )
    \end{align*}
    where $\bm{\beta}$ is the coefficient vector from the linear projection of $\sqrt{n}(\widehat{\tau}-\tau)$ onto $\sqrt{n}\widehat{\bm{\tau}}_{\mathbf{X}}$ under complete randomization, $\varepsilon \sim \mathcal{N}(0, V_{\tau \tau}(1 - R^2))$, and $V_{\tau \tau} = \text{Var}(\sqrt{n}(\widehat{\tau} - \tau) \mid \mathbf{X})$.
\end{proposition}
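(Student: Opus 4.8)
The plan is to derive the identity as an instance of the asymptotic regression decomposition of \cite{li2018asymptotic}, specialized to the quadratic-form acceptance criterion. Under \cref{asymptotic_norm_condition}, complete randomization makes the pair $\big(\sqrt{n}(\widehat{\tau}-\tau),\ \sqrt{n}\widehat{\bm{\tau}}_{\mathbf{X}}\big)^T \mid \mathbf{X}$ asymptotically jointly Gaussian, with limiting covariance built from the finite-population covariances of the potential outcomes, the individual treatment effects, and the covariates. Write $V_{\tau\tau}$ for the limiting variance of $\sqrt{n}(\widehat{\tau}-\tau)\mid\mathbf{X}$, $\mathbf{\Sigma}$ for the limiting covariance of $\sqrt{n}\widehat{\bm{\tau}}_{\mathbf{X}}\mid\mathbf{X}$ (as in \cref{notation}), and $\mathbf{c}$ for the limiting cross-covariance vector. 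The first step is to record that the linear projection coefficient $\bm{\beta}=\mathbf{\Sigma}^{-1}\mathbf{c}$ is exactly the coefficient obtained by regressing the pooled potential outcomes on $\mathbf{X}$, and that $\bm{\beta}^T\mathbf{\Sigma}\bm{\beta}=R^2 V_{\tau\tau}$ with $R^2$ the squared multiple correlation; both follow by substituting the finite-population covariance identities and are a matter of bookkeeping.

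Next I would introduce the decomposition $\sqrt{n}(\widehat{\tau}-\tau)=\bm{\beta}^T\sqrt{n}\widehat{\bm{\tau}}_{\mathbf{X}}+\varepsilon_n$, where $\varepsilon_n:=\sqrt{n}(\widehat{\tau}-\tau)-\bm{\beta}^T\sqrt{n}\widehat{\bm{\tau}}_{\mathbf{X}}$. By construction the limiting covariance between $\varepsilon_n$ and $\sqrt{n}\widehat{\bm{\tau}}_{\mathbf{X}}$ vanishes, so joint asymptotic normality upgrades this to asymptotic independence of $\varepsilon_n$ and $\sqrt{n}\widehat{\bm{\tau}}_{\mathbf{X}}$ conditional on $\mathbf{X}$, with $\varepsilon_n\rightsquigarrow\varepsilon\sim\mathcal{N}\big(0,\ V_{\tau\tau}-\bm{\beta}^T\mathbf{\Sigma}\bm{\beta}\big)=\mathcal{N}\big(0,\ V_{\tau\tau}(1-R^2)\big)$, matching the $\varepsilon$ in the statement.

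The key observation is that the rerandomization event $\{Q_{\mathbf{A}}(\sqrt{n}\widehat{\bm{\tau}}_{\mathbf{X}})\leq a\}$ is a measurable function of $\sqrt{n}\widehat{\bm{\tau}}_{\mathbf{X}}$ alone; since $\mathbf{A}\in\mathbf{S}^d_{+}$ and $a>0$, \cref{general_balance_condition} applies and this event has limiting probability $\alpha\in(0,1]$. Conditioning on it therefore leaves the asymptotic law of $\varepsilon$ unchanged and still independent of $\sqrt{n}\widehat{\bm{\tau}}_{\mathbf{X}}$, so taking conditional variances through the decomposition gives
\begin{align*}
\mathbb{V}\!\left(\sqrt{n}(\widehat{\tau}-\tau)\mid\mathbf{X},\,Q_{\mathbf{A}}(\sqrt{n}\widehat{\bm{\tau}}_{\mathbf{X}})\leq a\right)=\bm{\beta}^T\,\text{Cov}\!\left(\sqrt{n}\widehat{\bm{\tau}}_{\mathbf{X}}\mid\mathbf{X},\,Q_{\mathbf{A}}(\sqrt{n}\widehat{\bm{\tau}}_{\mathbf{X}})\leq a\right)\bm{\beta}+\mathbb{V}(\varepsilon),
\end{align*}
with the cross term dropping out by conditional independence, $\mathbb{V}(\varepsilon)=V_{\tau\tau}(1-R^2)$, and the covariance factor being precisely the matrix computed in \cref{Theorem1}.

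I expect the main obstacle to be the rigorous passage to the limit: the independence of $\varepsilon_n$ and $\sqrt{n}\widehat{\bm{\tau}}_{\mathbf{X}}$ holds only asymptotically, so one must argue that conditioning on a fixed-probability event commutes with the limit and that conditional second moments converge — this is exactly where the machinery of \cite{li2018asymptotic} for general acceptance rules does the work, and care is needed to confirm that a quadratic form $Q_{\mathbf{A}}$ with possibly singular $\mathbf{A}$ still satisfies the hypotheses behind \cref{general_balance_condition}. The remaining steps — identifying $\bm{\beta}$ with the outcome-on-covariate projection and $\bm{\beta}^T\mathbf{\Sigma}\bm{\beta}$ with $R^2V_{\tau\tau}$ — are routine manipulations of the finite-population covariance formulas.
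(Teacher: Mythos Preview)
Your proposal is correct and follows essentially the same route as the paper: the paper invokes Corollary~A2 of \cite{li2018asymptotic} (recorded as \cref{tau_dist} in \cref{asymptotic_theory_sec}) to obtain the representation $\sqrt{n}(\widehat{\tau}-\tau)\mid\mathbf{X},\,Q_{\mathbf{A}}(\sqrt{n}\widehat{\bm{\tau}}_{\mathbf{X}})\leq a\sim\varepsilon+\bm{\beta}^T\bm{\xi}\mid\bm{\xi}^T\mathbf{A}\bm{\xi}\leq a$ with $\varepsilon$ independent of $\bm{\xi}\sim\mathcal{N}(\mathbf{0},\mathbf{\Sigma})$, and then takes variances using that independence. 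You are simply unpacking the regression decomposition that underlies that corollary rather than citing it as a black box; the argument and the identification of $\mathbb{V}(\varepsilon)=V_{\tau\tau}(1-R^2)$ are the same.
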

Here, $\varepsilon$ encompasses a residual term that is independent of and unaffected by rerandomization. The terms $\bm{\beta}$, $R^2$, and $V_{\tau \tau}$ have been used in other works establishing asymptotic properties of rerandomization (e.g.\ \cite{li2018asymptotic} and \cite{lu2023design}). In the supplementary material, we give a precise definition of these terms and a fuller discussion of asymptotic results for rerandomization. Note that \cref{variance_of_tauhat} does not assume that a linear model holds. Instead, because Quadratic Form Rerandomization only balances covariate means, the variance reduction is only impacted by the linear relationship between covariates and outcomes. The following theorem establishes that Quadratic Form Rerandomization always weakly reduces the variance of $\widehat{\tau}$ compared to complete randomization. When $\bm{\beta} \neq \mathbf{0}$, this reduction is strict except in degenerate cases where the rerandomization criterion does not constrain the covariate directions that are predictive of the outcome.
\begin{theorem} \label{diff_in_vars}
    Suppose $\mathbf{A} \in \mathbf{S}^{d}_{+}$. Then, under \cref{asymptotic_norm_condition} and \cref{general_balance_condition}, and as $n \to \infty$, the difference $\text{Var}(\sqrt{n}(\widehat{\tau} - \tau) \mid \mathbf{X}) - \text{Var}(\sqrt{n}(\widehat{\tau} - \tau) \mid \mathbf{X}, Q_{\mathbf{A}}(\sqrt{n} \widehat{\bm{\tau}}_{\mathbf{X}}) \leq a)$   is given by
    \begin{align} \label{diff_in_variance_theorem_eq}
         \sum^d_{j=1} \left((\mathbf{\Omega}^T   \mathbf{V})_j \mathbf{\Lambda}^{\nicefrac{1}{2}}  \bm{\beta}_Z \right)^2 (1 - \nu_{j, \eta}) \geq 0
    \end{align}
    where $\bm{\beta}_Z = \mathbf{V}^T \bm{\beta}$ and $\mathbf{V}$ is the matrix of singular vectors of $\mathbf{X}$, and $(\mathbf{\Omega}^T \mathbf{V})_j$ denotes the $j$th row of $\mathbf{\Omega}^T \mathbf{V}$. If $\mathbf{\Sigma}$ and $\mathbf{\Sigma}^{\nicefrac{1}{2}} \mathbf{A} \mathbf{\Sigma}^{\nicefrac{1}{2}}$ share an eigenbasis, then \cref{diff_in_variance_theorem_eq} simplifies to $\sum^d_{j=1} \beta^2_{Z,j} \lambda_j (1 - \nu_{j, \eta})$ where $\lambda_1, \ldots, \lambda_d$ are the eigenvalues of $\mathbf{\Sigma}$. 
\end{theorem}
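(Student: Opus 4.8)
The plan is to combine \cref{variance_of_tauhat} with \cref{Theorem1} to express the variance difference explicitly, then diagonalize. First I would write the variance under complete randomization. Since complete randomization corresponds to no constraint (equivalently $a = \infty$), \cref{Theorem1} gives $\text{Cov}(\sqrt{n}\widehat{\bm{\tau}}_{\mathbf{X}} \mid \mathbf{X}) = \mathbf{\Sigma}$, and \cref{variance_of_tauhat} gives $\mathbb{V}(\sqrt{n}(\widehat{\tau}-\tau)\mid\mathbf{X}) = \bm{\beta}^T \mathbf{\Sigma}\, \bm{\beta} + \mathbb{V}(\varepsilon)$. Subtracting the constrained version — also from \cref{variance_of_tauhat} — the residual term $\mathbb{V}(\varepsilon)$ cancels (it is unaffected by rerandomization), leaving
\begin{align*}
    \bm{\beta}^T\Big( \mathbf{\Sigma} - \mathbf{\Sigma}^{\nicefrac{1}{2}} \mathbf{\Omega}\, \text{diag}\{(\nu_{j,\eta})_j\}\, \mathbf{\Omega}^T \mathbf{\Sigma}^{\nicefrac{1}{2}} \Big)\bm{\beta}.
\end{align*}
Using that $\mathbf{\Omega}$ is orthogonal, I would insert $\mathbf{\Omega}\mathbf{\Omega}^T = \mathbf{I}_d$ into the first term to get $\bm{\beta}^T \mathbf{\Sigma}^{\nicefrac{1}{2}} \mathbf{\Omega}\, \text{diag}\{(1-\nu_{j,\eta})_j\}\, \mathbf{\Omega}^T \mathbf{\Sigma}^{\nicefrac{1}{2}}\bm{\beta}$, which is manifestly $\sum_{j=1}^d \big( \mathbf{\Omega}^T \mathbf{\Sigma}^{\nicefrac{1}{2}}\bm{\beta}\big)_j^2 (1-\nu_{j,\eta}) \geq 0$ since each $\nu_{j,\eta} \leq 1$ by \cref{Theorem1}.

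Next I would translate $\mathbf{\Sigma}^{\nicefrac{1}{2}}$ into the singular-vector coordinates of $\mathbf{X}$. Since the columns of $\mathbf{X}$ are standardized, $\mathbf{\Sigma} = \text{Cov}(\mathbf{X})/p(1-p) \propto \mathbf{X}^T\mathbf{X}/(n-1)p(1-p)$ up to centering, so $\mathbf{\Sigma}$ and $\mathbf{X}^T\mathbf{X}$ share the eigenbasis $\mathbf{V}$ (the right singular vectors), and $\mathbf{\Sigma} = \mathbf{V}\mathbf{\Lambda}\mathbf{V}^T$ with $\mathbf{\Lambda}$ the diagonal eigenvalue matrix; hence $\mathbf{\Sigma}^{\nicefrac{1}{2}} = \mathbf{V}\mathbf{\Lambda}^{\nicefrac{1}{2}}\mathbf{V}^T$. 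Substituting, $\mathbf{\Omega}^T\mathbf{\Sigma}^{\nicefrac{1}{2}}\bm{\beta} = \mathbf{\Omega}^T\mathbf{V}\mathbf{\Lambda}^{\nicefrac{1}{2}}\mathbf{V}^T\bm{\beta} = (\mathbf{\Omega}^T\mathbf{V})\mathbf{\Lambda}^{\nicefrac{1}{2}}\bm{\beta}_Z$ with $\bm{\beta}_Z = \mathbf{V}^T\bm{\beta}$, and the $j$th entry is $(\mathbf{\Omega}^T\mathbf{V})_j \mathbf{\Lambda}^{\nicefrac{1}{2}}\bm{\beta}_Z$, giving exactly \cref{diff_in_variance_theorem_eq}. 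For the special case where $\mathbf{\Sigma}$ and $\mathbf{\Sigma}^{\nicefrac{1}{2}}\mathbf{A}\mathbf{\Sigma}^{\nicefrac{1}{2}}$ share an eigenbasis, one can take $\mathbf{\Omega} = \mathbf{V}$, so $\mathbf{\Omega}^T\mathbf{V} = \mathbf{I}_d$ and the $j$th summand collapses to $(\mathbf{\Lambda}^{\nicefrac{1}{2}}\bm{\beta}_Z)_j^2 = \lambda_j \beta_{Z,j}^2$, yielding the stated simplification.

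The main obstacle — more a point requiring care than a genuine difficulty — is the identification of $\mathbf{\Sigma}^{\nicefrac{1}{2}} = \mathbf{V}\mathbf{\Lambda}^{\nicefrac{1}{2}}\mathbf{V}^T$ and the bookkeeping relating $\bm{\beta}$ to $\bm{\beta}_Z$ via the SVD of $\mathbf{X}$; I need to make sure the eigenvectors $\mathbf{\Gamma}$ of $\mathbf{\Sigma}$ (as named in \cref{notation}) coincide with the right singular vectors $\mathbf{V}$ of $\mathbf{X}$, which holds because $\mathbf{\Sigma}$ is a scalar multiple of the (centered) sample covariance of $\mathbf{X}$. The rest is orthogonal-matrix algebra and the sign is immediate from $\nu_{j,\eta}\leq 1$, so no inequality beyond \cref{Theorem1} is needed. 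I would also remark that the asymptotic equality in \cref{variance_of_tauhat} is what lets us assert the displayed identity holds as $n\to\infty$ rather than exactly in finite samples.
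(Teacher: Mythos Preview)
Your proposal is correct and follows essentially the same route as the paper: invoke \cref{variance_of_tauhat} (which the paper derives from the asymptotic representation $\sqrt{n}(\widehat{\tau}-\tau)\sim\varepsilon+\bm{\beta}^T\bm{\xi}$), subtract to cancel $\mathbb{V}(\varepsilon)$, insert $\mathbf{\Omega}\mathbf{\Omega}^T=\mathbf{I}_d$ and \cref{Theorem1}, then pass to SVD coordinates via $\mathbf{\Sigma}^{\nicefrac{1}{2}}=\mathbf{V}\mathbf{\Lambda}^{\nicefrac{1}{2}}\mathbf{V}^T$ and $\bm{\beta}_Z=\mathbf{V}^T\bm{\beta}$. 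The only cosmetic difference is in the shared-eigenbasis case: the paper sets $\mathbf{\Omega}=\mathbf{\Gamma}$ and then argues $\mathbf{\Gamma}^T\mathbf{V}$ is diagonal with $\pm1$ entries (harmless after squaring), whereas you choose $\mathbf{\Omega}=\mathbf{V}$ directly, which is equally valid and slightly cleaner.
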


This result clarifies that the variance reduction of $\widehat{\tau}$ depends on how each principal component is related to the outcomes, given by $\bm{\beta}_Z$, as well as the eigenvalues $\lambda_j$ and variance reduction factors $\nu_{j, \eta}$. Thus, rerandomization methods that place more importance on reducing the variance of the top $k$ principal components (e.g.\ Euclidean, Ridge, and PCA Rerandomization) will result in small $\text{Var}(\sqrt{n}(\widehat{\tau} - \tau) \mid \mathbf{X}, Q_{\mathbf{A}}(\sqrt{n} \widehat{\bm{\tau}}_{\mathbf{X}}) \leq a)$ only if these top components are indeed strongly related to the potential outcomes. On the other hand, if the bottom principal components are relatively important predictors of the potential outcomes, rerandomization methods that do not penalize these terms (such as Mahalanobis Rerandomization) perform better than other methods. We explore this phenomenon further via simulation in \cref{simulations}. 
\cref{diff_in_vars} quantifies how much Quadratic Form Rerandomization reduces the variance of $\widehat{\tau}$, compared to complete randomization. Meanwhile, the following theorem establishes the choice of $\mathbf{A}$ that maximizes this reduction.

\begin{theorem} \label{optimal_a_outcomes}
     For all $\mathbf{A} \in \mathbf{S}^{d}_{+}$ and $\bm{\beta} \neq \mathbf{0}$, under \cref{asymptotic_norm_condition} and \cref{general_balance_condition} and a common acceptance probability $\alpha$, as $n \to \infty$, 
    \begin{align*}
        \text{Var}(\sqrt{n}(\widehat{\tau} - \tau)\mid \mathbf{X}, Q_{\mathbf{A}^*}(\sqrt{n} \widehat{\bm{\tau}}_{\mathbf{X}}) \leq a) \leq \text{Var}(\sqrt{n}(\widehat{\tau} - \tau) \mid \mathbf{X}, Q_{\mathbf{A}}(\sqrt{n} \widehat{\bm{\tau}}_{\mathbf{X}}) \leq a^\prime)
    \end{align*}
    where $\mathbf{A}^* = \bm{\beta}\bm{\beta}^T$. Furthermore, the percent reduction in the variance of $\widehat{\tau}$ relative to complete randomization is $100(1 - \nu^*_{1, \beta})R^2$ where $\nu^*_{1, \beta} = \mathbb{E}[\chi^2_1 \mid  \left(\bm{\beta}^T \mathbf{\Sigma} \bm{\beta}\right) \! \chi^2_1 < a]$.
\end{theorem}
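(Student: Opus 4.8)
The plan is to use \cref{variance_of_tauhat} to reduce the claim to a finite-dimensional problem about the eigenstructure of $\mathbf{\Sigma}^{\nicefrac{1}{2}}\mathbf{A}\mathbf{\Sigma}^{\nicefrac{1}{2}}$, and then settle that problem with a single convexity argument. By \cref{variance_of_tauhat}, and since $\mathbb{V}(\varepsilon)$ does not depend on $\mathbf{A}$, minimizing $\mathbb{V}(\sqrt{n}(\widehat{\tau}-\tau)\mid\mathbf{X},Q_{\mathbf{A}}(\sqrt{n}\widehat{\bm{\tau}}_{\mathbf{X}})\leq a)$ over $\mathbf{A}\in\mathbf{S}^{d}_{+}$ is the same as minimizing $\bm{\beta}^{T}\,\mathrm{Cov}(\sqrt{n}\widehat{\bm{\tau}}_{\mathbf{X}}\mid\mathbf{X},Q_{\mathbf{A}}\leq a)\,\bm{\beta}$. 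Substituting the expression from \cref{Theorem1} and setting $\mathbf{u}=\mathbf{\Omega}^{T}\mathbf{\Sigma}^{\nicefrac{1}{2}}\bm{\beta}$ turns this into $\sum_{j=1}^{d}u_{j}^{2}\,\nu_{j,\eta}$, where $\sum_{j=1}^{d}u_{j}^{2}=\bm{\beta}^{T}\mathbf{\Sigma}\bm{\beta}$ is a constant independent of $\mathbf{A}$ because $\mathbf{\Omega}$ is orthogonal. For $\mathbf{A}^{*}=\bm{\beta}\bm{\beta}^{T}$ the matrix $\mathbf{\Sigma}^{\nicefrac{1}{2}}\mathbf{A}^{*}\mathbf{\Sigma}^{\nicefrac{1}{2}}=(\mathbf{\Sigma}^{\nicefrac{1}{2}}\bm{\beta})(\mathbf{\Sigma}^{\nicefrac{1}{2}}\bm{\beta})^{T}$ has rank one, with single nonzero eigenvalue $\eta_{1}=\bm{\beta}^{T}\mathbf{\Sigma}\bm{\beta}=\sum_{j=1}^{d}\beta_{Z,j}^{2}\lambda_{j}$ (in the notation of \cref{diff_in_vars}) and eigenvector $\mathbf{\Sigma}^{\nicefrac{1}{2}}\bm{\beta}/\|\mathbf{\Sigma}^{\nicefrac{1}{2}}\bm{\beta}\|$, so $u_{1}^{2}=\bm{\beta}^{T}\mathbf{\Sigma}\bm{\beta}$, $u_{j}=0$ for $j\geq2$, and the objective at $\mathbf{A}^{*}$ equals $(\bm{\beta}^{T}\mathbf{\Sigma}\bm{\beta})\,\nu^{*}_{1,\beta}$ with $\nu^{*}_{1,\beta}=\mathbb{E}[\mathcal{Z}_{1}^{2}\mid\eta_{1}\mathcal{Z}_{1}^{2}\leq a]$, as in the statement.

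It therefore suffices to prove the lemma that $\nu_{j,\eta}\geq\nu^{*}_{1,\beta}$ for every index $j$ and every $\mathbf{A}\in\mathbf{S}^{d}_{+}$; granting this, $\sum_{j}u_{j}^{2}\nu_{j,\eta}\geq\nu^{*}_{1,\beta}\sum_{j}u_{j}^{2}=(\bm{\beta}^{T}\mathbf{\Sigma}\bm{\beta})\nu^{*}_{1,\beta}$ with equality at $\mathbf{A}^{*}$, which is the desired inequality. To prove the lemma, fix $j$; if $\eta_{j}=0$ then $\nu_{j,\eta}=1\geq\nu^{*}_{1,\beta}$, so assume $\eta_{j}>0$ and rescale $\mathbf{A}$ so that $\eta_{j}=1$ (this changes neither the rerandomization procedure nor $\nu_{j,\eta}$). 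Writing $S=\sum_{k\neq j}\eta_{k}\mathcal{Z}_{k}^{2}\geq0$, which is independent of $\mathcal{Z}_{j}$, and using $\{\sum_{k}\eta_{k}\mathcal{Z}_{k}^{2}\leq a\}=\{\mathcal{Z}_{j}^{2}\leq a-S\}$, conditioning on $S$ gives
\begin{align*}
\nu_{j,\eta}=\frac{\mathbb{E}\big[\,g(a-S)\,\big]}{\mathbb{E}\big[\,F(a-S)\,\big]},
\end{align*}
where $F$ is the $\chi^{2}_{1}$ distribution function and $g(t)=\mathbb{E}[\chi^{2}_{1}\,\mathbbm{1}(\chi^{2}_{1}\leq t)]$ its truncated first moment (both equal to $0$ for $t<0$).

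The key observation is that $g=G\circ F$, where $G(p):=\int_{0}^{p}F^{-1}(q)\,dq$ satisfies $G'=F^{-1}$ and is therefore convex. Because the threshold is chosen to keep the acceptance probability equal to $\alpha$, we have $\mathbb{E}[F(a-S)]=\mathbb{P}(\sum_{k}\eta_{k}\mathcal{Z}_{k}^{2}\leq a)=\alpha=F(q_{\alpha})$, where $q_{\alpha}$ is the $\alpha$-quantile of $\chi^{2}_{1}$; note this identity holds without needing an explicit value of $a$. Jensen's inequality for the convex $G$ then yields $\mathbb{E}[g(a-S)]=\mathbb{E}[G(F(a-S))]\geq G(\mathbb{E}[F(a-S)])=G(F(q_{\alpha}))=g(q_{\alpha})$, and hence $\nu_{j,\eta}\geq g(q_{\alpha})/F(q_{\alpha})=\mathbb{E}[\chi^{2}_{1}\mid\chi^{2}_{1}\leq q_{\alpha}]=\nu^{*}_{1,\beta}$, proving the lemma. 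For the percent-reduction statement, take the conditioning event to be trivial (equivalently $\mathbf{A}=\mathbf{0}$, $\nu_{j,\eta}\equiv1$) in the representation above to obtain $V_{\tau\tau}=\bm{\beta}^{T}\mathbf{\Sigma}\bm{\beta}+\mathbb{V}(\varepsilon)$, so by \cref{variance_of_tauhat} $\bm{\beta}^{T}\mathbf{\Sigma}\bm{\beta}=V_{\tau\tau}R^{2}$; the variance at $\mathbf{A}^{*}$ is then $\nu^{*}_{1,\beta}V_{\tau\tau}R^{2}+V_{\tau\tau}(1-R^{2})=V_{\tau\tau}[1-(1-\nu^{*}_{1,\beta})R^{2}]$, i.e. a $100(1-\nu^{*}_{1,\beta})R^{2}$ percent reduction.

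I expect the main obstacle to be the lemma, specifically spotting the reparametrization $g=G\circ F$ that exhibits the truncated moment as a convex function of the probability level, together with the bookkeeping that the acceptance threshold $a$ varies with $\mathbf{A}$ precisely so that $\mathbb{E}[F(a-S)]=\alpha$ --- this is what makes Jensen land exactly on $\nu^{*}_{1,\beta}$ rather than on some $\mathbf{A}$-dependent quantity. Besides this, only routine care is needed for degenerate cases: $\bm{\beta}=\mathbf{0}$ (both sides equal), a singular $\mathbf{\Sigma}$ (restrict to its range, or simply note $\mathbf{\Sigma}^{\nicefrac{1}{2}}\bm{\beta}$ still determines $\eta_{1}$), and zero eigenvalues of $\mathbf{\Sigma}^{\nicefrac{1}{2}}\mathbf{A}\mathbf{\Sigma}^{\nicefrac{1}{2}}$ (handled above).
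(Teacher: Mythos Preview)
Your proof is correct and shares the paper's overall architecture: use \cref{variance_of_tauhat} and \cref{Theorem1} to write the objective as $\sum_j u_j^2\nu_{j,\eta}$ with $\sum_j u_j^2=\bm{\beta}^T\mathbf{\Sigma}\bm{\beta}$ fixed, compute the value at $\mathbf{A}^*=\bm{\beta}\bm{\beta}^T$, and then reduce everything to the scalar inequality $\nu_{j,\eta}\geq\nu^*_{1,\beta}$. The difference is in how that inequality is proved. The paper normalizes so that $\text{tr}(\mathbf{\Sigma}^{\nicefrac{1}{2}}\mathbf{A}\mathbf{\Sigma}^{\nicefrac{1}{2}})=1$ and then argues informally that ``the constraint imposed on each $\mathcal{Z}_j^2$ is less than that imposed on $\mathcal{Z}^2$'', without explicitly tracking that the threshold $a$ differs across choices of $\mathbf{A}$ at fixed $\alpha$. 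Your route normalizes so that $\eta_j=1$, conditions on $S=\sum_{k\neq j}\eta_k\mathcal{Z}_k^2$, recognizes the truncated moment as $g=G\circ F$ with $G(p)=\int_0^p F^{-1}$ convex, and then applies Jensen together with the identity $\mathbb{E}[F(a-S)]=\alpha$. This is genuinely more rigorous: it makes explicit exactly where the fixed-$\alpha$ constraint enters, and it closes the gap the paper leaves open about comparing thresholds. A still shorter variant you might note is the rearrangement (Neyman--Pearson) fact that among all events $E$ with $\mathbb{P}(E)=\alpha$, $\mathbb{E}[\mathcal{Z}_j^2\mid E]$ is minimized by $E=\{\mathcal{Z}_j^2\leq q_\alpha\}$; your Jensen argument is essentially a smooth proof of this special case. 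The percent-reduction computation matches the paper's.
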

Unlike our previous optimality results, \cref{optimal_a_outcomes} holds without any kind of remainder. Intuitively, we can see that $\mathbf{A}^*$ directly weights the covariate mean differences by how closely they are related to the potential outcomes. Moreover, $\nu^*_{1, \beta}$ establishes a direct connection between the variance reduction applied to $\widehat{\tau}$ and the eigenstructure of $\mathbf{\Sigma}$ via the weighted eigenvalues $\beta^2_{Z, 1} \lambda_1, \ldots, \beta^2_{Z, d}\lambda_d$. \cref{optimal_a_outcomes} is similar to Theorem 4 of \cite{lu2023design}; however, \cite{lu2023design} require that the covariates are orthogonal and $\mathbf{A}$ is diagonal, whereas we only require that covariates are standardized and that $\mathbf{A}$ is positive semi-definite. Our result lines up exactly with Theorems 1 and 2 of \cite{liu2023bayesian}, who derived the optimal rerandomization strategy when utilizing a Bayesian prior to inform which covariates to balance, among all rerandomization schemes with asymptotic acceptance probability $\alpha$. However, our \cref{optimal_a_outcomes} is derived by optimizing over the class of quadratic forms, whereas \cite{liu2023bayesian} optimize over a general class of rerandomization schemes and find that a quadratic form is optimal.

The optimal matrix $\mathbf{A}^*$ in \cref{optimal_a_outcomes} can only be computed if $\bm{\beta}$ is known before the experiment is conducted, and thus typically selecting $\mathbf{A}^*$ is infeasible. However, there are cases where outcome information may be available before the start of a rerandomized experiment, e.g.\ if baseline outcome measures are recorded, or if a smaller pilot experiment was conducted beforehand. In order to provide guidance on how to choose $\mathbf{A}$ when $\bm{\beta}$ is unknown, it is useful to quantify the difference between the variance reduction of Quadratic Form Rerandomization using the optimal $\mathbf{A}^*$ and using any other $\mathbf{A}$. The following theorem shows that among all positive semi-definite choices of $\mathbf{A}$ when $\bm{\beta}$ is unknown, Euclidean Rerandomization is minimax optimal.

\begin{theorem} \label{euc_minimax}
    Suppose $\mathbf{A} \in \mathbf{S}^{d}_{+}$, $\bm{\beta} \neq \mathbf{0}$, and that \cref{asymptotic_norm_condition} and \cref{general_balance_condition} hold. Then, define
    \begin{align*}
        \Delta_{\bm{\beta}}(\mathbf{A}) = \text{Var}(\sqrt{n}(\widehat{\tau} - \tau) \mid \mathbf{X}, Q_{\mathbf{A}^*}(\sqrt{n} \widehat{\bm{\tau}}_{\mathbf{X}}) \leq a) - \text{Var}(\sqrt{n}(\widehat{\tau} - \tau) \mid \mathbf{X}, Q_{\mathbf{A}}(\sqrt{n} \widehat{\bm{\tau}}_{\mathbf{X}}) \leq a^\prime)
    \end{align*}
    to be the difference in variances under the oracle versus any other quadratic form, assuming a common acceptance probability $\alpha$. Then, as $n \to \infty$ and $\alpha \to 0$, 
\begin{align*}
            \operatorname*{arg\,min}_{\mathbf{A}\in\mathbf{S}^d_{+}}
        \sup_{||\bm{\beta}||_2<c} \left| \Delta_{\bm{\beta}}(\mathbf{A}) \right| = \{\mathbf{A} : \mathbf{\Sigma}^{\nicefrac{1}{2}} \mathbf{A} \mathbf{\Sigma}^{\nicefrac{1}{2}} \propto \mathbf{\Sigma}\}.
\end{align*}
\end{theorem}
Note that $\mathbf{A} = \mathbf{I}_d$ satisfies the proportionality requirement for minimax optimality since the eigenvalues of $\mathbf{\Sigma}^{\nicefrac{1}{2}} \mathbf{I}_d \mathbf{\Sigma}^{\nicefrac{1}{2}}$ are $\lambda_1, \ldots, \lambda_d$. We assume $||\bm{\beta}||_2 \leq c$ because by \cref{diff_in_variance_theorem_eq} if $\bm{\beta}$ is unbounded the difference in variances may be infinite. Thus, \cref{euc_minimax} suggests that Euclidean Rerandomization is a ``safe'' choice, in the sense that the variance of the difference-in-means estimator is never too far from the optimal choice. We can provide some intuition behind this result by considering the variance reduction factors under Euclidean Rerandomization, $\nu_{j, \lambda}$, and under the optimal $\mathbf{A}^*$, which we define as $\nu^*_{1, \beta}$:
\begin{align*}
    \nu_{j, \lambda} = \mathbb{E}\left[\mathcal{Z}^2_j \mid \sum^d_{\ell=1} \lambda_\ell \mathcal{Z}^2_\ell \leq a \right] \:\text{and} \: \: \:   \nu^*_{1, \beta} = \mathbb{E}\left[\mathcal{Z}^2_1 \mid \left(\bm{\beta}^T \mathbf{\Sigma} \bm{\beta}\right) \! \mathcal{Z}^2_1 \leq a \right]
\end{align*}
where $\mathcal{Z}_1, \ldots, \mathcal{Z}_d \sim \mathcal{N}(0, 1)$. Informally, these results suggest that if $\beta^2_{Z, 1}, \ldots, \beta^2_{Z, d}$ are viewed as unknown weights applied to the eigenvalues of $\mathbf{\Sigma}$, Euclidean Rerandomization ensures that each weight is only off from the optimal choice by at most $\beta^2_{Z, j}$. Since $\mathbf{I}_d$ is positive-definite, these variance reduction factors are distributed across all $d$ principal components, as opposed to $\nu^*_{1, \beta}$, which is able to concentrate the variance reduction onto a single component.

\subsection{Quadratic Form Rerandomization on Principal Components} \label{pca_k_section}

All of the previous optimality results used all $d$ covariates in the balance metric $Q_{\mathbf{A}}(\sqrt{n}\widehat{\bm{\tau}}_{\mathbf{X}})$; however, \cite{zhang2023pca} showed that there can be benefits to dropping unimportant covariates during rerandomization. In this section, we consider implementing Quadratic Form Rerandomization on the top $k$ principal components, such that the remaining $d-k$ components receive zero weight. Thus, we extend the PCA Rerandomization methodology of \cite{zhang2023pca} to quadratic forms; we show that this is analogous to considering positive semi-definite $\mathbf{A}$ of rank $k$. Similar to the definitions given in \cref{pcarerand}, let $Q^k_\mathbf{A}(\sqrt{n}\widehat{\bm{\tau}}_{\mathbf{Z}}) := \sqrt{n}(\bar{\mathbf{Z}}^{(k)}_{T} - \bar{\mathbf{Z}}^{(k)}_{C})^T \mathbf{A} \sqrt{n}(\bar{\mathbf{Z}}^{(k)}_{T} - \bar{\mathbf{Z}}^{(k)}_{C})$ be the quadratic form associated with the top $k$ principal components, where $k \leq d$. The following theorem establishes the covariance matrix of the covariate mean differences under Quadratic Form Rerandomization on the principal components and under a positive semi-definite $\mathbf{A} \in \mathbb{R}^{k \times k}$.
\begin{theorem} \label{PCA_QFR}
    Suppose $\mathbf{A} \in \mathbf{S}^{k}_{+}$. Then, under \cref{asymptotic_norm_condition} and \cref{general_balance_condition}, as $n \to \infty$,
    \begin{align*}
        \text{Cov}(\sqrt{n} \widehat{\bm{\tau}}_{\mathbf{X}} \mid \mathbf{X}, Q^k_\mathbf{A}(\sqrt{n} \widehat{\bm{\tau}}_{\mathbf{Z}}) \leq a) &=  \mathbf{\Sigma}^{\nicefrac{1}{2}} \mathbf{V}  \begin{pmatrix}
        \mathbf{\Omega}_k \Big( \text{diag}(\nu_{j, \eta}(k))_{1 \leq j \leq k} \Big) \mathbf{\Omega}^T_k   & \mathbf{0} \\
        \mathbf{0} & \mathbf{I}_{d-k}
    \end{pmatrix} \mathbf{V}^T  \mathbf{\Sigma}^{\nicefrac{1}{2}}  
    \end{align*}
    where $\nu_{j, \eta}(k) = \mathbb{E}\left[\mathcal{Z}^2_j \mid \sum^k_{\ell =1} \eta_\ell \mathcal{Z}^2_\ell \leq a \right]$ with $\mathcal{Z}_1, \ldots, \mathcal{Z}_k \overset{iid}{\sim} \mathcal{N}(0, 1)$ such that $\eta_1, \ldots, \eta_k$ are the eigenvalues of $\mathbf{\Lambda}^{\nicefrac{1}{2}}_k \mathbf{A} \mathbf{\Lambda}^{\nicefrac{1}{2}}_k$, $\mathbf{\Omega}_k$ is the matrix of eigenvectors of $\mathbf{\Lambda}^{\nicefrac{1}{2}}_k \mathbf{A} \mathbf{\Lambda}^{\nicefrac{1}{2}}_k$, $\mathbf{\Lambda}_k$ is the diagonal matrix of the top $k$ eigenvalues of $\mathbf{\Sigma}$, and $\mathbf{V}$ is the matrix of singular vectors.
\end{theorem}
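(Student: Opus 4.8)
The plan is to reduce the statement to an application of \cref{Theorem1} carried out in the principal-component coordinate system, together with a bookkeeping argument that tracks how zero weight on the bottom $d-k$ components leaves those directions of variation untouched. First I would rewrite everything in terms of the principal components $\mathbf{Z} = \mathbf{XV}$. Since $\widehat{\bm{\tau}}_{\mathbf{Z}} = \mathbf{V}^T \widehat{\bm{\tau}}_{\mathbf{X}}$, we have $\sqrt n\,\widehat{\bm\tau}_{\mathbf X} = \mathbf{V}\,\sqrt n\,\widehat{\bm\tau}_{\mathbf Z}$, and the covariance of $\sqrt n\,\widehat{\bm\tau}_{\mathbf Z}$ under complete randomization is $\mathbf{V}^T \mathbf{\Sigma}\mathbf{V} = \mathbf{\Lambda}$, the diagonal matrix of eigenvalues of $\mathbf{\Sigma}$. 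Partition $\widehat{\bm\tau}_{\mathbf Z} = (\widehat{\bm\tau}_{\mathbf Z}^{(k)}, \widehat{\bm\tau}_{\mathbf Z}^{(d-k)})$ into the top $k$ and bottom $d-k$ components; under complete randomization these are asymptotically independent blocks with covariances $\mathbf{\Lambda}_k$ and $\mathbf{\Lambda}_{d-k}$. The balance criterion $Q^k_{\mathbf A}(\widehat{\bm\tau}_{\mathbf Z}) = (\sqrt n\,\widehat{\bm\tau}_{\mathbf Z}^{(k)})^T \mathbf{A}(\sqrt n\,\widehat{\bm\tau}_{\mathbf Z}^{(k)})$ depends only on the top block.

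Next I would invoke the asymptotic normality result of \cite{li2018asymptotic} (guaranteed by \cref{asymptotic_norm_condition} and \cref{general_balance_condition}): jointly, $\sqrt n\,\widehat{\bm\tau}_{\mathbf Z}$ is asymptotically Gaussian, and conditioning on $Q^k_{\mathbf A} \leq a$ only affects the top block through exactly the mechanism of \cref{Theorem1}, while the bottom block, being asymptotically independent of the conditioning event, retains its unconditional covariance $\mathbf{\Lambda}_{d-k}$. Applying \cref{Theorem1} with $\mathbf{\Sigma}$ replaced by $\mathbf{\Lambda}_k$ and with the given $\mathbf A \in \mathbf{S}^k_+$, the conditional covariance of the top block is $\mathbf{\Lambda}_k^{1/2}\mathbf{\Omega}_k\,\mathrm{diag}(\nu_{j,\eta}(k))_{1\le j\le k}\,\mathbf{\Omega}_k^T \mathbf{\Lambda}_k^{1/2}$, where $\mathbf{\Omega}_k$ and $\eta_1,\dots,\eta_k$ are the eigenvectors and eigenvalues of $\mathbf{\Lambda}_k^{1/2}\mathbf A \mathbf{\Lambda}_k^{1/2}$. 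Assembling the two blocks gives the block-diagonal middle matrix $\mathrm{blockdiag}\big(\mathbf{\Omega}_k\,\mathrm{diag}(\nu_{j,\eta}(k))\,\mathbf{\Omega}_k^T,\ \mathbf{I}_{d-k}\big)$ conjugated by $\mathbf{\Lambda}^{1/2}$; transforming back via $\sqrt n\,\widehat{\bm\tau}_{\mathbf X} = \mathbf V\,\sqrt n\,\widehat{\bm\tau}_{\mathbf Z}$ and using $\mathbf{\Sigma}^{1/2} = \mathbf V \mathbf{\Lambda}^{1/2}\mathbf V^T$ (so that $\mathbf V\mathbf{\Lambda}^{1/2} = \mathbf{\Sigma}^{1/2}\mathbf V$) yields precisely the claimed expression.

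The main obstacle is the independence/invariance claim for the bottom $d-k$ block: I need to argue carefully that conditioning on an event measurable with respect to the top $k$ principal component mean differences does not, asymptotically, alter the covariance of the bottom $d-k$ block. This is where the joint Gaussian limit is essential — in the Gaussian limit the top and bottom blocks are independent because $\mathbf{\Lambda}$ is diagonal, so conditioning on a function of the top block leaves the marginal law of the bottom block unchanged. The finite-sample cross-covariances are not exactly zero, so the rigorous version of this step routes through the asymptotic framework of \cite{li2018asymptotic} rather than an exact finite-sample decomposition; I would state this reduction explicitly and defer the detailed limiting argument (essentially a restatement of the proof of \cref{Theorem1} applied blockwise) to the supplementary material. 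A secondary point worth checking is that $\mathbf A$ being only positive semi-definite (possibly rank-deficient within the top-$k$ block) causes no trouble: some $\eta_j$ may be zero, in which case $\nu_{j,\eta}(k) = 1$ for those directions, consistent with the convention in \cref{Theorem1}, and the formula still holds verbatim.
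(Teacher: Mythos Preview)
Your proposal is correct and follows essentially the same route as the paper's proof: transform to principal-component coordinates via $\mathbf V$, apply \cref{Theorem1} to the top-$k$ block (whose asymptotic covariance is $\mathbf{\Lambda}_k$), use the block-diagonal structure of $\mathbf{\Lambda}$ in the Gaussian limit to argue that the bottom $d-k$ block is independent of the conditioning event and hence retains covariance $\mathbf{\Lambda}_{d-k}$, then transform back using $\mathbf V\mathbf{\Lambda}^{1/2} = \mathbf{\Sigma}^{1/2}\mathbf V$. The paper additionally writes out the cross-covariance calculation via iterated expectation, but this is just the explicit form of the independence argument you already give.
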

If we compare choosing a positive semi-definite $\mathbf{A} \in \mathbb{R}^{d \times d}$ using \cref{Theorem1} with rerandomization using the top $k$ principal components, as defined in \cref{PCA_QFR}, we can see that the covariance structures are analogous; both produce variance reductions $\nu_{1, \eta}(k), \ldots, \nu_{d, \eta}(k)$, where here $k$ is either the rank of $\mathbf{A}$ or the number of principal components included, but where $\nu_{j, \eta}(k) = 1$ for $j > k$. This is because when $\mathbf{A}$ has rank $k < d$, then $\eta_j = 0$ for all $j > k$, and thus $\nu_{j, \eta}(k) = 1$. Importantly, the variance reduction factors $\nu_{1,\eta}(k), \ldots, \nu_{k, \eta}(k)$ from \cref{PCA_QFR} are smaller than the reduction factors $\nu_{1,\eta}, \dots, \nu_{k, \eta}$ in \cref{Theorem1} under a full $d$-dimensional quadratic form; this introduces a trade-off --- practitioners gain more reduction to the top $k$ eigenvectors of $\mathbf{\Sigma}$ at the cost of applying no reduction to the bottom $d - k$ eigenvectors. 

A natural question raised by \cref{PCA_QFR} is how one should choose the number of principal components $k$. Although \cite{zhang2023pca} discuss decision rules for $k$, their rules are only based on the percentage of variance explained. Instead, we consider selecting principal components via a cost-benefit calculation: we drop a principal component only if the benefit in added variance reduction outweighs the cost of omitting this principal component. The following proposition establishes necessary and sufficient conditions under which the reduced quadratic form has a smaller variance than the full quadratic form.
\begin{proposition} \label{drop_pcs}
    Suppose $\mathbf{A}_k \in \mathbf{S}^k_+$, $\mathbf{A}_d \in \mathbf{S}^d_+$, $\mathbf{\Sigma}$ and $\mathbf{\Sigma}^{\nicefrac{1}{2}} \mathbf{A}_d \mathbf{\Sigma}^{\nicefrac{1}{2}}$ share an eigenbasis, and $\mathbf{\Lambda}_k$ and $\mathbf{\Lambda}_k^{\nicefrac{1}{2}} \mathbf{A}_k \mathbf{\Lambda}_k^{\nicefrac{1}{2}}$ share an eigenbasis. Then, under \cref{asymptotic_norm_condition} and \cref{general_balance_condition} and a common acceptance probability $\alpha$, it follows that as $n \to \infty$,
    \begin{align*}
        \text{Var}(\sqrt{n}(\widehat{\tau} - \tau) \mid \mathbf{X}, Q^d_\mathbf{A}(\sqrt{n} \widehat{\bm{\tau}}_{\mathbf{X}}) \leq a_d) \geq \text{Var}(\sqrt{n}(\widehat{\tau} - \tau) \mid \mathbf{X}, Q^k_\mathbf{A}(\sqrt{n} \widehat{\bm{\tau}}_{\mathbf{Z}}) \leq a_k)
    \end{align*}
    if and only if
    \begin{align*}
    \sum^k_{j=1} \beta^2_{Z, j} \lambda_j (\nu_{j, \eta}(d) - \nu_{j, \eta}(k)) \geq \sum^d_{j=k+1} \beta^2_{Z, j} \lambda_j (1 - \nu_{j, \eta}(d)).
\end{align*}
\end{proposition}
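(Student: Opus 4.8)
The plan is to express both variances explicitly using \cref{diff_in_vars} (and its companion \cref{variance_of_tauhat}), then subtract and simplify. First, I would apply \cref{diff_in_vars} under the stated eigenbasis-sharing assumptions. Since $\mathbf{\Sigma}$ and $\mathbf{\Sigma}^{\nicefrac{1}{2}}\mathbf{A}_d\mathbf{\Sigma}^{\nicefrac{1}{2}}$ share an eigenbasis, the variance-reduction difference relative to complete randomization for the full $d$-dimensional quadratic form simplifies to $\sum_{j=1}^d \beta^2_{Z,j}\lambda_j(1-\nu_{j,\eta}(d))$. For the rank-$k$ (principal-component) quadratic form, I would invoke \cref{PCA_QFR}: its covariance matrix has the same structure but with reduction factors $\nu_{j,\eta}(k)$ for $j\le k$ and $1$ for $j>k$; combined with the assumption that $\mathbf{\Lambda}$ and $\mathbf{\Lambda}_k^{\nicefrac{1}{2}}\mathbf{A}_k\mathbf{\Lambda}_k^{\nicefrac{1}{2}}$ share an eigenbasis (so that the rotation $\mathbf{\Omega}_k$ is trivial in the relevant basis), the analogous reduction relative to complete randomization is $\sum_{j=1}^k \beta^2_{Z,j}\lambda_j(1-\nu_{j,\eta}(k))$.

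Next, I would write
\begin{align*}
\mathbb{V}(\sqrt{n}(\widehat{\tau}-\tau)\mid\mathbf{X}, Q^d_\mathbf{A}\le a) - \mathbb{V}(\sqrt{n}(\widehat{\tau}-\tau)\mid\mathbf{X}, Q^k_\mathbf{A}\le a)
\end{align*}
as the difference of the two reductions (with opposite sign, since a larger reduction means a smaller conditional variance, and the common baseline $\mathbb{V}(\sqrt{n}(\widehat{\tau}-\tau)\mid\mathbf{X})$ and the common residual term $\mathbb{V}(\varepsilon)$ cancel). This yields
\begin{align*}
\sum_{j=1}^k \beta^2_{Z,j}\lambda_j(1-\nu_{j,\eta}(k)) - \sum_{j=1}^d \beta^2_{Z,j}\lambda_j(1-\nu_{j,\eta}(d)).
\end{align*}
Splitting the second sum at $j=k$ and rearranging, the claim $\mathbb{V}^d \ge \mathbb{V}^k$ becomes exactly
\begin{align*}
\sum_{j=1}^k \beta^2_{Z,j}\lambda_j(\nu_{j,\eta}(d)-\nu_{j,\eta}(k)) \ge \sum_{j=k+1}^d \beta^2_{Z,j}\lambda_j(1-\nu_{j,\eta}(d)),
\end{align*}
which is the stated condition; since every step is an equivalence, the ``if and only if'' follows.

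The main obstacle is justifying that under the two eigenbasis-sharing hypotheses the rotations drop out cleanly, so that both conditional variances reduce to diagonal sums over the same index set of principal components with the weights $\beta^2_{Z,j}\lambda_j$. Concretely, I need the row-vectors $(\mathbf{\Omega}^T\mathbf{V})_j$ appearing in \cref{diff_in_variance_theorem_eq} to be standard basis vectors (up to signs) in both the full and the rank-$k$ settings, which is where the assumptions ``$\mathbf{\Sigma}$ and $\mathbf{\Sigma}^{\nicefrac{1}{2}}\mathbf{A}_d\mathbf{\Sigma}^{\nicefrac{1}{2}}$ share an eigenbasis'' and ``$\mathbf{\Lambda}$ and $\mathbf{\Lambda}_k^{\nicefrac{1}{2}}\mathbf{A}_k\mathbf{\Lambda}_k^{\nicefrac{1}{2}}$ share an eigenbasis'' are used. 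I would also need to confirm that the ordering conventions for the eigenvalues $\eta_j(d)$ versus $\eta_j(k)$ are compatible with pairing them against the same $\lambda_j$, $\beta_{Z,j}$; after that alignment, the algebra is routine bookkeeping and the equivalence is immediate.
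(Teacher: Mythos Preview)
Your proposal is correct and follows essentially the same route as the paper: both arguments write the two conditional variances using \cref{diff_in_vars} and \cref{PCA_QFR}, invoke the eigenbasis-sharing assumptions so that the rotation matrices $(\mathbf{\Omega}^T\mathbf{V})_j$ collapse to standard basis vectors, and then subtract and rearrange to obtain the stated inequality. The only cosmetic difference is that you phrase things in terms of the variance reductions relative to complete randomization while the paper works directly with the conditional variances, but since the common baseline $V_{\tau\tau}(1-R^2)$ cancels either way, the algebra is identical.
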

As we show in the supplementary material, the eigenbasis assumption can be removed, in which case the inequality will be notationally similar to \cref{diff_in_variance_theorem_eq} in \cref{diff_in_vars}. We invoke the eigenbasis assumption to simplify notation for our decision rules here. \cref{drop_pcs} shows that it is preferable to drop the bottom $d - k$ principal components if $\beta^2_{Z, j} \lambda_j$ is very small for $j \in \{k+1, \ldots, d\}$. When $\bm{\beta}$ is known, we propose the following decision rule,
\begin{align} \label{sb_rule_beta_known}
   k = \underset{j}{\text{arg max}} \left\{  \sum^j_{i=1}\beta^2_{Z, i} \lambda_i (\nu_{i, \eta}(d) - \nu_{i, \eta}(j)) - \sum^d_{i=j+1} \beta^2_{Z, i}\lambda_i (1 - \nu_{i, \eta}(d))\right\}
\end{align}
Meanwhile, for unknown $\bm{\beta}$, an analogous decision rule is:
\begin{align} \label{sb_rule}
   k = \underset{j}{\text{arg max}} \left\{  \sum^j_{i=1} \lambda_i (\nu_{i, \eta}(d) - \nu_{i, \eta}(j)) - \sum^d_{i=j+1} \lambda_i (1 - \nu_{i, \eta}(d))\right\}.
\end{align} 
Henceforth, we refer to \cref{sb_rule} as the Weighted Eigenvalue rule. \cref{sb_rule_beta_known} suggests that, when $\bm{\beta}$ is known, it can be preferable in terms of precision to implement Quadratic Form Rerandomization using only the first $k$ principal components, rather than the full set of components $d$. However, $\bm{\beta}$ is typically not known at the start of the experiment. In this case, \cref{sb_rule} selects $k$ such that the variance reduction benefits applied to the top principal components maximally outweigh the costs incurred to the bottom principal components. This may result in less precision for the difference-in-means estimator if the bottom principal components are strongly related to the potential outcomes. If information about $\bm{\beta}$ is known before the start of an experiment, e.g.\ from a pilot study \citep{liu2023bayesian}, then \cref{sb_rule_beta_known} could be applied using estimates of $\bm{\beta}$. 

\cref{sb_rule} requires searching all $j = 1, \ldots, d$ possible values, which can be computationally intensive. As discussed earlier, an alternative way to upweight the top principal components is via Euclidean Rerandomization, i.e.\ setting $\mathbf{A} = \mathbf{I}_d$. In fact, one could place further preference on the top principal components by setting $\mathbf{A} = \mathbf{\Sigma}$, because in this case the eigenvalues of $\mathbf{\Sigma}^{\nicefrac{1}{2}} \mathbf{A} \mathbf{\Sigma}^{\nicefrac{1}{2}}$ will be $\eta_j = \lambda_j^2$ for $j = 1, \ldots, d$. This method is fast to implement, and in \cref{simulations} we find that it performs particularly well in high-dimensional settings, although it may not be as robust as Euclidean Rerandomization. This method can be generalized to any power of $\lambda_j$ by choosing $\mathbf{A} = \mathbf{\Sigma}^{c}$ for some $c \geq 0$. In later sections, we refer to this choice of $\mathbf{A}$ as Squared Euclidean Rerandomization when $c=1$.

\subsection{Conducting Inference Post-Rerandomization}

The previous sections discuss the design stage of a randomized experiment with Quadratic Form Rerandomization. Here, we discuss the analysis stage --- i.e., how one conducts inference for the ATE after the experiment has been conducted. There are two primary ways inference can be conducted. The first is with a randomization-based confidence interval \citep[Chapter 2]{rosenbaum2002}, which allows us to obtain valid, finite-population inference at the cost of computation time. Randomization-based confidence intervals are often constructed by inverting a sharp null hypothesis that specifies the relationship between potential outcomes, such as $H_0^{\tau}: Y_i(1) = Y_i(0) + \tau$ for all $i = 1, \ldots, n$. Under $H_0^{\tau}$, we can compute the potential outcomes for any hypothetical randomization and set of observed outcomes. We can conduct a randomization test for $H_0^{\tau}$ as follows:
\begin{enumerate}
    \item[1.] Generate $M$ hypothetical randomizations, where each $\mathbf{w}^{(m)}$ for $m = 1,\ldots, M$ is chosen by rerandomizing $\mathbf{W}$ until $Q_{\mathbf{A}}(\sqrt{n} \widehat{\bm{\tau}}_{\mathbf{X}}) \leq a$ for some prespecified $a > 0$.
    \item[2.] Compute a test-statistic $t(\mathbf{w}, \mathbf{X}, \mathbf{y})$ across all $M$ rerandomizations assuming that $H^\tau_0$ is true where $\mathbf{y}$ is the vector of observed outcomes.
    \item[3.] Compute the randomization-based $p$-value, defined as $ p = \frac{1 + \sum^M_{m=1} \mathbbm{1}(| t(\mathbf{w}^{(m)}, \mathbf{X}, \mathbf{y})| > |t^{\text{obs}}|)}{M + 1}$,
\end{enumerate}
where $t^{\text{obs}}$ is the observed test-statistic \citep{phipson2010permutation}. Then, the randomization-based confidence interval is given by the set of $\tau$ such that we fail to reject $H_0^{\tau}$.

The second way of conducting inference is by utilizing the asymptotic distribution of $\widehat{\tau}$, as shown in the supplementary material. In short, one can find that under \cref{asymptotic_norm_condition} and \cref{general_balance_condition}, as $n \to \infty$, 
$\sqrt{n}(\widehat{\tau} - \tau) \mid \mathbf{X}, Q_{\mathbf{A}}(\sqrt{n} \widehat{\bm{\tau}}_{\mathbf{X}}) \leq a \sim \varepsilon + \bm{\beta}^T \bm{\xi} \mid \mathbf{X},  Q_{\mathbf{A}}(\bm{\xi}) \leq a$ \citep{li2018asymptotic}. Here, $\bm{\xi} \sim \mathcal{N}(\mathbf{0}, \mathbf{\Sigma})$ and $\varepsilon \sim \mathcal{N}(0, V_{\tau \tau}(1 - R^2))$ where $V_{\tau \tau} = \frac{1}{p} S^2_{Y(1)} + \frac{1}{1-p} S^2_{Y(0)} - S^2_\tau$ such that $S^2_{Y(\cdot)}$ are the finite population variances of the outcomes in treatment and control and $S^2_\tau$ is the finite-population variance of the individual treatment effects. In practice, $S^2_{Y(1)}$, $S^2_{Y(0)}$, and $R^2$ are estimated with sample analogs. However, note that $S^2_{\tau} = \frac{1}{n-1} \sum^n_{i=1}(\tau_i - \tau )^2$ is not estimable, so estimated intervals will be conservative. From here, we can easily obtain confidence intervals for $\tau$ by computing the quantiles of $\sqrt{n} \left(\widehat{\tau} - \tau \right) \mid \mathbf{X}, Q_{\mathbf{A}}(\sqrt{n} \widehat{\bm{\tau}}_{\mathbf{X}}) \leq a$, which can be done via Monte Carlo simulation.

\section{Simulations and Application} \label{simulations}

To better understand how different rerandomization methods compare in practice in terms of covariate balance and treatment effect estimation, we consider a simulation study and real data application that compares several Quadratic Form Rerandomization methods: Euclidean, Mahalanobis, PCA (using the Mahalanobis distance on the top $k$ principal components with $k$ chosen by both the Kaiser rule and the weighted eigenvalue rule in \cref{sb_rule}), Squared Euclidean Rerandomization, and the oracle $\mathbf{A}^* = \bm{\beta} \bm{\beta}^T$.

\subsection{Design of the simulated data} \label{sim_data_def}

As discussed in \cref{QFR_section}, the eigenstructure of $\mathbf{\Sigma}$ plays an important role in determining the performance of rerandomization methods. To better understand how the eigenstructure impacts the performance of rerandomization methods, we simulate the eigenvalues of $\mathbf{\Sigma}$ from a symmetric Dirichlet distribution with concentration parameter $\gamma$, scaled by the number of covariates $d$: i.e., $\lambda_1, \dots, \lambda_d \sim d \cdot \text{Dirichlet}(\gamma)$. As a result, $\sum_{j=1}^d \lambda_j = d = \text{tr}(\mathbf{\Sigma})$. As $\gamma \to 0$, the eigenvalues become sparser and as $\gamma \to \infty$, the eigenvalues are more uniform; we consider concentration values $\gamma \in \{0.05, 0.5, 1\}$. In what follows, we refer to $\gamma = 0.05$ as ``sparse'' concentration, $\gamma = 0.5$ as ``average'', and $\gamma = 1$ as ``uniform.''

We can use this assumed distribution on the eigenvalues to gain intuition about the variance reduction factors $\nu_{1, \eta}, \ldots, \nu_{d, \eta}$ for a given $\mathbf{A}$. When the eigenvalues of $\mathbf{\Sigma}^{\nicefrac{1}{2}} \mathbf{A} \mathbf{\Sigma}^{\nicefrac{1}{2}}$ are not all equal, there is a differential percentage variance reduction applied to the covariates. As an illustration when $d = 50$, \cref{varreduxfactors} shows that methods such as Euclidean and Squared Euclidean Rerandomization provide a much greater variance reduction to the top eigenvectors, but much less to the bottom eigenvectors. Meanwhile, PCA Rerandomization applies two levels of variance reduction --- one level to the top $k$ principal components, and no reduction to the bottom $d - k$ principal components. Mahalanobis Rerandomization applies a constant level of variance reduction across all eigenvalues. 

\begin{figure}[h]
    \centering
    \begin{tikzpicture}
    \begin{axis}[
    width=0.55\textwidth,
    height=0.45\textwidth,
    xlabel={Ordered Eigenvalue},
    ylabel={Variance reduction factor},
    ymin=0,
    legend pos=outer north east,
    grid=both,
    grid style={line width=0.1pt, draw=gray!20, opacity=0.5},
    legend cell align=left,
    xtick pos=bottom,
    ytick pos=left,
  ]

  \definecolor{ggred}{HTML}{f8766d}
  \definecolor{ggblue}{HTML}{00b0f6}
  \definecolor{ggpink}{HTML}{e76bf3}
  \definecolor{ggyellow}{HTML}{a3a500}
  \definecolor{gggreen}{HTML}{00bf7c}

  \pgfplotstableread[col sep=comma]{files/averaged_qs.csv}\datatable

  \addplot+[color=ggred, mark options={fill=ggred, mark size=1.25}] table[x="idx", y="Mahalanobis", col sep=comma] {\datatable};
  \addlegendentry{Mahalanobis}

  \addplot+[color=ggblue, mark options={fill=ggblue, mark size=1.25}] table[x="idx", y="Euclidean", col sep=comma] {\datatable};
  \addlegendentry{Euclidean}
  
  \addplot+[color=ggpink, mark options={fill=ggpink, mark size=1.25}] table[x="idx", y="Kaiser", col sep=comma] {\datatable};
  \addlegendentry{Kaiser}
  
  \addplot+[color=ggyellow, mark options={fill=ggyellow, mark size=1.25}] table[x="idx", y="Weighted", col sep=comma] {\datatable};
  \addlegendentry{Wtd. Eigenvalue}
  
  \addplot+[color=gggreen, mark options={fill=gggreen, mark size=1.25}] table[x="idx", y="Squared_Euclidean", col sep=comma] {\datatable};
  \addlegendentry{Sq. Euclidean}

  \end{axis}
\end{tikzpicture}
    \caption{$\nu_{j, \eta}$ averaged over 10,000 draws of $\lambda_1, \ldots, \lambda_d \sim d \cdot \text{Dirichlet}(\gamma = 1)$.}
    \label{varreduxfactors}
\end{figure} \vspace{-0.2in}

After simulating the eigenvalues $\lambda_1, \ldots, \lambda_d$ for a given $d$, we simulate our covariates as $\mathbf{X} \sim \mathcal{N}(\mathbf{0}, \mathbf{\Sigma})$ where $\mathbf{\Sigma} = \mathbf{P}( \text{diag}\{\lambda_1, \ldots, \lambda_d\}) \mathbf{P}^T$ and $\mathbf{P} \in \mathbb{R}^{d \times d}$ is a random orthogonal matrix. After simulating $\mathbf{X}$, we center and standardize each column such that each has mean zero and variance one. For each simulation, we generate $n = 500$ observations where 250 units are assigned to treatment and 250 units are assigned to control. We define the potential outcomes $Y_i(0)$ and $Y_i(1)$ for unit $i$ as $Y_i(0) \sim \mathcal{N}(\mathbf{Z}\bm{\beta}_Z, 1)$ and $Y_i(1) = Y_i(0) + \tau$ where $\tau = 1$, $\bm{\beta}_Z \in \mathbb{R}^d$, and $\mathbf{Z} = \mathbf{X} \mathbf{V}$ is the matrix of principal components of $\mathbf{X}$. For each choice of $\gamma \in \{0.05, 0.5, 1\}$ and $d \in \{5, 25, 50, 75, 100, 150, 200, 250\}$ we simulate 10,000 data sets and implement Mahalanobis, Euclidean, PCA, Squared Euclidean, and the oracle rerandomization for each. We set $\alpha = 0.01$ and  consider three choices of $\bm{\beta}_Z$:
\begin{enumerate}
    \item[(i)] $\bm{\beta}_Z = (\lambda_d, \ldots, \lambda_1 )^T$, which preferentially weights the bottom principal components.
    \item[(ii)] $\bm{\beta}_Z = (1, \ldots, 1)^T$, which equally prioritizes each principal component.
    \item[(iii)] $\bm{\beta}_Z = (\lambda^{\frac{1}{2}}_1, \ldots, \lambda^{\frac{1}{2}}_d)^T$, which preferentially weights the top principal components.
\end{enumerate}

\subsection{Results: Variance Reduction of $\widehat{\tau}$} \label{variance_reduction_sim_section}
\cref{cov_alpha_1} displays the ratio between the standard deviation of $\widehat{\tau}$ ($y$-axis) under rerandomization vs complete randomization for $(i)$, $(ii)$, and $(iii)$ (rows) across 10,000 simulated data sets, for different numbers of covariates ($x$-axis) and concentration $\gamma$ (columns).

\begin{figure}[ht]
    \centering
    \begin{tikzpicture}
      \definecolor{ggred}{HTML}{f8766d}
  \definecolor{ggblue}{HTML}{00b0f6}
  \definecolor{ggpurple}{HTML}{e76bf3} 
  \definecolor{ggyellow}{HTML}{a3a500}
  \definecolor{ggteal}{HTML}{00bfc4}
  \definecolor{gggreen}{HTML}{00bf7c}
  \definecolor{ggdarkgreen}{HTML}{55b503}
  \definecolor{ggpink}{HTML}{fb71db} 
  \begin{groupplot}[
    group style={
      group size=3 by 3,
      horizontal sep=1.05cm,
      vertical sep=1.25cm
    },
    width=0.35\textwidth,
    height=0.20\textheight,
    grid=both,
    grid style={line width=0.1pt, draw=gray!20, opacity=0.5},
    legend style={
                at={(-1, -0.4)},
                anchor=north,
                draw=none,
                legend columns=2,
                transpose legend=true,
                /tikz/every even column/.append style={column sep=0.5cm}, 
            },
    legend cell align=left
  ]

  \pgfplotstableread[col sep=comma]{files/uniform_005.csv}\frobdt
  \pgfplotstableread[col sep=comma]{files/uniform_05.csv}\frobdtt
  \pgfplotstableread[col sep=comma]{files/uniform_1.csv}\frobdttt

  \pgfplotstableread[col sep=comma]{files/sqrt_005.csv}\opdt
  \pgfplotstableread[col sep=comma]{files/sqrt_05.csv}\opdtt
  \pgfplotstableread[col sep=comma]{files/sqrt_1.csv}\opdttt

  \pgfplotstableread[col sep=comma]{files/inv_005.csv}\mtvrdt
  \pgfplotstableread[col sep=comma]{files/inv_05.csv}\mtvrdtt
  \pgfplotstableread[col sep=comma]{files/inv_1.csv}\mtvrdttt

\nextgroupplot[title={$(i)$ $\gamma=0.05$}, 
  title style={yshift=-0.15cm},
  xtick pos=bottom,
  ytick pos=left,
  ymax=1.075,
  ymin=-0.075,
  xlabel=\empty,
  ylabel=\empty]
\addplot+[color=ggred, mark options={fill=ggred, mark size=1.5}] table[x="d", y="Mahalanobis", col sep=comma] {\mtvrdt};
\addplot+[color=ggblue, mark options={fill=ggblue, mark size=1.5}] table[x="d", y="Euclidean", col sep=comma] {\mtvrdt};
\addplot+[color=ggpurple, mark options={fill=ggpurple, mark size=1.5}] table[x="d", y="Kaiser", col sep=comma] {\mtvrdt};
\addplot+[color=ggyellow, mark options={fill=ggyellow, mark size=1.5}] table[x="d", y="Weighted", col sep=comma] {\mtvrdt};
\addplot+[color=gggreen, mark options={fill=gggreen, mark size=1.5}] table[x="d", y="Sq_Euclidean", col sep=comma] {\mtvrdt};
\addplot+[color=ggpink, mark options={fill=ggpink, mark size=1.5}] table[x="d", y="Optimal", col sep=comma] {\mtvrdt};

\nextgroupplot[title={$(i)$ $\gamma=0.5$}, 
  title style={yshift=-0.15cm},
  xtick pos=bottom,
  ytick pos=left,
  ymax=1.075,
  ymin=-0.075,
  xlabel=\empty,
  ylabel=\empty]
\addplot+[color=ggred, mark options={fill=ggred, mark size=1.5}] table[x="d", y="Mahalanobis", col sep=comma] {\mtvrdtt};
\addplot+[color=ggblue, mark options={fill=ggblue, mark size=1.5}] table[x="d", y="Euclidean", col sep=comma] {\mtvrdtt};
\addplot+[color=ggpurple, mark options={fill=ggpurple, mark size=1.5}] table[x="d", y="Kaiser", col sep=comma] {\mtvrdtt};
\addplot+[color=ggyellow, mark options={fill=ggyellow, mark size=1.5}] table[x="d", y="Weighted", col sep=comma] {\mtvrdtt};
\addplot+[color=gggreen, mark options={fill=gggreen, mark size=1.5}] table[x="d", y="Sq_Euclidean", col sep=comma] {\mtvrdtt};
\addplot+[color=ggpink, mark options={fill=ggpink, mark size=1.5}] table[x="d", y="Optimal", col sep=comma] {\mtvrdtt};

\nextgroupplot[title={$(i)$ $\gamma=1$}, 
  title style={yshift=-0.15cm},
  xtick pos=bottom,
  ytick pos=left,
  ymax=1.075,
  ymin=-0.075,
  xlabel=\empty]
\addplot+[color=ggred, mark options={fill=ggred, mark size=1.5}] table[x="d", y="Mahalanobis", col sep=comma] {\mtvrdttt};
\addplot+[color=ggblue, mark options={fill=ggblue, mark size=1.5}] table[x="d", y="Euclidean", col sep=comma] {\mtvrdttt};
\addplot+[color=ggpurple, mark options={fill=ggpurple, mark size=1.5}] table[x="d", y="Kaiser", col sep=comma] {\mtvrdttt};
\addplot+[color=ggyellow, mark options={fill=ggyellow, mark size=1.5}] table[x="d", y="Weighted", col sep=comma] {\mtvrdttt};
\addplot+[color=gggreen, mark options={fill=gggreen, mark size=1.5}] table[x="d", y="Sq_Euclidean", col sep=comma] {\mtvrdttt};
\addplot+[color=ggpink, mark options={fill=ggpink, mark size=1.5}] table[x="d", y="Optimal", col sep=comma] {\mtvrdttt};

\nextgroupplot[title={$(ii)$ $\gamma=0.05$}, 
  title style={yshift=-0.15cm},
  ylabel style={yshift=0.1cm},
  xtick pos=bottom,
  ytick pos=left,
  ymax=1.075,
  ymin=-0.075,
  xlabel=\empty,
  ylabel=\empty]
\addplot+[color=ggred, mark options={fill=ggred, mark size=1.5}] table[x="d", y="Mahalanobis", col sep=comma] {\frobdt};
\addplot+[color=ggblue, mark options={fill=ggblue, mark size=1.5}] table[x="d", y="Euclidean", col sep=comma] {\frobdt};
\addplot+[color=ggpurple, mark options={fill=ggpurple, mark size=1.5}] table[x="d", y="Kaiser", col sep=comma] {\frobdt};
\addplot+[color=ggyellow, mark options={fill=ggyellow, mark size=1.5}] table[x="d", y="Weighted", col sep=comma] {\frobdt};
\addplot+[color=gggreen, mark options={fill=gggreen, mark size=1.5}] table[x="d", y="Sq_Euclidean", col sep=comma] {\frobdt};
\addplot+[color=ggpink, mark options={fill=ggpink, mark size=1.5}] table[x="d", y="Optimal", col sep=comma] {\frobdt};

\nextgroupplot[title={$(ii)$ $\gamma=0.5$},
  title style={yshift=-0.15cm},
  xtick pos=bottom,
  ytick pos=left,
  ymax=1.075,
  ymin=-0.075,
  xlabel=\empty,
  ylabel=\empty]
\addplot+[color=ggred, mark options={fill=ggred, mark size=1.5}] table[x="d", y="Mahalanobis", col sep=comma] {\frobdtt};
\addplot+[color=ggblue, mark options={fill=ggblue, mark size=1.5}] table[x="d", y="Euclidean", col sep=comma] {\frobdtt};
\addplot+[color=ggpurple, mark options={fill=ggpurple, mark size=1.5}] table[x="d", y="Kaiser", col sep=comma] {\frobdtt};
\addplot+[color=ggyellow, mark options={fill=ggyellow, mark size=1.5}] table[x="d", y="Weighted", col sep=comma] {\frobdtt};
\addplot+[color=gggreen, mark options={fill=gggreen, mark size=1.5}] table[x="d", y="Sq_Euclidean", col sep=comma] {\frobdtt};
\addplot+[color=ggpink, mark options={fill=ggpink, mark size=1.5}] table[x="d", y="Optimal", col sep=comma] {\frobdtt};

\nextgroupplot[title={$(ii)$ $\gamma=1$}, 
  title style={yshift=-0.15cm},
  xtick pos=bottom,
  ytick pos=left,
  ymax=1.075,
  ymin=-0.075,
  xlabel=\empty,
  ylabel=\empty]
\addplot+[color=ggred, mark options={fill=ggred, mark size=1.5}] table[x="d", y="Mahalanobis", col sep=comma] {\frobdttt};
\addplot+[color=ggblue, mark options={fill=ggblue, mark size=1.5}] table[x="d", y="Euclidean", col sep=comma] {\frobdttt};
\addplot+[color=ggpurple, mark options={fill=ggpurple, mark size=1.5}] table[x="d", y="Kaiser", col sep=comma] {\frobdttt};
\addplot+[color=ggyellow, mark options={fill=ggyellow, mark size=1.5}] table[x="d", y="Weighted", col sep=comma] {\frobdttt};
\addplot+[color=gggreen, mark options={fill=gggreen, mark size=1.5}] table[x="d", y="Sq_Euclidean", col sep=comma] {\frobdttt};
\addplot+[color=ggpink, mark options={fill=ggpink, mark size=1.5}] table[x="d", y="Optimal", col sep=comma] {\frobdttt};

\nextgroupplot[title={$(iii)$ $\gamma=0.05$}, 
  title style={yshift=-0.15cm},
  xtick pos=bottom,
  ytick pos=left,
  ymax=1.075,
  ymin=-0.075,
  xlabel=\empty,
  ylabel=\empty]
\addplot+[color=ggred, mark options={fill=ggred, mark size=1.5}] table[x="d", y="Mahalanobis", col sep=comma] {\opdt};
\addplot+[color=ggblue, mark options={fill=ggblue, mark size=1.5}] table[x="d", y="Euclidean", col sep=comma] {\opdt};
\addplot+[color=ggpurple, mark options={fill=ggpurple, mark size=1.5}] table[x="d", y="Kaiser", col sep=comma] {\opdt};
\addplot+[color=ggyellow, mark options={fill=ggyellow, mark size=1.5}] table[x="d", y="Weighted", col sep=comma] {\opdt};
\addplot+[color=gggreen, mark options={fill=gggreen, mark size=1.5}] table[x="d", y="Sq_Euclidean", col sep=comma] {\opdt};
\addplot+[color=ggpink, mark options={fill=ggpink, mark size=1.5}] table[x="d", y="Optimal", col sep=comma] {\opdt};

\nextgroupplot[title={$(iii)$ $\gamma=0.5$},
  title style={yshift=-0.15cm},
  xtick pos=bottom,
  ytick pos=left,
  ymax=1.075,
  ymin=-0.075,
  xlabel=\empty,
  ylabel=\empty]
\addplot+[color=ggred, mark options={fill=ggred, mark size=1.5}] table[x="d", y="Mahalanobis", col sep=comma] {\opdtt};
\addplot+[color=ggblue, mark options={fill=ggblue, mark size=1.5}] table[x="d", y="Euclidean", col sep=comma] {\opdtt};
\addplot+[color=ggpurple, mark options={fill=ggpurple, mark size=1.5}] table[x="d", y="Kaiser", col sep=comma] {\opdtt};
\addplot+[color=ggyellow, mark options={fill=ggyellow, mark size=1.5}] table[x="d", y="Weighted", col sep=comma] {\opdtt};
\addplot+[color=gggreen, mark options={fill=gggreen, mark size=1.5}] table[x="d", y="Sq_Euclidean", col sep=comma] {\opdtt};
\addplot+[color=ggpink, mark options={fill=ggpink, mark size=1.5}] table[x="d", y="Optimal", col sep=comma] {\opdtt};

\nextgroupplot[title={$(iii)$ $\gamma=1$}, 
  title style={yshift=-0.15cm},
  xtick pos=bottom,
  ytick pos=left,
  ymax=1.075,
  ymin=-0.075,
  xlabel=\empty,
  ylabel=\empty]
\addplot+[color=ggred, mark options={fill=ggred, mark size=1.5}] table[x="d", y="Mahalanobis", col sep=comma] {\opdttt};
\addplot+[color=ggblue, mark options={fill=ggblue, mark size=1.5}] table[x="d", y="Euclidean", col sep=comma] {\opdttt};
\addplot+[color=ggpurple, mark options={fill=ggpurple, mark size=1.5}] table[x="d", y="Kaiser", col sep=comma] {\opdttt};
\addplot+[color=ggyellow, mark options={fill=ggyellow, mark size=1.5}] table[x="d", y="Weighted", col sep=comma] {\opdttt};
\addplot+[color=gggreen, mark options={fill=gggreen, mark size=1.5}] table[x="d", y="Sq_Euclidean", col sep=comma] {\opdttt};
\addplot+[color=ggpink, mark options={fill=ggpink, mark size=1.5}] table[x="d", y="Optimal", col sep=comma] {\opdttt};

\legend{
    Mahalanobis, Euclidean, Kaiser, Wtd. Eigenvalue, Sq. Euclidean, Oracle
}

  \end{groupplot}
    \node[below=0.5cm] at ($(group c1r3.south)!0.5!(group c3r3.south)$) {Number of covariates};
     \node[rotate=90] at ($(group c1r1.west)!0.5!(group c1r3.west) + (-1.1cm, 0)$) {Variance Reduction};
\end{tikzpicture}
    \caption{Variance reduction across different numbers of covariates and concentrations.}
    \label{cov_alpha_1} 
\end{figure}

We can see that Mahalanobis Rerandomization is the best (besides the oracle) in terms of variance reduction in $(i)$, whereas Euclidean Rerandomization is the best at constraining the variance in $(ii)$. In setting $(iii)$, the Mahalanobis distance performs the worst, in the sense that it yields the largest relative standard error, whereas Euclidean Rerandomization is still a close second in terms of variance reduction, reflecting the robustness of Euclidean Rerandomization. PCA Rerandomization using the decision rule in \cref{sb_rule} performs better than the Kaiser rule when eigenvalues are sparse, but the two methods are approximately equivalent when the eigenvalues become more uniform. Furthermore, as the eigenvalues become more uniform there is less information that can be derived from the covariance matrix, and as a result all rerandomization methods are closer to each other. Finally, we note that the oracle is relatively more precise as $d$ increases, which follows due to the increasing $R^2$ and decreasing $\nu^*_{1, \beta}$ across $d$, as illustrated by \cref{optimal_a_outcomes}.

Importantly, we can see that practitioners should take great care when the distribution of eigenvalues is sparse. In this case, there are large differences among rerandomization methods; Mahalanobis Rerandomization does better than all other methods in setting $(i)$, but significantly worse in settings $(ii)$ and $(iii)$. Euclidean Rerandomization is a good hedge against this risk, as it constrains the covariance matrix the most, and produces the second greatest amount of variance reduction across all settings. This aligns with \cref{euc_minimax} which showed that Euclidean Rerandomization is minimax optimal in terms of the variance of the difference-in-means estimator $\widehat{\tau}$. Furthermore, it is notable that Squared Euclidean Rerandomization and Euclidean Rerandomization are nearly identical in performance across simulations. Thus, methods that upweight top eigenvalues like Squared Euclidean Rerandomization may outperform Euclidean Rerandomization when the top principal components are more strongly related to the potential outcomes, but may underperform if the bottom principal components have significant importance.

\subsection{Application}

So far, our simulations have only considered the setting where the covariates are linearly related to the potential outcomes and the treatment effect is additive. Here, we relax those assumptions using a real experiment and constructing synthetic outcomes. In our real data example, we consider a randomized controlled trial in which 308 low-socioeconomic-status German adolescents were randomly assigned mentors to help them prepare for professional life, where randomization followed a pairwise matching design with rerandomization to ensure covariate balance \citep{mentoringpaper}. We include $d=21$ covariates --- these include continuous variables such as age and personality scores, and categorical variables such as grade level, sex, or migrant status. We recode all categorical variables with binary indicator variables for each category. Furthermore, if any categorical variables were missing, we denoted their missingness with indicators. The experiment yielded outcomes $Y_i$, a continuous measure of labor market outcomes. In reality, we only know the outcomes $Y_i$ and not the full potential outcomes $Y_i(1), Y_i(0)$. For the purposes of comparing rerandomization methods, we generate potential outcomes in a way that induces non-linearity and non-additivity. For each subject $i$, we define the potential outcomes $Y_i(0) = \widehat{\mathbb{E}}\left[Y_i \mid X_i, W_i = 0\right]$ and $Y_i(1) = \widehat{\mathbb{E}}\left[Y_i \mid X_i, W_i = 1\right]$ where we use flexible, non-linear models for estimating the regression function via the \texttt{SuperLearner} package in \texttt{R}. Specifically, we used an ensemble modeling approach with a candidate library that included random forests, generalized additive models, gradient boosted models, LASSO, and generalized linear models. This yields data $(\mathbf{X}_i, W_i, Y_i(0), Y_i(1))$ such that the data are non-linearly related to the potential outcomes and the treatment effect is not additive. 

\begin{figure}[h]
    \centering
\begin{tikzpicture}
  \begin{axis}[
    width=0.6\textwidth,
    height=0.375\textwidth,
    grid=both,
    grid style={line width=0.1pt, draw=gray!20, opacity=0.5},
    xlabel={Component Number},
    ymax=0.55,
    ylabel={Variance explained},
    ylabel style={yshift=0cm},
    legend cell align=left,
    xtick pos=bottom,
    ytick pos=left,
    legend image post style={sharp plot,-, very thick},
  ]

    \pgfplotstableread[col sep=comma]{files/mentor_scree_plot.csv}\msm

    \addplot+[color=black,  mark options={fill=black, mark size=1.5}] 
      table[x="d", y="var_explained", col sep=comma] {\msm};

  \end{axis}
\end{tikzpicture}
    \caption{Scree plot for the mentoring data set.}
    \label{scree_plot_mentoring}
\end{figure}

Once covariates and outcomes are defined, we ran each Quadratic Form Rerandomization method 10,000 times with acceptance probability $\alpha = 0.01$, and computed the difference-in-means estimator $\widehat{\tau}$ for each replication. For comparison, we did the same for 10,000 complete randomizations. \cref{real_data_table} shows how the precision of the difference-in-means estimator compares across rerandomization methods, relative to complete randomization. 

\begin{table}[h]
\centering
\begin{tabular}{r|c}
\textit{Method} & $\text{Var}(\widehat{\tau} \mid \mathbf{X}, Q_{\mathbf{A}}(\sqrt{n} \widehat{\bm{\tau}}_{\mathbf{X}}) \leq a) / \text{Var}(\widehat{\tau} \mid \mathbf{X})$\\ \hline
Mahalanobis & 0.611  \\
Euclidean & \textbf{0.565}   \\
Kaiser $(k = 8)$ & 0.585 \\
Weighted Eigenvalue $(k = 16)$ & 0.608  \\
Sq. Euclidean & 0.576
\end{tabular}
\caption{Precision of $\widehat{\tau}$ under rerandomization relative to complete randomization.}
\label{real_data_table}
\end{table}

Although performance was close across methods, Euclidean Rerandomization yielded the smallest variance. The reason for the close comparison between methods likely stems from the eigenstructure of the covariates, as illustrated by the scree plot in \cref{scree_plot_mentoring}. Since the distribution of eigenvalues is very uniform, there is less information from the covariance matrix that can be leveraged during rerandomization. Informally, this is a setting close to the one described in \cref{variance_reduction_sim_section} when $\gamma = 1$. Regardless, the results in \cref{real_data_table} align with \cref{euc_minimax}, which shows that Euclidean Rerandomization is minimax optimal and thus, its variance reduction is more robust than other methods across many settings.

\section{Discussion and Conclusion} \label{conclusion}

Much of the rerandomization literature has focused on applying the Mahalanobis distance to different experimental settings \citep{morgan2012rerandomization, morgan2015rerandomization, branson2016improving, li2018asymptotic, zhou2018sequential, lu2023design}. However, the Mahalanobis distance can perform poorly in high-dimensional settings. Recent works such as \cite{branson2021ridge} and \cite{zhang2023pca} have attempted to address this problem by using quadratic forms other than the Mahalanobis distance as a balance metric for rerandomization. In our analysis, we derive general results for rerandomization using any quadratic form and establish which method is optimal under different conditions.

Under Quadratic Form Rerandomization, defined by a positive semi-definite matrix $\mathbf{A}$, we show that the variance reduction applied to each principal component of $\mathbf{X}$ is weighted by the eigenvalues of $\mathbf{\Sigma}^{\nicefrac{1}{2}} \mathbf{A} \mathbf{\Sigma}^{\nicefrac{1}{2}}$ where $\mathbf{\Sigma} = \text{Cov}(\sqrt{n}(\bar{\mathbf{X}}_T - \bar{\mathbf{X}}_C) \mid \mathbf{X})$. In terms of covariate balance, we found in \cref{opt_no_info} that $\mathbf{A} = \mathbf{I}_d$ minimizes the Frobenius norm of the covariance of the covariate mean differences after rerandomization, and $\mathbf{A} = \mathbf{\Sigma}^{-1}$ maximizes the total variance reduction applied to the principal components. Meanwhile, in \cref{var_reduction_section}, we showed that the precision of the difference-in-means estimator after rerandomization depends on the variance explained by each principal component and their relationship with the potential outcomes. We found that Euclidean Rerandomization is minimax optimal in the sense that the difference-in-means estimator's precision is never too far from its precision under the optimal choice of $\mathbf{A}$. Our results establish properties about a general class of rerandomization methods and provide guidance for navigating this class in practice.

We validated these theoretical results via simulation and a real application, and found that---while the optimal rerandomization method depends on the unknown relationship between the principal components and outcomes---Euclidean Rerandomization yielded fairly robust results across simulation settings. That said, if the practitioner has prior information that the top principal components are strongly related to the outcomes, they can employ methods like Squared Euclidean or PCA Rerandomization to achieve greater variance reduction than Euclidean Rerandomization, but at the expense of robustness.

Although this work establishes properties for a broad class of rerandomization methods, we only focus on methods that can be expressed by quadratic forms. Future work could establish methods for comparing rerandomization with quadratic forms to other rerandomization methods and experimental design strategies. Furthermore, although we focus on two-arm, completely randomized experiments, the quadratic form framework is not tied to this setting. For example, in multi-armed experiments one could stack imbalances across arms and apply quadratic form acceptance rules to this higher-dimensional covariate imbalance vector. Furthermore, Quadratic Form Rerandomization can be layered on top of other randomization schemes (e.g.\ stratified/block, cluster, or split-plot) by extending the approaches in \cite{wang2023rerandomization, lu2023design, shi2022rerandomization}; these are all natural directions for future work. In particular, sequential rerandomization could be especially useful, as practitioners may be able to use outcome information in prior experiments to inform future experiments. This outcome information could be useful not only for choosing which covariates to balance, but also which quadratic form to use when rerandomizing.

\section*{Acknowledgments}
The authors report no conflicts of interest. This research was supported through a fellowship provided by the Novartis Pharmaceuticals Corporation. We especially thank Joel Greenhouse for his many insightful comments and suggestions.

\section*{References}
\vspace{-1cm}
\bibliographystyle{abbrvnat}
\bibliography{references}

\setlength{\parindent}{0cm}
\appendix

\begin{center}
{\large\bf SUPPLEMENTARY MATERIAL}
\end{center}

\begin{description}

\item \cref{asymptotic_theory_sec}: Contains regularity conditions for asymptotic distributions.

\item \cref{geometry_section}: Contains a discussion of the geometry and eigenstructure of the covariance of the covariate mean differences under Quadratic Form Rerandomization.

\item \cref{quad_form_dist_section}: Contains a discussion on the distribution of quadratic forms and how to simulate acceptance thresholds in practice.

\item \cref{proofs_section}: Contains all proofs for our theoretical results, including:
\begin{description}
    \item \cref{theorem_1_proof}: Proof of \cref{Theorem1}.
    \item \cref{opnorm_proof}: Proof of \cref{opnorm}.
    \item \cref{totalvarredux_proof}: Proof of \cref{totalvarredux}.
    \item \cref{diff_in_vars_proof}: Proof of \cref{diff_in_vars}.
    \item \cref{optimal_a_outcomes_proof}: Proof of \cref{optimal_a_outcomes}.
    \item \cref{euc_minimax_proof}: Proof of \cref{euc_minimax}.
    \item \cref{PCA_QFR_proof}: Proof of \cref{PCA_QFR}.
    \item \cref{drop_pcs_proof}: Proof of \cref{drop_pcs}.
    \item \cref{eigenvalues_qfr_proof}: Proof of \cref{eigenvalues_qfr}.
\end{description}

\end{description}

\section{Asymptotic Theory} \label{asymptotic_theory_sec}

Here, we discuss the asymptotic distribution of the difference-in-means estimator $\widehat{\tau}$ and the covariate mean differences $\bar{\mathbf{X}}_T - \bar{\mathbf{X}}_C$ after Quadratic Form Rerandomization. To begin, we introduce some definitions and results derived in \cite{li2018asymptotic}. Let $\bar{Y}(w) = \frac{1}{n} \sum^n_{i=1} Y_i(w)$ denote the finite population average of the potential outcomes under treatment arm $w \in \{0, 1\}$ and similarly, define $S^2_{Y(w)} = \frac{1}{n-1} \sum^n_{i=1} (Y_i(w) - \bar{Y}(w))^2$ to be the finite population variance. Next, let $S^2_\tau = \frac{1}{n-1} \sum^n_{i=1}(\tau_i - \tau)^2$ be the finite population variance of the individual treatment effects where $\tau_i = Y_i(1) - Y_i(0)$, and $\mathbf{S}_{Y(w), \mathbf{X}} = \mathbf{S}^T_{\mathbf{X}, Y(w)} = \frac{1}{n-1} \sum^n_{i=1} (Y_i(w) - \bar{Y}(w))(\mathbf{X}_i - \bar{\mathbf{X}})^T$ be the finite population covariance between the potential outcomes and covariates, where $\bar{\mathbf{X}} = \frac{1}{n}\sum^n_{i=1} \mathbf{X}_i$. Finally, let $\mathbf{S}^2_{\mathbf{X}} = \frac{1}{n-1} \sum^n_{i=1} (\mathbf{X}_i - \bar{\mathbf{X}})(\mathbf{X}_i - \bar{\mathbf{X}})^T$ denote the sample covariance. In what follows, we use the notation $\sim$ to denote two sequences of random vectors converging weakly to the same distribution. With these definitions in place, \cite{li2018asymptotic} establish the following as a sufficient condition for the asymptotic normality of $ \sqrt{n}( \widehat{\tau} - \tau, (\bar{\mathbf{X}}_T - \bar{\mathbf{X}}_C)^T)$ under complete randomization: 

\begin{condition} \label{asymptotic_norm_condition}
For $w = 0, 1$, as $n \to \infty$,
    \begin{enumerate}
    \item[(i)] The proportion of treated units under treatment arm $w$ has positive limits.
    \item[(ii)] The finite population variances and covariances $S^2_{Y(w)}$, $S^2_\tau$, $\mathbf{S}^2_{\mathbf{X}}$, $\mathbf{S}^2_{\mathbf{X}, Y(w)}$ have finite limiting values.
    \item[(iii)] $\underset{1 \leq i \leq n}{\text{max}} \left\{ \frac{1}{n} \left| Y_i(w) - \bar{Y}(w)\right|^2 \right\} \to 0$ and $\underset{1 \leq i \leq n}{\text{max}} \frac{1}{n} \left\{ \left| \left| \mathbf{X}_i - \bar{\mathbf{X}} \right|\right|^2_2 \right\} \to 0$.
\end{enumerate}
\end{condition}
Then, by \cite{li2018asymptotic}, under the assumption that \cref{asymptotic_norm_condition} holds,  
\begin{align*}
     \sqrt{n}\begin{pmatrix}
        \widehat{\tau} - \tau \\
        \bar{\mathbf{X}}_T - \bar{\mathbf{X}}_C
    \end{pmatrix} \mid \mathbf{X} \sim \begin{pmatrix}
        T_\tau \\
        \mathbf{T}_{\mathbf{X}}
    \end{pmatrix}
\end{align*}
where $(T_\tau, \mathbf{T}^T_{\mathbf{X}})$ is a random vector from $\mathcal{N}(\mathbf{0}, \mathbf{\Sigma}_{\mathbf{V}})$ where
\begin{align*}
    \mathbf{\Sigma}_{\mathbf{V}} = \begin{bmatrix}
        V_{\tau \tau} & \mathbf{V}_{\tau x} \\
        \mathbf{V}_{x \tau} & \mathbf{V}_{xx}
    \end{bmatrix}
\end{align*}
such that $V_{\tau \tau} = \frac{1}{p} S^2_{Y(1)} + \frac{1}{1-p} S^2_{Y(0)} - S^2_\tau$, $\mathbf{V}_{\tau x} = \frac{1}{p} \mathbf{S}_{Y(1), \mathbf{X}} + \frac{1}{1-p} \mathbf{S}_{Y(0), \mathbf{X}}$, and $\mathbf{V}_{xx} = \frac{\mathbf{S}^2_{\mathbf{X}}}{p(1-p)}  = \mathbf{\Sigma}$. After establishing the sufficient conditions for asymptotic normality, \cite{li2018asymptotic} also derive the asymptotic distribution of $ \sqrt{n}( \widehat{\tau} - \tau, (\bar{\mathbf{X}}_T - \bar{\mathbf{X}}_C)^T)$ under general covariate balance criteria that depend only on $\sqrt{n}(\bar{\mathbf{X}}_T - \bar{\mathbf{X}}_C)$ and $\mathbf{\Sigma}$, i.e., $\phi(\sqrt{n}(\bar{\mathbf{X}}_T - \bar{\mathbf{X}}_C), \mathbf{\Sigma})$, where $\phi(\cdot, \cdot)$ is an indicator function that determines whether or not a treatment allocation is acceptable. The following are sufficient conditions for the asymptotic conditional distribution under general covariate balance criteria:
\begin{condition} \label{general_balance_condition} The covariate balance criterion $\phi(\cdot, \cdot)$ satisfies:
    \begin{enumerate}
    \item[(i)] $\phi(\cdot, \cdot)$ is almost surely continuous.
    \item[(ii)] For $\bm{\xi} \sim \mathcal{N}(\mathbf{0}, \mathbf{\Sigma})$, $\mathbb{P}(\phi(\bm{\xi}, \mathbf{\Sigma}) = 1) > 0$ for any $\mathbf{\Sigma} > 0$ and $\text{Var}(\bm{\xi} \mid \phi(\bm{\xi}, \mathbf{\Sigma}) = 1)$ is a continuous function of $\mathbf{\Sigma}$.
    \item[(iii)] $\phi(\bm{\mu}, \mathbf{\Sigma}) = \phi(- \bm{\mu},\mathbf{\Sigma})$, for all $\bm{\mu}$, $\mathbf{\Sigma} > 0$.
\end{enumerate}
where $\mathbf{\Sigma} > 0$ means that $\mathbf{\Sigma}$ is positive-definite.
\end{condition}
Let $\widehat{\bm{\tau}}_{\mathbf{X}} = \bar{\mathbf{X}}_T - \bar{\mathbf{X}}_C$. Then, note that Quadratic Form Rerandomization, defined as
\begin{align*}
    \phi(\sqrt{n}(\bar{\mathbf{X}}_T - \bar{\mathbf{X}}_C), \mathbf{\Sigma}) = \begin{cases}
        1 & \text{if $Q_{\mathbf{A}}(\sqrt{n} \widehat{\bm{\tau}}_{\mathbf{X}}) = (\sqrt{n} \widehat{\bm{\tau}}_{\mathbf{X}})^T \mathbf{A} (\sqrt{n} \widehat{\bm{\tau}}_{\mathbf{X}}) \leq a$} \\
        0 & \text{otherwise}
    \end{cases}
\end{align*}
satisfies \cref{general_balance_condition} under the condition that $\mathbf{\Sigma}$ is positive-definite. Let $\mathcal{G}$ denote the event that $\phi(\sqrt{n}(\bar{\mathbf{X}}_T - \bar{\mathbf{X}}_C), \mathbf{\Sigma}) = 1$ for some treatment assignment vector $\mathbf{W}$ and define $\mathfrak{G} = \{ \bm{\mu} \ : \ \phi(\bm{\mu}, \mathbf{\Sigma}) = 1 \}$ to be the acceptance region. Then, when \cref{asymptotic_norm_condition} and \cref{general_balance_condition} hold, by Corollary $A1$ of \cite{li2018asymptotic}, it follows that
\begin{align*}
    \sqrt{n}\begin{pmatrix}
        \widehat{\tau} - \tau \\
        \bar{\mathbf{X}}_T - \bar{\mathbf{X}}_C
    \end{pmatrix} \mid \mathbf{X}, \mathcal{G} \sim \begin{pmatrix}
        T_\tau \\
        \mathbf{T}_{\mathbf{X}}
    \end{pmatrix} \mid \mathfrak{G}
\end{align*}
where $(T_\tau, \mathbf{T}^T_{\mathbf{X}})^T \sim \mathcal{N}(\mathbf{0}, \mathbf{\Sigma}_{\mathbf{V}})$. In particular, we are interested in the conditional distribution $\sqrt{n}(\widehat{\tau} - \tau) \mid \mathbf{X}, Q_{\mathbf{A}}(\sqrt{n} \widehat{\bm{\tau}}_{\mathbf{X}}) \leq a$; under Corollary $A2$ of \cite{li2018asymptotic} it follows that
\begin{align} \label{tau_dist}
    \sqrt{n}(\widehat{\tau} - \tau) \mid \mathbf{X}, Q_{\mathbf{A}}(\sqrt{n} \widehat{\bm{\tau}}_{\mathbf{X}}) \leq a \sim \varepsilon + \bm{\beta}^T \bm{\xi} \mid  \bm{\xi}^T \mathbf{A} \bm{\xi} \leq a
\end{align}
where $\varepsilon \sim \mathcal{N}(0, V_{\tau \tau}(1 - R^2))$, $\bm{\xi} \sim \mathcal{N}(\mathbf{0}, \mathbf{\Sigma})$, $R^2 = \frac{1}{V_{\tau \tau }} (\mathbf{V}_{\tau x} \mathbf{\Sigma}^{-1} \mathbf{V}_{x \tau})$, and $\bm{\beta} = \mathbf{\Sigma}^{-1} \mathbf{V}_{x \tau}$. 

\section{Geometry of Quadratic Form Rerandomization} \label{geometry_section}

One interesting implication of rerandomization using quadratic forms relates to the geometry of the covariance matrix, as defined in \cref{Theorem1}. As previously discussed, Quadratic Form Rerandomization considers the set of all ellipsoidal constraints on the covariance of the covariate mean differences. However, more intuition can be built by directly solving for the eigenvalues of this covariance matrix. The following result establishes that the variance reduction factors $\nu_{1, \eta}, \ldots, \nu_{d, \eta}$ represent the generalized eigenvalues of $\text{Cov}\left(\sqrt{n}\left(\bar{\mathbf{X}}_T - \bar{\mathbf{X}}_C\right) \mid \mathbf{X},  Q_\mathbf{A}(\sqrt{n} \widehat{\bm{\tau}}_{\mathbf{X}}) \leq a\right)$ and $\mathbf{\Sigma}$.

\begin{corollary}\label{eigenvalues_qfr}
    Suppose $\mathbf{A} \in \mathbf{S}^d_{+}$. Then, under \cref{asymptotic_norm_condition} and \cref{general_balance_condition}, as $n \to \infty$, $\nu_{1, \eta}, \ldots, \nu_{d, \eta}$ are the generalized eigenvalues of $\text{Cov}\left(\sqrt{n} \widehat{\bm{\tau}}_{\mathbf{X}} \mid \mathbf{X}, Q_\mathbf{A}(\sqrt{n} \widehat{\bm{\tau}}_{\mathbf{X}}) \leq a\right)$ and $\mathbf{\Sigma}$. That is, for all $j = 1, \ldots, d$,
    \begin{align*}
    \text{det}\left(\text{Cov}\left(\sqrt{n} \widehat{\bm{\tau}}_{\mathbf{X}} \mid \mathbf{X}, Q_\mathbf{A}(\sqrt{n} \widehat{\bm{\tau}}_{\mathbf{X}}) \leq a\right) - \nu_{j, \eta} \mathbf{\Sigma} \right) = 0.
    \end{align*}
    Additionally, if $\mathbf{\Sigma}$ and $\mathbf{\Sigma}^{\nicefrac{1}{2}} \mathbf{A} \mathbf{\Sigma}^{\nicefrac{1}{2}}$ share an eigenbasis, then it can be shown that the eigenvalues of $\text{Cov}\left(\sqrt{n}\left(\bar{\mathbf{X}}_T - \bar{\mathbf{X}}_C\right) \mid \mathbf{X},  Q_\mathbf{A}(\sqrt{n} \widehat{\bm{\tau}}_{\mathbf{X}}) \leq a\right)$ are given by $\nu_{1, \eta} \lambda_1, \ldots, \nu_{d, \eta} \lambda_d$.
\end{corollary}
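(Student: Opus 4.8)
The plan is to read the corollary directly off the factorization of the post-rerandomization covariance matrix supplied by \cref{Theorem1}. Write $\mathbf{C} := \text{Cov}\left(\sqrt{n}\widehat{\bm{\tau}}_{\mathbf{X}} \mid \mathbf{X}, Q_{\mathbf{A}}(\sqrt{n}\widehat{\bm{\tau}}_{\mathbf{X}}) \leq a\right) = \mathbf{\Sigma}^{\nicefrac{1}{2}} \mathbf{\Omega} \mathbf{N} \mathbf{\Omega}^T \mathbf{\Sigma}^{\nicefrac{1}{2}}$, where $\mathbf{N} = \text{diag}\{(\nu_{j,\eta})_{1\leq j\leq d}\}$ and $\mathbf{\Omega}$ is the orthogonal eigenvector matrix of $\mathbf{\Sigma}^{\nicefrac{1}{2}}\mathbf{A}\mathbf{\Sigma}^{\nicefrac{1}{2}}$. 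For the first claim, I would use $\mathbf{\Sigma} = \mathbf{\Sigma}^{\nicefrac{1}{2}}\mathbf{\Sigma}^{\nicefrac{1}{2}}$ to factor $\mathbf{C} - \nu\mathbf{\Sigma} = \mathbf{\Sigma}^{\nicefrac{1}{2}}\left(\mathbf{\Omega}\mathbf{N}\mathbf{\Omega}^T - \nu\mathbf{I}_d\right)\mathbf{\Sigma}^{\nicefrac{1}{2}}$ and take determinants, obtaining
\begin{align*}
\text{det}(\mathbf{C} - \nu\mathbf{\Sigma}) = \text{det}(\mathbf{\Sigma})\,\text{det}\!\left(\mathbf{\Omega}\mathbf{N}\mathbf{\Omega}^T - \nu\mathbf{I}_d\right).
\end{align*}
Since $\mathbf{\Sigma}$ is positive definite, $\text{det}(\mathbf{\Sigma}) \neq 0$, so the generalized eigenvalues of $(\mathbf{C}, \mathbf{\Sigma})$ are exactly the roots of $\text{det}(\mathbf{\Omega}\mathbf{N}\mathbf{\Omega}^T - \nu\mathbf{I}_d) = 0$, i.e., the ordinary eigenvalues of $\mathbf{\Omega}\mathbf{N}\mathbf{\Omega}^T$. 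Because $\mathbf{\Omega}$ is orthogonal, $\mathbf{\Omega}\mathbf{N}\mathbf{\Omega}^T$ is similar to $\mathbf{N}$, so its eigenvalues are precisely $\nu_{1,\eta}, \ldots, \nu_{d,\eta}$; this yields $\text{det}(\mathbf{C} - \nu_{j,\eta}\mathbf{\Sigma}) = 0$ for every $j$.

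For the second claim, suppose $\mathbf{\Sigma}$ and $\mathbf{\Sigma}^{\nicefrac{1}{2}}\mathbf{A}\mathbf{\Sigma}^{\nicefrac{1}{2}}$ share an eigenbasis. Since $\mathbf{\Sigma}^{\nicefrac{1}{2}}$ is a function of $\mathbf{\Sigma}$ it is diagonalized by the same $\mathbf{\Gamma}$, and we may take $\mathbf{\Omega} = \mathbf{\Gamma}$. I would then use $\mathbf{\Sigma}^{\nicefrac{1}{2}}\mathbf{\Gamma} = \mathbf{\Gamma}\mathbf{\Lambda}^{\nicefrac{1}{2}}$ to rewrite
\begin{align*}
\mathbf{C} = \mathbf{\Sigma}^{\nicefrac{1}{2}}\mathbf{\Gamma}\mathbf{N}\mathbf{\Gamma}^T\mathbf{\Sigma}^{\nicefrac{1}{2}} = \mathbf{\Gamma}\mathbf{\Lambda}^{\nicefrac{1}{2}}\mathbf{N}\mathbf{\Lambda}^{\nicefrac{1}{2}}\mathbf{\Gamma}^T = \mathbf{\Gamma}\,\text{diag}\{(\nu_{j,\eta}\lambda_j)_{1\leq j\leq d}\}\,\mathbf{\Gamma}^T,
\end{align*}
which is an eigendecomposition of $\mathbf{C}$ since $\mathbf{\Gamma}$ is orthogonal; hence the eigenvalues of $\mathbf{C}$ are $\nu_{1,\eta}\lambda_1, \ldots, \nu_{d,\eta}\lambda_d$.

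The only delicate point, and the one I would be most careful about, is the bookkeeping of indices in the shared-eigenbasis case: one must choose the common orthogonal matrix $\mathbf{\Gamma}$ so that its $j$th column is simultaneously an eigenvector of $\mathbf{\Sigma}$ with eigenvalue $\lambda_j$ and an eigenvector of $\mathbf{\Sigma}^{\nicefrac{1}{2}}\mathbf{A}\mathbf{\Sigma}^{\nicefrac{1}{2}}$ with eigenvalue $\eta_j$, so that the product appearing as the $j$th eigenvalue of $\mathbf{C}$ really is $\nu_{j,\eta}\lambda_j$ as claimed. When $\mathbf{\Sigma}$ has repeated eigenvalues this amounts to choosing, within each eigenspace of $\mathbf{\Sigma}$, a basis that also diagonalizes $\mathbf{\Sigma}^{\nicefrac{1}{2}}\mathbf{A}\mathbf{\Sigma}^{\nicefrac{1}{2}}$, which is possible precisely because the two matrices commute. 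Everything else is routine determinant and similarity manipulation.
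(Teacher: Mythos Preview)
Your proposal is correct and follows essentially the same approach as the paper's proof. The only minor difference is cosmetic: for the first claim you use the symmetric factorization $\mathbf{C} - \nu\mathbf{\Sigma} = \mathbf{\Sigma}^{\nicefrac{1}{2}}(\mathbf{\Omega}\mathbf{N}\mathbf{\Omega}^T - \nu\mathbf{I}_d)\mathbf{\Sigma}^{\nicefrac{1}{2}}$, whereas the paper factors as $(\mathbf{\Sigma}^{\nicefrac{1}{2}}\mathbf{\Omega}\mathbf{\Psi}\mathbf{\Omega}^T\mathbf{\Sigma}^{-\nicefrac{1}{2}} - \nu\mathbf{I}_d)\mathbf{\Sigma}$ and then invokes similarity via $\mathbf{P} = \mathbf{\Sigma}^{\nicefrac{1}{2}}\mathbf{\Omega}$; your version is arguably slightly cleaner since $\mathbf{\Omega}\mathbf{N}\mathbf{\Omega}^T$ is already an orthogonal eigendecomposition, but both routes are equivalent and the second part is identical.
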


For generalized eigenvalue problems $\mathbf{A}\mathbf{v} = \lambda \mathbf{B} \mathbf{v}$, the generalized eigenvalue describes how the linear transformation $\mathbf{A}$ scales $\mathbf{v}$ relative to the transformation $\mathbf{B}$. Thus, \cref{eigenvalues_qfr} shows that balancing on a quadratic form scales the space defined by the transformation $\mathbf{\Sigma}$ by $\nu_{1, \eta}, \ldots, \nu_{d, \eta}$. In the case that $\mathbf{\Sigma}$ and $\mathbf{\Sigma}^{\nicefrac{1}{2}} \mathbf{A} \mathbf{\Sigma}^{\nicefrac{1}{2}}$ share an eigenbasis, this relationship simplifies, and now the eigenvalues of $\text{Cov}\left(\sqrt{n} \widehat{\bm{\tau}}_{\mathbf{X}} \mid \mathbf{X}, Q_\mathbf{A}(\sqrt{n} \widehat{\bm{\tau}}_{\mathbf{X}}) \leq a\right)$ are given by $\nu_{1, \eta} \lambda_1, \ldots, \nu_{d, \eta} \lambda_d$. In this setting, Quadratic Form Rerandomization directly manipulates the eigenstructure of $\mathbf{\Sigma}$, as illustrated in \cref{diffineigen}, where the shape of the covariance matrix afterwards depends on the choice of $\mathbf{A}$. Notably, most traditional choices of $\mathbf{A}$ satisfy the requirement that $\mathbf{\Sigma}$ and $\mathbf{\Sigma}^{\nicefrac{1}{2}} \mathbf{A} \mathbf{\Sigma}^{\nicefrac{1}{2}}$ share an eigenbasis. For example, $\mathbf{A} \in \{\mathbf{\Sigma}^{-1}, \mathbf{I}_d, (\mathbf{\Sigma} + \lambda \mathbf{I}_d)^{-1}\}$ all satisfy the shared eigenbasis requirement. $\mathbf{A}^*$, as defined in \cref{optimal_a_outcomes}, satisfies the shared eigenbasis requirement when $\bm{\beta}$ is an eigenvector of $\mathbf{\Sigma}$.

    \begin{figure}[h]
  \begin{subfigure}[b]{0.5\textwidth}
    \centering
\begin{tikzpicture}
\tikzset{>={latex}, thick}
  \begin{scope}[rotate=30]

    \draw[thick] (0,0) ellipse (4.25cm and 1.5cm); 
    
    \draw[->] (0,0) -- node[right] {$\sqrt{\lambda_1}$} ++(2,0) --(4.25,0) ;
    \draw[->] (0,0) -- node[midway, left] {$\sqrt{\lambda_2}$} (0, 1) --(0,1.5) ;
  \end{scope}
\end{tikzpicture}
    \caption{Eigenvalues before rerandomization.}
  \end{subfigure}
  \hspace{0.005\textwidth} 
  \begin{subfigure}[b]{0.5\textwidth}
    \centering
\begin{tikzpicture}
\tikzset{>={latex}, thick}

  \begin{scope}[rotate=30]

    \draw[thick] (0,0) ellipse (2.5cm and 1.5cm); 
    
    \draw[->] (0,0) -- node[right] {$\sqrt{\nu_{1, \eta}\lambda_1}$} ++(1,0) --(2.5,0) ;
    \draw[->] (0,0) -- node[midway, left] {$\sqrt{\nu_{2, \eta}\lambda_2}$} (0, 1) --(0,1.5) ;
  \end{scope}
\end{tikzpicture}
    \caption{Eigenvalues after rerandomization.}
  \end{subfigure}
  \caption{Change in eigenstructure after hypothetical Quadratic Form Rerandomization when there are two covariates. Here, the covariate matrix $\mathbf{\Sigma}$ has eigenvalues $\lambda_1$ and $\lambda_2$ before rerandomization; after Quadratic Form Rerandomization, these eigenvalues are scaled by $\nu_{j,\eta}$ in \cref{q_definition}. Because the $\nu_{j, \eta}$ are not necessarily constant, Quadratic Form Rerandomization can change not just the size but also the shape of the covariance matrix.}
  \label{diffineigen}  
\end{figure}

\subsection{Discussion of \cref{a_table} and the choice of $\mathbf{A}$}

In this section, we provide an extended discussion of \cref{a_table} and how each rerandomization method can be written as $Q_{\mathbf{A}}(\sqrt{n} \widehat{\bm{\tau}}_{\mathbf{X}}) = (\sqrt{n} \widehat{\bm{\tau}}_{\mathbf{X}})^T \mathbf{A} (\sqrt{n} \widehat{\bm{\tau}}_{\mathbf{X}})$ for some positive semi-definite or positive-definite matrix $\mathbf{A} \in \mathbb{R}^{d \times d}$ where $\widehat{\bm{\tau}}_{\mathbf{X}} = \bar{\mathbf{X}}_T - \bar{\mathbf{X}}_C$. In the case of Mahalanobis and Ridge Rerandomization, the choice of $\mathbf{A}$ is self-evident, as no manipulation is required. In the case of PCA Rerandomization, covariate balance is measured using the Mahalanobis distance between covariate means in treatment and control using the top $k$ principal components. Let $\mathbf{E}_k = (\mathbf{I}_k, \mathbf{0})^T$. Then, since $\mathbf{Z} = \mathbf{V}^T(\bar{\mathbf{X}}_T - \bar{\mathbf{X}}_C)$ where $\mathbf{V}$ is the matrix of singular vectors of $\mathbf{X}$, it follows that 
\begin{align*}
    M_k &= \sqrt{n}(\bar{\mathbf{Z}}^{(k)}_T - \bar{\mathbf{Z}}^{(k)}_C)^T \mathbf{\Sigma}^{-1}_{Z,k} \sqrt{n}(\bar{\mathbf{Z}}^{(k)}_T - \bar{\mathbf{Z}}^{(k)}_C) \\[0.05in]
    &= \left\{\sqrt{n}(\bar{\mathbf{X}}_T - \bar{\mathbf{X}}_C) \right\}^T \mathbf{V} \Big(\mathbf{E}_k \mathbf{\Sigma}^{-1}_{Z,k}  \mathbf{E}^T_k \Big) \mathbf{V}^T \left\{\sqrt{n}(\bar{\mathbf{X}}_T - \bar{\mathbf{X}}_C) \right\} \\[0.05in]
    &= \left\{\sqrt{n}(\bar{\mathbf{X}}_T - \bar{\mathbf{X}}_C) \right\}^T \mathbf{V} \begin{pmatrix}
        \mathbf{\Sigma}_{Z,k}^{-1} & \mathbf{0} \\[0.05in]
        \mathbf{0} & \mathbf{0}
    \end{pmatrix} \mathbf{V}^T \left\{\sqrt{n}(\bar{\mathbf{X}}_T - \bar{\mathbf{X}}_C) \right\} \\
    &= C^{-1}_n \left\{\sqrt{n}(\bar{\mathbf{X}}_T - \bar{\mathbf{X}}_C) \right\}^T \underbrace{\mathbf{V} \begin{pmatrix}
        \mathbf{D}^{-2}_k & \mathbf{0} \\
        \mathbf{0} & \mathbf{0}
    \end{pmatrix} \mathbf{V}^T}_{\mathbf{A}} \left\{\sqrt{n}(\bar{\mathbf{X}}_T - \bar{\mathbf{X}}_C) \right\}.
\end{align*}
For the regression-based joint test under $p$-value-based rerandomization, we assume that the covariates have been centered such that their means are zero. When this is the case, observe that
\begin{align*}
    \bar{\mathbf{X}}_T - \bar{\mathbf{X}}_C &= \frac{1}{n_1} \Big( \mathbf{X}^T \mathbf{W} \Big) - \frac{1}{n_0} \Big( \mathbf{X}^T(\mathbf{1}_n - \mathbf{W}) \Big) \\
    &= \left(\frac{1}{n_1} + \frac{1}{n_0} \right) \mathbf{X}^T \mathbf{W} - \frac{1}{n_0} \mathbf{X}^T\mathbf{1}_n \\
    &= \left(\frac{1}{n_1} + \frac{1}{n_0} \right) \mathbf{X}^T \mathbf{W} 
\end{align*}
where the final equation follows since $\frac{1}{n_0} \mathbf{X}^T\mathbf{1}_n = \mathbf{0}$ after centering. Then, we can see that after regressing the treatment vector $\mathbf{W}$ onto the covariates,
\begin{align*}
    \widehat{\bm{\beta}} &= (\mathbf{X}^T \mathbf{X})^{-1} \mathbf{X}^T \mathbf{W} \\
    &=   \left(\frac{1}{n_1} + \frac{1}{n_0} \right)^{-1}(\mathbf{X}^T \mathbf{X})^{-1} (\bar{\mathbf{X}}_T - \bar{\mathbf{X}}_C) \\
    &= \left(\frac{n}{n_0 n_1} \right)^{-1} \left(\frac{1}{(n-1)p(1-p)} \right)\mathbf{\Sigma}^{-1} (\bar{\mathbf{X}}_T - \bar{\mathbf{X}}_C) \\
    &=  \left(\frac{n_0 n_1}{n(n-1)p(1-p)} \right)\mathbf{\Sigma}^{-1} (\bar{\mathbf{X}}_T - \bar{\mathbf{X}}_C).
\end{align*}
Thus, it follows that we can write the joint test as
\begin{align*}
    \widehat{\bm{\beta}}^T \widehat{\mathbf{V}}^{-1}\widehat{\bm{\beta}} = \left(\frac{n_0 n_1}{n^{\nicefrac{3}{2}}(n-1)p(1-p)} \right)^2 (\sqrt{n} \widehat{\bm{\tau}}_{\mathbf{X}})^T \mathbf{\Sigma}^{-1} \widehat{\mathbf{V}}^{-1} \mathbf{\Sigma}^{-1} (\sqrt{n} \widehat{\bm{\tau}}_{\mathbf{X}}).
\end{align*}
Note that if logistic regression were used to estimate $\widehat{\bm{\beta}}$, asymptotically we would still obtain an expression that can be written as $(\sqrt{n} \widehat{\bm{\tau}}_{\mathbf{X}})^T \mathbf{A} (\sqrt{n} \widehat{\bm{\tau}}_{\mathbf{X}})$ --- \cite{zhao2021no} show that the joint test under logistic regression is equivalent to Mahalanobis Rerandomization. Finally, in the case of Weighted Euclidean distances (such as those suggested in \cite{lu2023design}) we can directly plug in any diagonal matrix $\text{diag}\{a_1, \ldots, a_d\}$ as our choice of $\mathbf{A}$.

\section{On the distribution of quadratic forms} \label{quad_form_dist_section}

To establish distributional properties of rerandomization schemes involving quadratic forms, it is useful to note several classical results. Suppose that $\mathbf{v} \in \mathbb{R}^d$ is a random vector such that $\mathbf{v} \sim \mathcal{N}(\bm{\mu}, \mathbf{\Sigma})$ and $\mathbf{A} \in \mathbb{R}^{d \times d}$ is a fixed, symmetric matrix. Then,
\begin{align} \label{qfr_distribution}
    \mathbf{v}^T \mathbf{A} \mathbf{v} \sim \sum^d_{j=1} \eta_j \chi^2_1(\gamma_j)
\end{align}
where $\eta_1, \ldots, \eta_d$ are the eigenvalues of $\mathbf{\Sigma}^{\nicefrac{1}{2}} \mathbf{A} \mathbf{\Sigma}^{\nicefrac{1}{2}}$ and $\chi^2_1(\gamma_j)$ for $j = 1, \ldots, d$ are independent non-central $\chi^2$ random variables where $\gamma_j = (\mathbf{\Omega}^T \mathbf{\Sigma}^{-\nicefrac{1}{2}} \bm{\mu})^2_j$ and $\mathbf{\Omega}$ is the orthogonal matrix of eigenvectors \citep{mathai1992quadratic}. In the rerandomization literature, it is often assumed that $\sqrt{n}(\bar{\mathbf{X}}_T - \bar{\mathbf{X}}_C) \mid \mathbf{X} \sim \mathcal{N}(\mathbf{0}, \mathbf{\Sigma})$ which is justified by the finite population central limit theorem \citep{li2017general}. In this case, $\gamma_j = 0$ for all $j = 1, \dots, d$. Thus, by plugging in each associated matrix $\mathbf{A}$, we can immediately obtain the distributions of the quadratic forms under Mahalanobis Rerandomization, Ridge Rerandomization, and PCA Rerandomization. For example, Lemma 4.1 of \cite{branson2021ridge} states that $M_{\lambda}$ as defined in \cref{ridge_m} is distributed as $ M_{\lambda} \mid \mathbf{X} \sim \sum_{j=1}^d \frac{\lambda_j}{\lambda_j + \lambda} \mathcal{Z}^2_j$ where $\mathcal{Z}_1, \ldots, \mathcal{Z}_d \overset{iid}{\sim} \mathcal{N}(0, 1)$. This result can be readily recovered by noting that
\begin{align*}
    \mathbf{\Sigma}^{\nicefrac{1}{2}} \mathbf{A} \mathbf{\Sigma}^{\nicefrac{1}{2}} &= \mathbf{\Sigma}^{\nicefrac{1}{2}} (\mathbf{\Sigma} + \lambda \mathbf{I}_d)^{-1} \mathbf{\Sigma}^{\nicefrac{1}{2}} \\
    &= \mathbf{\Sigma}^{\nicefrac{1}{2}} \left(\mathbf{\Sigma}^{-\nicefrac{1}{2}} (\mathbf{I}_d + \lambda \mathbf{\Sigma}^{-1})^{-1} \mathbf{\Sigma}^{-\nicefrac{1}{2}} \right) \mathbf{\Sigma}^{\nicefrac{1}{2}} \\
    &= (\mathbf{I}_d + \lambda \mathbf{\Sigma}^{-1})^{-1}.
\end{align*}
It can be shown that the eigenvalues of $(\mathbf{I}_d + \lambda \mathbf{\Sigma}^{-1})^{-1}$ are given by $\lambda_j (\lambda_j + \lambda)^{-1}$ for $j = 1, \ldots, d$, which completes the lemma. Similarly, Theorem 2 of \cite{zhang2023pca} states that \cref{mk_def_pos_df} follows the distribution $M_k \mid \mathbf{X} \sim \chi^2_k$; this can be immediately recovered by noting that $\mathbf{\Sigma}^{\nicefrac{1}{2}}_Z \mathbf{\Sigma}^{-1}_Z \mathbf{\Sigma}^{\nicefrac{1}{2}}_Z = \mathbf{I}_k$ and thus $\eta_j = 1$ for all $j = 1,\ldots,k$. Thus, \cref{qfr_distribution} can serve as a starting point for establishing properties of any rerandomization method involving a quadratic form $Q_{\mathbf{A}}(\mathbf{v})$, including methods not considered previously. To our knowledge, this classical result has not been leveraged in most rerandomization works; the only exception we know of is \cite{lu2023design} and \cite{liu2023bayesian}, who considered rerandomization using quadratic forms within the context of cluster-based experiments and Bayesian designs, respectively.

Given the distribution of $Q_{\mathbf{A}}(\mathbf{v})$, we can specify some threshold $a > 0$ by which a randomization is deemed acceptable or not. Typically, this threshold $a$ is determined by setting some acceptance probability $\alpha \in (0, 1)$, such that $\mathbb{P}(Q_{\mathbf{A}}(\mathbf{v}) \leq a) = \alpha$. When the eigenvalues $\eta_1, \ldots, \eta_d$ are not all equal and $\gamma_j = 0$, $Q_{\mathbf{A}}(\mathbf{v})$ is distributed as the summation of independent Gamma random variables with probability density function 
\begin{align*}
    f_Q(v) = \left\{\prod^r_{j=1} \left(\frac{\eta_{*}}{ \eta_j} \right)^{1/2} \right\} \sum^\infty_{k=0} \delta_k \frac{v^{(r/2)+k-1}\exp(-v/(2\eta_*))}{(2\eta_*)^{r/2+k}\Gamma(r/2+k)}
\end{align*}
for $v > 0$, where $r =\text{rank}(\mathbf{\Sigma}^{\nicefrac{1}{2}}\mathbf{A} \mathbf{\Sigma}^{\nicefrac{1}{2}})$, $\eta_*= \min_{1\leq j\leq r}\eta_j$, and $\delta_k$ satisfies
\begin{align*}
    \delta_{k} = \frac{1}{k} \sum^{k}_{\ell=1} \left[ \frac{1}{2}\sum^r_{j=1}  \left(1 - \frac{\eta_{*}}{\eta_j} \right)^{\ell}\right] \delta_{k - \ell}
\end{align*}
where $\delta_0 = 1$ \citep{moschopoulos1985distribution}. Unfortunately, this distribution is analytically intractable. Nonetheless, there are two simple options for choosing $a$. First, we can quickly determine $a$ by Monte Carlo simulation; since $Q_{\mathbf{A}}(\mathbf{v}) \sim \sum \eta_j \mathcal{Z}^2_j$ where $\mathcal{Z}_1, \ldots, \mathcal{Z}_d \sim \mathcal{N}(0, 1)$, we can simulate from this distribution many times and then define $a$ as an empirical quantile of these draws. Alternatively, we can approximate the distribution of $Q_{\mathbf{A}}(\mathbf{v})$ using an extension of the Welch–Satterthwaite method that approximates the summation of independent Gamma random variables with a single Gamma random variable \citep{stewart2007simple}. To do so, we employ what is known as the ``moment-matching method'' to match the first and second moments of our random variables. Note that for $c > 0$ it follows that $c \chi^2_\nu \sim \text{Gamma}(\nicefrac{\nu}{2}, 2 c)$. Thus, if we let $X_1, \ldots, X_n$ be a sequence of independent Gamma random variables such that $X_i \sim \text{Gamma}(\alpha_i, \beta_i)$, we can see that
\begin{align*}
    \mathbb{E}\left[\sum^n_{i=1} X_i \right] = \sum^n_{i=1} \alpha_i \beta_i \qquad \text{and} \qquad 
    \text{Var}\left[\sum^n_{i=1} X_i \right] = \sum^n_{i=1} \alpha_i \beta^2_i.
\end{align*}
Then, matching moments with a single Gamma random variable $X_m \sim \text{Gamma}(\alpha_m, \beta_m)$ we can see that
\begin{align*}
    \alpha_m = \frac{\mu^2}{\sum^n_{i=1}\alpha_i \beta^2_i} \qquad \text{and} \qquad  \beta_m = \frac{\sum^n_{i=1}\alpha_i \beta^2_i}{\mu}
\end{align*}
where $\mu = \sum^n_{i=1} \alpha_i \beta_i$ \citep{covo2014novel}. When considering the distribution of a quadratic form, i.e.\ when $\mathbf{v} \sim \mathcal{N}(\mathbf{0}, \mathbf{\Sigma})$ and $\mathbf{A} \in \mathbb{R}^{d \times d}$ is a fixed, symmetric matrix, then $\mathbf{v}^T \mathbf{A} \mathbf{v}$ approximately follows a $\text{Gamma}(\alpha_m, \beta_m)$ distribution, where
\begin{align*}
    \alpha_m = \frac{\left( \sum^d_{j=1} \eta_j\right)^2}{2 \sum^d_{j=1} \eta^2_j} \qquad \text{and} \qquad \beta_m = \frac{ 2\sum^d_{j=1} \eta^2_j}{\sum^d_{j=1} \eta_j }
\end{align*}
since $\eta_j \chi^2_1 \sim \text{Gamma}(\nicefrac{1}{2}, 2 \eta_j)$, and where the $m$ subscript denotes the moment matching method. Notably, we can derive a slightly cleaner expression for $\alpha_m$ and $\beta_m$ by noting that $\sum^d_{j=1} \eta_j$ and $\sum^d_{j=1} \eta^2_j$ are equal to the trace of $\mathbf{\Sigma}^{\nicefrac{1}{2}} \mathbf{A} \mathbf{\Sigma}^{\nicefrac{1}{2}}$ and $(\mathbf{\Sigma}^{\nicefrac{1}{2}} \mathbf{A} \mathbf{\Sigma}^{\nicefrac{1}{2}})^2$, respectively. Then, following the definitions outlined in \cite{bartlettbenign}, we can write $\alpha_m = \frac{R(\mathbf{\Sigma}^{\nicefrac{1}{2}} \mathbf{A} \mathbf{\Sigma}^{\nicefrac{1}{2}})}{2}$ and $\beta_m = \frac{2 \text{tr}(\mathbf{\Sigma}^{\nicefrac{1}{2}} \mathbf{A} \mathbf{\Sigma}^{\nicefrac{1}{2}})}{R(\mathbf{\Sigma}^{\nicefrac{1}{2}} \mathbf{A} \mathbf{\Sigma}^{\nicefrac{1}{2}})}$ where
\begin{align*}
    R(\mathbf{\Sigma}) = \frac{\text{tr}(\mathbf{\Sigma})^2}{\text{tr}(\mathbf{\Sigma}^2)} 
\end{align*}
is known as the ``effective rank'' of a covariance matrix. From here, we can define $a$ as the $\alpha$-quantile of the distribution of $\text{Gamma}(\alpha_m, \beta_m)$. Other approximations for the summation of independent Gamma random variables can be found in \cite{bodenham2016comparison}.

\section{Proofs from the Main Text} \label{proofs_section}

\subsection{Proof of \texorpdfstring{\cref{Theorem1}}{Theorem 1}} \label{theorem_1_proof}
\begin{proof}[\textbf{Proof:}]
Recall that, as discussed in \cref{asymptotic_theory_sec}, under \cref{asymptotic_norm_condition} and \cref{general_balance_condition}, as $n \to \infty$, it follows that
\begin{align*}
    \sqrt{n} \left(
        \bar{\mathbf{X}}_T - \bar{\mathbf{X}}_C \right) \mid \mathbf{X}, Q_{\mathbf{A}}(\sqrt{n} \widehat{\bm{\tau}}_{\mathbf{X}}) \leq a
    \sim \mathbf{Z} \mid Q_{\mathbf{A}}(\mathbf{Z}) \leq a
\end{align*}
where $\mathbf{Z} \sim \mathcal{N}(\mathbf{0}, \mathbf{\Sigma})$ and we use the notation $\sim$ to denote two sequences of random vectors converging weakly to the same distribution. From here, expanding out the definition of $Q_{\mathbf{A}}(\mathbf{Z})$ we can see that
\begin{align*}
    Q_{\mathbf{A}}(\mathbf{Z}) &= \mathbf{Z}^T \mathbf{A} \mathbf{Z}\\
    &=\big(\mathbf{\Sigma}^{-\nicefrac{1}{2}} \mathbf{Z} \big)^T \mathbf{\Sigma}^{\nicefrac{1}{2}} \mathbf{A} \mathbf{\Sigma}^{\nicefrac{1}{2}} \big(\mathbf{\Sigma}^{-\nicefrac{1}{2}} \mathbf{Z} \big) \\
    &= (\mathbf{\Omega}^T \widetilde{\mathbf{Z}})^T \mathbf{\Omega}^T \mathbf{\Sigma}^{\nicefrac{1}{2}} \mathbf{A} \mathbf{\Sigma}^{\nicefrac{1}{2}} \mathbf{\Omega} (\mathbf{\Omega}^T \widetilde{\mathbf{Z}}) \\
    &= (\mathbf{\Omega}^T \widetilde{\mathbf{Z}})^T \Big(\text{diag}\{\eta_1, \ldots, \eta_d \} \Big) (\mathbf{\Omega}^T \widetilde{\mathbf{Z}})
\end{align*}
where $\widetilde{\mathbf{Z}} = \mathbf{\Sigma}^{-\nicefrac{1}{2}} \mathbf{Z}$, $\mathbf{\Omega}$ is the orthogonal matrix of eigenvectors of $\mathbf{\Sigma}^{\nicefrac{1}{2}} \mathbf{A} \mathbf{\Sigma}^{\nicefrac{1}{2}}$, and $\eta_1, \ldots, \eta_d$ are the eigenvalues of $\mathbf{\Sigma}^{\nicefrac{1}{2}} \mathbf{A} \mathbf{\Sigma}^{\nicefrac{1}{2}}$. Note that we only require $\mathbf{A}$ to be a real, symmetric matrix in order to diagonalize $\mathbf{\Sigma}^{\nicefrac{1}{2}} \mathbf{A} \mathbf{\Sigma}^{\nicefrac{1}{2}}$. This follows, since any real symmetric matrix can be diagonalized, and due to the symmetry of $\mathbf{\Sigma}$ and $\mathbf{A}$, clearly $(\mathbf{\Sigma}^{\nicefrac{1}{2}} \mathbf{A} \mathbf{\Sigma}^{\nicefrac{1}{2}})^T = \mathbf{\Sigma}^{\nicefrac{1}{2}} \mathbf{A} \mathbf{\Sigma}^{\nicefrac{1}{2}}$. Then, $Q_{\mathbf{A}}(\mathbf{Z}) \sim \sum^d_{j=1} \eta_j (\mathbf{\Omega}^T \widetilde{\mathbf{Z}})^2_j$. From here, observe that
\begin{align*}
     \text{Cov}( \mathbf{Z} \mid Q_{\mathbf{A}}(\mathbf{Z}) \leq a)  &= \text{Cov}\left(\mathbf{\Sigma}^{\nicefrac{1}{2}} \left(\mathbf{\Sigma}^{-\nicefrac{1}{2}} \mathbf{Z} \right) \mid  \sum^d_{j=1} \eta_j (\mathbf{\Omega}^T \widetilde{\mathbf{Z}})^2_j \leq a\right) \\
     &= \text{Cov}\left(\mathbf{\Sigma}^{\nicefrac{1}{2}} \mathbf{\Omega} \mathbf{\Omega}^T \widetilde{\mathbf{Z}} \mid  \sum^d_{j=1} \eta_j (\mathbf{\Omega}^T \widetilde{\mathbf{Z}})^2_j \leq a\right) \\
     &= \mathbf{\Sigma}^{\nicefrac{1}{2}} \mathbf{\Omega} \text{Cov}\left( \mathbf{\Omega}^T \widetilde{\mathbf{Z}} \mid \sum^d_{j=1} \eta_j (\mathbf{\Omega}^T \widetilde{\mathbf{Z}})^2_j \leq a\right)\mathbf{\Omega}^T  \mathbf{\Sigma}^{\nicefrac{1}{2}} \\
     &= \mathbf{\Sigma}^{\nicefrac{1}{2}} \mathbf{\Omega} \text{Cov}\left( \bm{\mathcal{Z}} \mid  \sum^d_{j=1} \eta_j \mathcal{Z}^2_j \leq a\right)\mathbf{\Omega}^T  \mathbf{\Sigma}^{\nicefrac{1}{2}}.
\end{align*}
where the last equality follows by the orthogonality of $\mathbf{\Omega}$, since $\mathbf{\Omega}^T \widetilde{\mathbf{Z}} \sim \mathcal{N}(\mathbf{0}, \mathbf{\Omega}^T \mathbf{\Omega}) \sim \bm{\mathcal{Z}}$ for $\bm{\mathcal{Z}} \sim \mathcal{N}(\mathbf{0}, \mathbf{I}_d)$. From here, we need to calculate the conditional covariance of $\bm{\mathcal{Z}}$. Note that the symmetry of the normal distribution ensures that $\bm{\mathcal{Z}} \sim - \bm{\mathcal{Z}}$, which implies that
\begin{align*}
    \mathbb{E}\left[\mathcal{Z}_i \mid   \sum^d_{j=1} \eta_j \mathcal{Z}^2_j \leq a \right] = \mathbb{E}\left[-\mathcal{Z}_i \mid   \sum^d_{j=1} \eta_j \mathcal{Z}^2_j \leq a \right] = - \mathbb{E}\left[\mathcal{Z}_i \mid   \sum^d_{j=1} \eta_j \mathcal{Z}^2_j \leq a \right]
\end{align*}
for all $i = 1, \ldots, d$, which yields
\begin{align*}
     \mathbb{E}\left[\mathcal{Z}_i \mid   \sum^d_{j=1} \eta_j \mathcal{Z}^2_j \leq a \right] = 0.
\end{align*}
Therefore, the diagonal elements of $\text{Cov}(\bm{\mathcal{Z}} \mid   \sum^d_{j=1} \eta_j \mathcal{Z}^2_j \leq a)$ are given by
\begin{align*}
    \nu_{j, \eta} := \text{Var}\left(\mathcal{Z}_j \mid   \sum^d_{\ell=1} \eta_\ell \mathcal{Z}^2_\ell \leq a\right) = \mathbb{E}\left(\mathcal{Z}^2_j \mid   \sum^d_{\ell=1} \eta_\ell \mathcal{Z}^2_\ell \leq a\right).
\end{align*}
Meanwhile, for the $(\ell, m)$-elements of the covariance matrix where $\ell \neq m$, we can use the symmetry of the normal distribution to see that $\bm{\mathcal{Z}} \sim \bm{\mathcal{Z}}^*$ where $\mathcal{Z}^*_i = \mathcal{Z}_i$ for all $i \neq \ell$ and $\mathcal{Z}^*_\ell = - \mathcal{Z}_\ell$ so that
\begin{align*}
        \text{Cov}\left(\mathcal{Z}_\ell, \mathcal{Z}_m \mid   \sum^d_{j=1} \eta_j \mathcal{Z}^2_j \leq a \right) &= \text{Cov}\left(\mathcal{Z}^*_\ell, \mathcal{Z}^*_m \mid   \sum^d_{j=1} \eta_j (\mathcal{Z}^*_j)^2 \leq a \right) \\
        &= - \text{Cov}\left(\mathcal{Z}_\ell, \mathcal{Z}_m \mid   \sum^d_{j=1} \eta_j \mathcal{Z}^2_j \leq a \right)
\end{align*}
which implies
\begin{align*}
    \text{Cov}\left(\mathcal{Z}_\ell, \mathcal{Z}_m \mid   \sum^d_{j=1} \eta_j \mathcal{Z}^2_j \leq a \right) = 0
\end{align*}
for all $1 \leq \ell, m \leq d$ such that $\ell \neq m$. Putting everything together we can see that
\begin{align*}
     \text{Cov}\left(\bm{\mathcal{Z}} \mid   \sum^d_{j=1} \eta_j \mathcal{Z}^2_j \leq a\right) = \text{diag}\{(\nu_{j,\eta})_{1\le j\le d}\}.
\end{align*}
Therefore, under some regularity conditions we can see that
\begin{align*}
    \text{Cov}\left(\sqrt{n}\left(\bar{\mathbf{X}}_T - \bar{\mathbf{X}}_C\right) \mid \mathbf{X},  Q_\mathbf{A}(\sqrt{n} \widehat{\bm{\tau}}_{\mathbf{X}}) \leq a\right) &= \mathbf{\Sigma}^{\nicefrac{1}{2}} \mathbf{\Omega} \Big( \text{diag}\{(\nu_{j, \eta})_{1 \leq j \leq d} \} \Big) \mathbf{\Omega}^T \mathbf{\Sigma}^{\nicefrac{1}{2}}.
\end{align*}
Finally, we note that if $\mathbf{A}$ is positive semi-definite then $\mathbf{\Sigma}^{\nicefrac{1}{2}} \mathbf{A}\mathbf{\Sigma}^{\nicefrac{1}{2}}$ is positive semi-definite as well. This follows since for any $\mathbf{v} \in \mathbb{R}^d$, we have that $\mathbf{v}^T \mathbf{\Sigma}^{\nicefrac{1}{2}} \mathbf{A}\mathbf{\Sigma}^{\nicefrac{1}{2}} \mathbf{v} = \mathbf{u}^T \mathbf{A} \mathbf{u} \geq 0$ for $\mathbf{u} = \mathbf{\Sigma}^{\nicefrac{1}{2}} \mathbf{v}$. The same argument holds for positive definiteness. Therefore, when $\mathbf{A}$ is positive semi-definite it may be the case that some of the eigenvalues $\eta_1, \ldots, \eta_d$ are zero. Suppose that $\eta_k = 0$. Then, it follows that
\begin{align*}
    \nu_{k, \eta} = \mathbb{E}\left[\mathcal{Z}^2_k \mid   \sum^d_{j=1} \eta_j \mathcal{Z}^2_j \leq a\right] = \mathbb{E}\left[\mathcal{Z}^2_k \mid   \sum^d_{j \neq k} \eta_j \mathcal{Z}^2_j \leq a\right] = \mathbb{E}\left[\mathcal{Z}^2_k\right] = 1
\end{align*}
since $\mathcal{Z}_1, \ldots, \mathcal{Z}_d \sim \mathcal{N}(0, 1)$. Additionally, for positive semi-definite or positive-definite $\mathbf{A}$ it will be the case that $\nu_{j, \eta} \leq 1$ for all $j = 1, \ldots, d$. This follows since under the positive semi-definiteness assumption $\eta_1 \geq \cdots \geq \eta_d \geq 0$. Thus, by Lemma 4.2 of \cite{branson2021ridge} it follows that $\nu_{1, \eta}  \leq \cdots \leq \nu_{d, \eta} \leq 1$.
\end{proof}

\subsection{Proof of \texorpdfstring{\cref{opnorm}}{Theorem 2}} \label{opnorm_proof}
\begin{proof}[\textbf{Proof:}]
First, for notational convenience, let us define $\mathbf{\Psi} = \text{diag}\{(\nu_{j, \eta})_{1 \leq j \leq d} \}$. To begin, we work with the square root of the covariance under Quadratic Form Rerandomization,
\begin{align*}
    \text{Cov}\left(\sqrt{n}\widehat{\bm{\tau}}_{\mathbf{X}} \mid \mathbf{X}, Q_{\mathbf{A}}(\sqrt{n} \widehat{\bm{\tau}}_{\mathbf{X}}) \leq a\right)^{\nicefrac{1}{2}},
\end{align*}
as this will greatly simplify the notation used in the proof since for some $\mathbf{M} \in \mathbb{R}^{d \times d}$ the Frobenius norm can be written as $|| \mathbf{M} ||^2_F = \text{tr}(\mathbf{M}^T \mathbf{M})$. If we instead evaluate $\mathbf{M}^{\nicefrac{1}{2}}$, the Frobenius norm simplifies to  $||\mathbf{M}^{\nicefrac{1}{2}} ||^2_F = \text{tr}((\mathbf{M}^{\nicefrac{1}{2}})^T (\mathbf{M}^{\nicefrac{1}{2}})) = \text{tr}(\mathbf{M})$ when $\mathbf{M}$ is symmetric. Thus, under \cref{asymptotic_norm_condition} and \cref{general_balance_condition}, and as $n \to \infty$,
\begin{align*}
     \left|\left|\text{Cov}\left(\sqrt{n}\widehat{\bm{\tau}}_{\mathbf{X}} \mid \mathbf{X}, Q_{\mathbf{A}}(\sqrt{n} \widehat{\bm{\tau}}_{\mathbf{X}}) \leq a\right)^{\nicefrac{1}{2}}\right|\right|^2_F &= \text{tr}\left(\mathbf{\Sigma}^{\nicefrac{1}{2}} \mathbf{\Omega} \mathbf{\Psi} \mathbf{\Omega}^T \mathbf{\Sigma}^{\nicefrac{1}{2}}  \right)
\end{align*}
where $\mathbf{\Omega}$ is the orthogonal matrix of eigenvectors of $\mathbf{\Sigma}^{\nicefrac{1}{2}} \mathbf{A} \mathbf{\Sigma}^{\nicefrac{1}{2}}$. Then, using the fact that $\mathbf{\Gamma}^T \mathbf{\Sigma} \mathbf{\Gamma} = \mathbf{\Lambda}$, we can simplify our expression to
\begin{align*}
    \text{tr}\left(\mathbf{\Sigma}^{\nicefrac{1}{2}} \mathbf{\Omega} \mathbf{\Psi} \mathbf{\Omega}^T \mathbf{\Sigma}^{\nicefrac{1}{2}}  \right) &= \text{tr}\left(\mathbf{\Gamma} \mathbf{\Lambda}^{\nicefrac{1}{2}} \mathbf{\Gamma}^T \mathbf{\Omega} \mathbf{\Psi} \mathbf{\Omega}^T \mathbf{\Gamma} \mathbf{\Lambda}^{\nicefrac{1}{2}} \mathbf{\Gamma}^T  \right) \\
    &= \text{tr}\left( \mathbf{\Lambda} \mathbf{\Gamma}^T \mathbf{\Omega} \mathbf{\Psi} \mathbf{\Omega}^T \mathbf{\Gamma}  \right) \\
    &= \text{tr}\left(  \mathbf{\Omega}^T \mathbf{\Gamma} \mathbf{\Lambda} \mathbf{\Gamma}^T \mathbf{\Omega} \mathbf{\Psi}   \right)
\end{align*}
where the second and third equalities follow by the cyclic property of the trace. Then, by the AM-GM inequality it follows that
\begin{align*}
    \text{tr}\left(  \mathbf{\Omega}^T \mathbf{\Gamma} \mathbf{\Lambda} \mathbf{\Gamma}^T \mathbf{\Omega} \mathbf{\Psi}   \right) \geq d \cdot \text{det}\left(  \mathbf{\Omega}^T \mathbf{\Gamma} \mathbf{\Lambda} \mathbf{\Gamma}^T \mathbf{\Omega} \mathbf{\Psi}   \right)^{\nicefrac{1}{d}},
\end{align*}
where equality holds if and only if $\zeta_1 = \cdots = \zeta_d$ where $\zeta_1, \ldots, \zeta_d$ are the eigenvalues of $\mathbf{\Omega}^T \mathbf{\Gamma} \mathbf{\Lambda} \mathbf{\Gamma}^T \mathbf{\Omega} \mathbf{\Psi}$. Thus, our goal is to choose $\mathbf{A}$ (which has the effect of determining $\mathbf{\Psi}$) such that we obtain the lower bound.

From here, we make use of the approximation of $\nu_{j, \eta}$ derived in \cite{lu2023design} and discussed in \cref{q_definition_main_text}. That is, for positive-definite $\mathbf{A}$,
\begin{align*}
    \nu_{j, \eta} = \frac{p_d}{\eta_j} \text{det}(\mathbf{\Sigma}^{\nicefrac{1}{2}}\mathbf{A}\mathbf{\Sigma}^{\nicefrac{1}{2}})^{1/d} \alpha^{2/d} + o(\alpha^{2/d}).
\end{align*}
 Without loss of generality, suppose that $\text{det}(\mathbf{\Sigma}^{\nicefrac{1}{2}} \mathbf{A} \mathbf{\Sigma}^{\nicefrac{1}{2}}) = \prod^d_{j=1} \eta_j = 1$. This follows, since Quadratic Form Rerandomization is invariant to scaling, in the sense that for some scalar $\omega > 0$,
\begin{align*}
    \text{Cov}\left(\sqrt{n}\widehat{\bm{\tau}}_{\mathbf{X}} \mid \mathbf{X}, Q_{\mathbf{A}}(\sqrt{n} \widehat{\bm{\tau}}_{\mathbf{X}}) \leq a\right) &= \text{Cov}\left(\sqrt{n}\widehat{\bm{\tau}}_{\mathbf{X}} \mid \mathbf{X}, \omega Q_{\mathbf{A}}(\sqrt{n} \widehat{\bm{\tau}}_{\mathbf{X}}) \leq \omega a\right) \\
    &:= \text{Cov}\left(\sqrt{n}\widehat{\bm{\tau}}_{\mathbf{X}} \mid \mathbf{X}, \omega Q_{\mathbf{A}}(\sqrt{n} \widehat{\bm{\tau}}_{\mathbf{X}}) \leq a^{\prime}\right),
\end{align*}
so scaling $Q_{\mathbf{A}}(\sqrt{n} \widehat{\bm{\tau}}_{\mathbf{X}})$ only shifts the threshold selected for determining an acceptable randomization. Thus, we can simply re-scale each quadratic form by whatever constant makes their respective determinants one. Then, ignoring the remainder for the time being, we can see that $\mathbf{\Psi} \propto \bm{\eta}^{-1}$, since 
\begin{align*}
    \mathbf{\Psi} = \begin{pmatrix}
        \nu_{1, \eta} & & 0\\[-0.1in]
        & \ddots & \\[-0.1in]
        0 & & \nu_{d, \eta}
    \end{pmatrix} = p_d \alpha^{\nicefrac{2}{d}}\begin{pmatrix}
        1 / \eta_1 & & 0\\[-0.1in]
        & \ddots & \\[-0.1in]
        0 & & 1/ \eta_d
    \end{pmatrix} = p_d \alpha^{\nicefrac{2}{d}} \bm{\eta}^{-1}.
\end{align*}
Thus, we can see that our inequality simplifies to
\begin{align*}
    p_d \alpha^{\nicefrac{2}{d}} \text{tr}\left(  \mathbf{\Omega}^T \mathbf{\Gamma} \mathbf{\Lambda} \mathbf{\Gamma}^T \mathbf{\Omega} \bm{\eta}^{-1}   \right) \geq   p_d \alpha^{\nicefrac{2}{d}} d\cdot \text{det}\left(  \mathbf{\Omega}^T \mathbf{\Gamma} \mathbf{\Lambda} \mathbf{\Gamma}^T \mathbf{\Omega} \bm{\eta}^{-1}   \right)^{\nicefrac{1}{d}}.
\end{align*}
From here, we show that when $\mathbf{A} = \mathbf{I}_d$, we achieve the lower bound. To see this, we first note that now $\mathbf{\Sigma}^{\nicefrac{1}{2}} \mathbf{A} \mathbf{\Sigma}^{\nicefrac{1}{2}} = \mathbf{\Sigma}$, and thus $\bm{\eta} = \mathbf{\Lambda}$. Furthermore, since $\mathbf{\Omega}$ diagonalizes $\mathbf{\Sigma}^{\nicefrac{1}{2}} \mathbf{A} \mathbf{\Sigma}^{\nicefrac{1}{2}}$, it now follows that $\mathbf{\Omega} = \mathbf{\Gamma}$. Therefore,
\begin{align*}
    \mathbf{\Omega}^T \mathbf{\Gamma} \mathbf{\Lambda} \mathbf{\Gamma}^T \mathbf{\Omega} \bm{\eta}^{-1} = \mathbf{\Gamma}^T \mathbf{\Gamma} \mathbf{\Lambda} \mathbf{\Gamma}^T \mathbf{\Gamma} \bm{\Lambda}^{-1} = \mathbf{I}_d
\end{align*}
due to the orthogonality of $\mathbf{\Gamma}$. Clearly, we have achieved the lower bound since $\zeta_1 = \cdots = \zeta_d$. Finally, to account for the remainder term we note that for any $\mathbf{A}$, in summation form we may write
\begin{align*}
    \left|\left|\text{Cov}\left(\sqrt{n}\widehat{\bm{\tau}}_{\mathbf{X}} \mid \mathbf{X}, Q_{\mathbf{A}}(\sqrt{n} \widehat{\bm{\tau}}_{\mathbf{X}}) \leq a\right)^{\frac{1}{2}}\right|\right|^2_F = \text{tr}\left(  \mathbf{\Omega}^T \mathbf{\Gamma} \mathbf{\Lambda} \mathbf{\Gamma}^T \mathbf{\Omega} \mathbf{\Psi}   \right) = \sum^d_{i=1} \lambda_i \left( \sum^d_{k = 1}  \mathbf{P}^2_{i k } \nu_{k, \eta} \right)
\end{align*}
where we define $\mathbf{P} = \mathbf{\Omega}^T \mathbf{\Gamma}$. Therefore, we can now see that when we set $\mathbf{A} = \mathbf{I}_d$, for some $a^\prime$ such that the acceptance probability is also $\alpha$,
\begin{align*}
        \left|\left|\text{Cov}\left(\sqrt{n}\widehat{\bm{\tau}}_{\mathbf{X}} \mid \mathbf{X}, Q_{\mathbf{I}_d}(\sqrt{n} \widehat{\bm{\tau}}_{\mathbf{X}}) \leq a^\prime \right)^{\nicefrac{1}{2}}\right|\right|^2_F &= \sum^d_{i=1} \lambda_i \left( \sum^d_{k = 1}  \mathbf{P}^2_{i k } \nu_{k, \lambda} \right) \\
        &\overset{(i)}{=}  p_d \alpha^{2/d}  \sum^d_{i=1} \lambda_i \left( \sum^d_{k = 1}  \frac{\mathbf{P}^2_{i k }}{\lambda_i} \right)  + o(\alpha^{2/d}) \\
        &\overset{(ii)}{=} d\left( p_d \alpha^{2/d} \right)  + o(\alpha^{2/d}) \\
        &\overset{(iii)}{\leq} p_d \alpha^{2/d}  \sum^d_{i=1} \lambda_i \left( \sum^d_{k = 1}  \frac{\mathbf{P}^2_{i k }}{\eta_k} \right)  + o(\alpha^{2/d}).
\end{align*}
where $(i)$ follows by plugging in the approximation for $\nu_{k, \lambda}$ for all $k = 1, \ldots, d$ and $(ii)$ follows since we have shown that when $\mathbf{A} = \mathbf{I}_d$,
\begin{align*}
    \sum^d_{i=1} \lambda_i \left( \sum^d_{k = 1}  \frac{\mathbf{P}^2_{i k }}{\lambda_i} \right) = \text{tr}\left(  \mathbf{\Gamma}^T \mathbf{\Gamma} \mathbf{\Lambda} \mathbf{\Gamma}^T \mathbf{\Gamma} \bm{\Lambda}^{-1}   \right) = \text{tr}(\mathbf{I}_d) = d.
\end{align*}
Finally, $(iii)$ follows since we have shown that for any other choice of positive-definite $\mathbf{A}$,
\begin{align*}
    d\left( p_d \alpha^{2/d} \right) \leq p_d \alpha^{\nicefrac{2}{d}} \text{tr}\left(  \mathbf{\Omega}^T \mathbf{\Gamma} \mathbf{\Lambda} \mathbf{\Gamma}^T \mathbf{\Omega} \bm{\eta}^{-1}   \right) =  p_d \alpha^{2/d}  \sum^d_{i=1} \lambda_i \left( \sum^d_{k = 1}  \frac{\mathbf{P}^2_{i k }}{\eta_k} \right).
\end{align*}
From here, we recognize that for any choice of positive-definite  $\mathbf{A}$,
\begin{align*}
    p_d \alpha^{2/d}  \sum^d_{i=1} \lambda_i \left( \sum^d_{k = 1}  \frac{\mathbf{P}^2_{i k }}{\eta_k} \right) + o(\alpha^{2/d}) = \left|\left|\text{Cov}\left(\sqrt{n}\widehat{\bm{\tau}}_{\mathbf{X}} \mid \mathbf{X}, Q_{\mathbf{A}}(\sqrt{n} \widehat{\bm{\tau}}_{\mathbf{X}}) \leq a\right)^{\nicefrac{1}{2}}\right|\right|^2_F.
\end{align*}
Note that the summation of finitely many $o(\alpha^{2/d})$ terms is still $o(\alpha^{2/d})$. Thus, we may aggregate remainder terms to see that
\begin{align*}
    \left(p_d \alpha^{2/d}  \sum^d_{i=1} \lambda_i \left( \sum^d_{k = 1}  \frac{\mathbf{P}^2_{i k }}{\eta_k} \right) \right) + o(\alpha^{2/d}) &=  \left(p_d \alpha^{2/d}  \sum^d_{i=1} \lambda_i \left( \sum^d_{k = 1}  \frac{\mathbf{P}^2_{i k }}{\eta_k} \right) + o(\alpha^{2/d}) \right) + o(\alpha^{2/d}) \\
    &= \left|\left|\text{Cov}\left(\sqrt{n}\widehat{\bm{\tau}}_{\mathbf{X}} \mid \mathbf{X}, Q_{\mathbf{A}}(\sqrt{n} \widehat{\bm{\tau}}_{\mathbf{X}}) \leq a\right)^{\nicefrac{1}{2}}\right|\right|^2_F  + o(\alpha^{2/d}).
\end{align*}
Putting everything together, it follows that for all positive-definite $\mathbf{A}$,
    \begin{align*}
        ||\text{Cov}(\sqrt{n} \widehat{\bm{\tau}}_{\mathbf{X}} \mid \mathbf{X}, Q_{\mathbf{I}_d}(\sqrt{n} \widehat{\bm{\tau}}_{\mathbf{X}}) \leq a^\prime)^{\nicefrac{1}{2}}||^2_{F} \leq ||\text{Cov}(\sqrt{n} \widehat{\bm{\tau}}_{\mathbf{X}} \mid \mathbf{X}, Q_{\mathbf{A}}(\sqrt{n} \widehat{\bm{\tau}}_{\mathbf{X}}) \leq a)^{\nicefrac{1}{2}}||^2_{F} + o(\alpha^{\nicefrac{2}{d}})
    \end{align*}
where the inequality holds up to the size of the remainder for a sufficiently small $\alpha$. From here, we can easily extend this result to $||\text{Cov}\left(\sqrt{n}\widehat{\bm{\tau}}_{\mathbf{X}} \mid \mathbf{X}, Q_{\mathbf{A}}(\sqrt{n} \widehat{\bm{\tau}}_{\mathbf{X}}) \leq a\right)||_F$. Because
\begin{align*}
    \text{Cov}\left(\sqrt{n}\widehat{\bm{\tau}}_{\mathbf{X}} \mid \mathbf{X}, Q_{\mathbf{A}}(\sqrt{n} \widehat{\bm{\tau}}_{\mathbf{X}}) \leq a\right)
\end{align*}
is positive semi-definite, it follows that
\begin{align*}
    ||\text{Cov}(\sqrt{n}\widehat{\bm{\tau}}_{\mathbf{X}} \mid \mathbf{X}, Q_{\mathbf{A}}(\sqrt{n} \widehat{\bm{\tau}}_{\mathbf{X}}) \leq a)||_F \geq \frac{
    \text{tr}(\text{Cov}\left(\sqrt{n}\widehat{\bm{\tau}}_{\mathbf{X}} \mid \mathbf{X}, Q_{\mathbf{A}}(\sqrt{n} \widehat{\bm{\tau}}_{\mathbf{X}}) \leq a\right)) }{\sqrt{d}}.
\end{align*}
Consequently,
\begin{align*}
    ||\text{Cov}(\sqrt{n}\widehat{\bm{\tau}}_{\mathbf{X}} \mid \mathbf{X}, Q_{\mathbf{A}}(\sqrt{n} \widehat{\bm{\tau}}_{\mathbf{X}}) \leq a)||_F \geq \sqrt{d} \,  p_d \text{det}(\mathbf{\Sigma})^{1/d} \alpha^{\nicefrac{2}{d}} + o(\alpha^{\nicefrac{2}{d}}).
\end{align*}
Then, because under Euclidean Rerandomization, $\mathbf{\Omega}=\mathbf{\Gamma}$ and $\eta_j=\lambda_j$, it follows that 
\begin{align*}
    \text{Cov}(\sqrt{n}\widehat{\bm{\tau}}_{\mathbf{X}} \mid \mathbf{X}, Q_{\mathbf{I}_d}(\sqrt{n} \widehat{\bm{\tau}}_{\mathbf{X}}) \leq a^\prime) &= \mathbf{\Sigma}^{\nicefrac{1}{2}} \mathbf{\Gamma}
    \text{diag}\{(\nu_{j,\lambda})_{1\leq j\leq d}\} \mathbf{\Gamma}^T \mathbf{\Sigma}^{\nicefrac{1}{2}} \\
    &= p_d \operatorname{det}(\mathbf{\Sigma})^{1/d} \alpha^{\nicefrac{2}{d}} \mathbf{I}_d + o(\alpha^{\nicefrac{2}{d}})
\end{align*}
and therefore, $||\text{Cov}\left(\sqrt{n}\widehat{\bm{\tau}}_{\mathbf{X}} \mid \mathbf{X}, Q_{\mathbf{I}_d}(\sqrt{n} \widehat{\bm{\tau}}_{\mathbf{X}}) \leq a^\prime\right)||_F = \sqrt{d}\, p_d \text{det}(\mathbf{\Sigma})^{1/d} \alpha^{\nicefrac{2}{d}} + o(\alpha^{\nicefrac{2}{d}})$. Thus, putting everything together, for every positive-definite $\mathbf{A}$,
    \begin{align*}
        ||\text{Cov}(\sqrt{n} \widehat{\bm{\tau}}_{\mathbf{X}} \mid \mathbf{X}, Q_{\mathbf{I}_d}(\sqrt{n} \widehat{\bm{\tau}}_{\mathbf{X}}) \leq a^\prime)||_{F} \leq ||\text{Cov}(\sqrt{n} \widehat{\bm{\tau}}_{\mathbf{X}} \mid \mathbf{X}, Q_{\mathbf{A}}(\sqrt{n} \widehat{\bm{\tau}}_{\mathbf{X}}) \leq a)||_{F} + o(\alpha^{\nicefrac{2}{d}}).
    \end{align*}

\noindent \textbf{Remark:} In some sense, we have implicitly used the assumption that $\mathbf{A}$ is positive-definite by applying the approximation $\nu_{j, \eta} = (p_d / \eta_j) \text{det}(\mathbf{\Sigma}^{\nicefrac{1}{2}}\mathbf{A}\mathbf{\Sigma}^{\nicefrac{1}{2}})^{1/d} \alpha^{2/d} + o(\alpha^{2/d})$ for all $j = 1, \ldots, d$ since for positive semi-definite $\mathbf{A}$, some eigenvalues will be zero (and thus the ratio $1 / \eta_j$ will be undefined). However, this result holds across all positive semi-definite $\mathbf{A}$ as well. To see this, let $\mathcal C_{\mathbf{A}} = \text{Cov}(\sqrt n\widehat{\bm{\tau}}_{\mathbf{X}} \mid \mathbf{X},
Q_{\mathbf{A}}(\sqrt{n} \widehat{\bm{\tau}}_{\mathbf{X}}) \leq a)$. Suppose that $\mathbf{A} \neq \mathbf{0}$ has rank $k < d$. Then, observe that we may write 
\begin{align*}
    \mathbf{\Sigma}^{\nicefrac{1}{2}} \mathbf{A} \mathbf{\Sigma}^{\nicefrac{1}{2}} = \mathbf{\Omega} \begin{pmatrix} \text{diag}(\eta_1,\ldots,\eta_k) & 0\\
    0 & \mathbf{0}_{d-k}
    \end{pmatrix}
\mathbf{\Omega}^T,
\end{align*}
so we may create the partition $\mathbf{\Omega}=(\mathbf{\Omega}_1,\mathbf{\Omega}_0)$, where $\mathbf{\Omega}_0\in\mathbb{R} ^{d\times(d-k)}$ spans the nullspace of $\mathbf{\Sigma}^{\nicefrac{1}{2}} \mathbf{A} \mathbf{\Sigma}^{\nicefrac{1}{2}}$. Since the acceptance event depends only on the first $k$ coordinates, for the remaining $d - k$ coordinates it follows that $\nu_{j,\eta}=1$. Thus, by \cref{Theorem1}, we know that
\begin{align*}
    \mathcal{C}_{\mathbf{A}} = \mathbf{\Sigma}^{\nicefrac{1}{2}}\mathbf{\Omega}_1 \Big( \text{diag}(\nu_{1,\eta},\ldots,\nu_{k,\eta}) \Big) \mathbf{\Omega}_1^T\mathbf{\Sigma}^{\nicefrac{1}{2}} + \mathbf{\Sigma}^{\nicefrac{1}{2}}\mathbf{\Omega}_0\mathbf{\Omega}_0^T \mathbf{\Sigma}^{\nicefrac{1}{2}}.
\end{align*}
From here, note the first term is positive semi-definite and therefore its trace is non-negative. Therefore, we can focus on the remainder term to see that
\begin{align*}
    \text{tr}(\mathcal C_{\mathbf{A}}) \geq \text{tr}\left( \mathbf{\Sigma}^{\nicefrac{1}{2}}\mathbf{\Omega}_0\mathbf{\Omega}_0^T \mathbf{\Sigma}^{\nicefrac{1}{2}} \right) = \text{tr}\left(\mathbf{\Omega}_0^T\mathbf{\Sigma}\mathbf{\Omega}_0\right) \geq (d-k)\lambda_{\min}(\mathbf{\Sigma})
\end{align*}
where we say $\lambda_{\min}(\mathbf{\Sigma})$ is the minimum eigenvalue of $\mathbf{\Sigma}$. Consequently,
\begin{align*}
    ||\mathcal{C}_{\mathbf{A}}||_F \geq \frac{\text{tr}(\mathcal{C}_{\mathbf{A}})}{\sqrt{d}} \geq \frac{(d-k)\lambda_{\min}(\mathbf{\Sigma})}{\sqrt{d}}
\end{align*}
Thus every singular nonzero $\mathbf{A} \succeq 0$ leaves a nonvanishing amount of covariance, uniformly in $\alpha$. On the other hand, under Euclidean Rerandomization,
\begin{align*}
    ||\mathcal{C}_{\mathbf{I}_d}||_F = \sqrt{d} \,p_d\text{det}(\mathbf{\Sigma})^{1/d}\alpha^{2/d} + o(\alpha^{2/d}) = o(1).
\end{align*}
Consequently, for all sufficiently small $\alpha$, $||\mathcal{C}_{\mathbf{I}_d}||_F \leq ||\mathcal{C}_{\mathbf{A}}||_F$ for every singular nonzero $\mathbf{A} \succeq 0$. Combining this with the
positive-definite case proves the result over $\mathbf{S}^d_{+}$. 

\noindent \textbf{Remark:} Although we show that the Frobenius norm of $\text{Cov}\left(\sqrt{n}\widehat{\bm{\tau}}_{\mathbf{X}} \mid \mathbf{X}, Q_{\mathbf{A}}(\sqrt{n} \widehat{\bm{\tau}}_{\mathbf{X}}) \leq a\right)$ is minimized when $\mathbf{A} = \mathbf{I}_d$, this result actually holds for all $\mathbf{A} \propto \mathbf{I}_d$. To see this, we must be careful, since in our proof we rescale $Q_{\mathbf{A}}(\sqrt{n} \widehat{\bm{\tau}}_{\mathbf{X}})$ such that $\text{det}(\mathbf{\Sigma}^{\nicefrac{1}{2}} \mathbf{A} \mathbf{\Sigma}^{\nicefrac{1}{2}}) = 1$.

Let $c_1 = \text{det}( \mathbf{\Sigma}^{\nicefrac{1}{2}}\mathbf{A} \mathbf{\Sigma}^{\nicefrac{1}{2}})^{-1/d}$, let $c_2 > 0$ be some scalar, and let $c = c_1 c_2$. Furthermore, recall that the eigenvalues of $\mathbf{\Sigma}^{\nicefrac{1}{2}} (c\mathbf{A}) \mathbf{\Sigma}^{\nicefrac{1}{2}}$ are given by $c \eta_1, \ldots, c\eta_d$. Then, we can see that the squared Frobenius norm of $\text{Cov}\left(\sqrt{n}\widehat{\bm{\tau}}_{\mathbf{X}} \mid \mathbf{X}, c Q_{\mathbf{A}}(\sqrt{n} \widehat{\bm{\tau}}_{\mathbf{X}}) \leq a\right)^{\nicefrac{1}{2}}$ is given by
\begin{align*}
     \text{tr}\left(  \mathbf{\Omega}^T \mathbf{\Gamma} \mathbf{\Lambda} \mathbf{\Gamma}^T \mathbf{\Omega} \mathbf{\Psi}   \right) &= \left(p_d \alpha^{\nicefrac{2}{d}}\text{det}(\mathbf{\Sigma}^{\nicefrac{1}{2}}(c\mathbf{A})\mathbf{\Sigma}^{\nicefrac{1}{2}})^{1/d} \right) \text{tr}\left(  \mathbf{\Omega}^T \mathbf{\Gamma} \mathbf{\Lambda} \mathbf{\Gamma}^T \mathbf{\Omega} (c \bm{\eta})^{-1} \right) \\
    &= \left(p_d \alpha^{\nicefrac{2}{d}} c_2 \right) \text{tr}\left(  \mathbf{\Omega}^T \mathbf{\Gamma} \mathbf{\Lambda} \mathbf{\Gamma}^T \mathbf{\Omega}  (c \bm{\eta})^{-1}  \right) \\
    &\geq   \left(p_d \alpha^{\nicefrac{2}{d}} c_2 d \right)  \text{det}\left(  \mathbf{\Omega}^T \mathbf{\Gamma} \mathbf{\Lambda} \mathbf{\Gamma}^T \mathbf{\Omega} (c \bm{\eta})^{-1}   \right)^{\nicefrac{1}{d}}
\end{align*}
since $\text{det}(\mathbf{\Sigma}^{\nicefrac{1}{2}} (c\mathbf{A}) \mathbf{\Sigma}^{\nicefrac{1}{2}})^{1/d} = c_2$. Then, we can see that plugging in $\mathbf{A} = \mathbf{I}_d$, it follows that $c =  \text{det}(\mathbf{\Sigma})^{-1/d} c_2$. Consequently, 
\begin{align*}
    \left(p_d \alpha^{\nicefrac{2}{d}} c_2 \right) \text{tr}\left(  \mathbf{\Omega}^T \mathbf{\Gamma} \mathbf{\Lambda} \mathbf{\Gamma}^T \mathbf{\Omega}  (c \bm{\Lambda})^{-1}  \right) =  \frac{p_d \alpha^{\nicefrac{2}{d}} c_2 d}{c} = p_d \alpha^{\nicefrac{2}{d}} d \, \text{det}(\mathbf{\Sigma})^{1/d}
\end{align*}
and 
\begin{align*}
    \left(p_d \alpha^{\nicefrac{2}{d}} c_2 d \right)  \text{det}\left(  \mathbf{\Omega}^T \mathbf{\Gamma} \mathbf{\Lambda} \mathbf{\Gamma}^T \mathbf{\Omega} (c \mathbf{\Lambda})^{-1}   \right)^{\nicefrac{1}{d}} = \frac{p_d \alpha^{\nicefrac{2}{d}} c_2 d}{c} =  p_d \alpha^{\nicefrac{2}{d}} d \, \text{det}(\mathbf{\Sigma})^{1/d}.
\end{align*}
Thus, we achieve the lower bound even under arbitrary scaling of $\mathbf{I}_d$.

\end{proof}

\subsection{Proof of \texorpdfstring{\cref{totalvarredux}}{Theorem 3}} \label{totalvarredux_proof}
\begin{proof}[\textbf{Proof:}]
To begin, note that maximizing $\sum^d_{j=1}(1 - \nu_{j, \eta})$ is equivalent to minimizing $\sum^d_{j=1} \nu_{j, \eta}$. Then, the remainder of the proof will follow similarly to that of \cref{opnorm}. Under \cref{asymptotic_norm_condition} and \cref{general_balance_condition}, and as $n \to \infty$, by the AM-GM inequality it follows that
\begin{align*}
    \sum^d_{j=1} \nu_{j, \eta} = \text{tr}(\mathbf{\Psi})  \geq d  \cdot \text{det}(\mathbf{\Psi})^{\nicefrac{1}{d}}
\end{align*}
where equality holds if and only if the eigenvalues of $\mathbf{\Psi}$ are all equal. Next, recall that (up to a remainder term), after applying the approximation derived in \cite{lu2023design} it follows that when $\mathbf{A}$ is positive-definite, $\mathbf{\Psi} \propto \bm{\eta}^{-1}$, where without loss of generality we suppose that $\text{det}(\mathbf{\Sigma}^{\nicefrac{1}{2}}\mathbf{A}\mathbf{\Sigma}^{\nicefrac{1}{2}}) = 1$. Then, we have that
\begin{align*}
     \text{tr}(\mathbf{\Psi}) =  p_d \alpha^{\nicefrac{2}{d}} \text{tr}\left( \bm{\eta}^{-1}   \right) \geq p_d \alpha^{\nicefrac{2}{d}} d \cdot \text{det}(\bm{\eta}^{-1})^{\nicefrac{1}{d}}
\end{align*}
where equality holds if and only if $1/\eta_1 = \cdots = 1 / \eta_d$. Clearly, we must choose $\mathbf{A}$ such that $\bm{\eta}^{-1} = \mathbf{I}_d$. This is satisfied when $\mathbf{A} = \mathbf{\Sigma}^{-1}$, since then $\mathbf{\Sigma}^{\nicefrac{1}{2}}\mathbf{A}\mathbf{\Sigma}^{\nicefrac{1}{2}} = \mathbf{\Sigma}^{\nicefrac{1}{2}} \mathbf{\Sigma}^{-1}\mathbf{\Sigma}^{\nicefrac{1}{2}} = \mathbf{I}_d$. Then, to account for the remainder term we can see that when $\mathbf{A} = \mathbf{\Sigma}^{-1}$,
    \begin{align*}
        \sum^d_{j=1} v_a &= d p_d \alpha^{2/d} + o(\alpha^{2/d}) \leq  p_d \alpha^{2/d} \sum^d_{j=1} \frac{1}{\eta_j} + o(\alpha^{2/d})
    \end{align*}
where again (following the steps outlined in \cref{opnorm}) we aggregate remainder terms and the inequality holds up to the remainder, for a sufficiently small $\alpha$. Therefore, for all positive-definite $\mathbf{A}$,
    \begin{align*}
         \sum^d_{j=1} v_a  \leq \sum^d_{j=1} \nu_{j, \eta} + o(\alpha^{2/d}).
    \end{align*}
Note that following the remarks in \cref{opnorm}, this result can be extended to hold across all positive semi-definite $\mathbf{A} \in \mathbb{R}^{d \times d}$, and for all $\mathbf{A} \propto \mathbf{\Sigma}^{-1}$.
\end{proof}
\subsection{Proof of \texorpdfstring{\cref{diff_in_vars}}{Theorem 4}} \label{diff_in_vars_proof}
\begin{proof}[\textbf{Proof:}]
Recall from our discussion in \cref{asymptotic_theory_sec} that (under \cref{asymptotic_norm_condition} and \cref{general_balance_condition}, as $n \to \infty$) we can leverage \cref{tau_dist}, and use the fact that $\varepsilon$ is independent of $\mathbf{Z}$ for $\mathbf{Z} \sim \mathcal{N}(\mathbf{0}, \mathbf{\Sigma})$ and $\varepsilon \sim \mathcal{N}(0, V_{\tau \tau}(1 - R^2))$ to write the variance of $\sqrt{n} \left(\widehat{\tau} - \tau \right)$ under Quadratic Form Rerandomization as
\begin{align*}
    \text{Var}\left(  \sqrt{n} \left(\widehat{\tau} - \tau \right) \mid \mathbf{X}, Q_{\mathbf{A}}(\sqrt{n} \widehat{\bm{\tau}}_{\mathbf{X}}) \leq a \right) &= \text{Var}\left(\varepsilon\right) + \text{Var}\left(\mathbf{V}_{\tau x} \mathbf{\Sigma}^{-1} \mathbf{Z} \mid  \mathbf{Z}^T \mathbf{A} \mathbf{Z} \leq a \right) \\
    &= \text{Var}\left(\varepsilon \right) + \mathbf{V}_{\tau x} \mathbf{\Sigma}^{-1} \text{Cov}\left( \mathbf{Z} \mid Q_{\mathbf{A}}(\mathbf{Z}) \leq a \right) \mathbf{\Sigma}^{-1} \mathbf{V}_{x \tau } \\
    &= \text{Var}\left(\varepsilon\right) + \mathbf{V}_{\tau x} \mathbf{\Sigma}^{-1} \left(\mathbf{\Sigma}^{\nicefrac{1}{2}} \mathbf{\Omega} \mathbf{\Psi} \mathbf{\Omega}^T \mathbf{\Sigma}^{\nicefrac{1}{2}} \right) \mathbf{\Sigma}^{-1} \mathbf{V}_{x \tau }
\end{align*}
where again $\mathbf{\Psi} = \text{diag}\{(\nu_{j, \eta})_{1 \leq j \leq d} \}$ and the last equality follows by applying \cref{Theorem1}. From here, we define $\bm{\beta} = \mathbf{\Sigma}^{-1} \mathbf{V}_{x \tau }$ and $\bm{\beta}_Z = \mathbf{V}^T \bm{\beta}$ to be the rotation of $\bm{\beta}$ along the principal components of $\mathbf{X}$. Then, it follows that the difference
\begin{align*}
    \Delta := \text{Var}\left(  \sqrt{n} \left(\widehat{\tau} - \tau \right) \mid \mathbf{X} \right) - \text{Var}\left(  \sqrt{n} \left(\widehat{\tau} - \tau \right) \mid \mathbf{X}, Q_{\mathbf{A}}(\sqrt{n} \widehat{\bm{\tau}}_{\mathbf{X}}) \leq a \right)
\end{align*}
is given by
\begin{align*}
    \Delta &= \bm{\beta}^T \mathbf{\Sigma} \bm{\beta} - \bm{\beta}^T \mathbf{\Sigma}^{\nicefrac{1}{2}} \mathbf{\Omega} \Big( \text{diag}\{(\nu_{j, \eta})_{1 \leq j \leq d} \} \Big) \mathbf{\Omega}^T \mathbf{\Sigma}^{\nicefrac{1}{2}} \bm{\beta} \\
    &= \bm{\beta}^T \mathbf{\Sigma}^{\nicefrac{1}{2}} \left(\mathbf{I}_d - \mathbf{\Omega} \Big( \text{diag}\{(\nu_{j, \eta})_{1 \leq j \leq d} \} \Big) \mathbf{\Omega}^T  \right) \mathbf{\Sigma}^{\nicefrac{1}{2}} \bm{\beta} \\
    &= \bm{\beta}^T_Z \mathbf{V}^T ( \mathbf{V} \mathbf{\Lambda}^{\nicefrac{1}{2}} \mathbf{V}^T) \left(\mathbf{\Omega} \mathbf{\Omega}^T - \mathbf{\Omega} \Big( \text{diag}\{(\nu_{j, \eta})_{1 \leq j \leq d} \} \Big) \mathbf{\Omega}^T  \right) ( \mathbf{V} \mathbf{\Lambda}^{\nicefrac{1}{2}} \mathbf{V}^T) \mathbf{V} \bm{\beta}_Z \\
    &= \bm{\beta}^T_Z  \mathbf{\Lambda}^{\nicefrac{1}{2}} \mathbf{V}^T\mathbf{\Omega} \Big( \mathbf{I}_d  - \text{diag}\{(\nu_{j, \eta})_{1 \leq j \leq d} \} 
 \Big)\mathbf{\Omega}^T   \mathbf{V} \mathbf{\Lambda}^{\nicefrac{1}{2}}  \bm{\beta}_Z.
\end{align*}
Thus, we may equivalently write $\Delta$ as the summation
\begin{align*}
    \Delta = \sum^d_{j=1} \left(\mathbf{\Omega}^T   \mathbf{V} \mathbf{\Lambda}^{\nicefrac{1}{2}}  \bm{\beta}_Z \right)^2_j (1 - \nu_{j, \eta}) = \sum^d_{j=1} \left((\mathbf{\Omega}^T   \mathbf{V})_j \mathbf{\Lambda}^{\nicefrac{1}{2}}  \bm{\beta}_Z \right)^2 (1 - \nu_{j, \eta}).
\end{align*}
where $(\mathbf{\Omega}^T   \mathbf{V})_j$ is the $j$th row of $\mathbf{\Omega}^T \mathbf{V}$. The second equality follows since
\begin{align*}
    \mathbf{\Lambda}^{\nicefrac{1}{2}}  \bm{\beta}_Z = \begin{pmatrix}
        \lambda^{\nicefrac{1}{2}}_1 & & 0\\
        & \ddots & \\
        0 & & \lambda^{\nicefrac{1}{2}}_d
    \end{pmatrix} \begin{pmatrix}
        \beta_{Z, 1} \\
        \vdots \\
        \beta_{Z, d}
    \end{pmatrix} = \begin{pmatrix}
        \lambda^{\nicefrac{1}{2}}_1 \beta_{Z, 1} \\
        \vdots \\
       \lambda^{\nicefrac{1}{2}}_d \beta_{Z, d}
    \end{pmatrix},
\end{align*}
so (if we say $\mathbf{P} = \mathbf{\Omega}^T   \mathbf{V}$) it is clear that
\begin{align*}
    \mathbf{P}\mathbf{\Lambda}^{\nicefrac{1}{2}}  \bm{\beta}_Z = \begin{pmatrix}
        \mathbf{P}_1 \\
        \vdots \\
        \mathbf{P}_d
    \end{pmatrix} \begin{pmatrix}
        \lambda^{\nicefrac{1}{2}}_1 \beta_{Z, 1} \\
        \vdots \\
       \lambda^{\nicefrac{1}{2}}_d \beta_{Z, d}
    \end{pmatrix} = \begin{pmatrix}
        \mathbf{P}_1 \mathbf{\Lambda}^{\nicefrac{1}{2}} \bm{\beta}_Z \\
        \vdots \\
       \mathbf{P}_d \mathbf{\Lambda}^{\nicefrac{1}{2}} \bm{\beta}_Z 
    \end{pmatrix}.
\end{align*}
Thus, the $j$th element of $\mathbf{\Omega}^T   \mathbf{V} \mathbf{\Lambda}^{\nicefrac{1}{2}}  \bm{\beta}_Z$ is simply $(\mathbf{\Omega}^T   \mathbf{V})_j \mathbf{\Lambda}^{\nicefrac{1}{2}}  \bm{\beta}_Z$. Furthermore, note that if $\mathbf{\Sigma}$ and $\mathbf{\Sigma}^{\nicefrac{1}{2}} \mathbf{A} \mathbf{\Sigma}^{\nicefrac{1}{2}}$ share an eigenbasis, then $\mathbf{\Omega} = \mathbf{\Gamma}$, where $\mathbf{\Gamma}$ is the orthogonal matrix of eigenvectors of $\mathbf{\Sigma}$. In this case, $\mathbf{V}$ and $\mathbf{\Gamma}$ are identical, up to a scaling of $\pm 1$ --- this follows since
\begin{align*}
   \mathbf{\Sigma} \propto  \mathbf{X}^T \mathbf{X} = (\mathbf{U}\mathbf{D}\mathbf{V}^T)^T(\mathbf{U}\mathbf{D}\mathbf{V}^T) = \mathbf{V} \mathbf{D}^2 \mathbf{V}^T.
\end{align*}
Therefore, $\mathbf{\Gamma}^T \mathbf{V}$ will be a diagonal matrix with $\pm 1$ on each diagonal entry. Thus, when $\mathbf{\Sigma}$ and $\mathbf{\Sigma}^{\nicefrac{1}{2}} \mathbf{A} \mathbf{\Sigma}^{\nicefrac{1}{2}}$ share an eigenbasis,
\begin{align*}
    \text{Var}\left(\sqrt{n} \left(\widehat{\tau} - \tau \right) \mid \mathbf{X} \right) - \text{Var}\left(\sqrt{n} \left(\widehat{\tau} - \tau \right) \mid \mathbf{X}, Q_{\mathbf{A}}(\sqrt{n} \widehat{\bm{\tau}}_{\mathbf{X}}) \leq a \right) &= \sum^d_{j=1} \lambda_j \beta^2_{Z, j} (1 - \nu_{j, \eta}).
\end{align*}
Furthermore, note that since $\nu_{j, \eta} \leq 1$ for all $j = 1, \ldots, d$ it follows that 
\begin{align*}
    \sum^d_{j=1} \left((\mathbf{\Omega}^T   \mathbf{V})_j \mathbf{\Lambda}^{\nicefrac{1}{2}}  \bm{\beta}_Z \right)^2 (1 - \nu_{j, \eta}) \geq 0,
\end{align*}
where equality holds if $\bm{\beta}_Z = 0$, thereby implying that Quadratic Form Rerandomization always weakly decreases the variance of $\widehat{\tau}$. The inequality is strict if and only if there exists at least one $j$ such that $(\mathbf{\Omega}^T\mathbf{\Sigma}^{\nicefrac{1}{2}}\bm{\beta})_j\neq 0$ and $\nu_{j,\eta} < 1$.
\end{proof}

\subsection{Proof of \texorpdfstring{\cref{optimal_a_outcomes}}{Theorem 5}} \label{optimal_a_outcomes_proof}
\begin{proof}[\textbf{Proof:}]
Recall that by \cref{diff_in_vars} we know that under \cref{asymptotic_norm_condition} and \cref{general_balance_condition}, as $n \to \infty$, it follows that
 \begin{align*}
        \text{Var}(\sqrt{n} \left(\widehat{\tau} - \tau \right) \mid \mathbf{X}) - \text{Var}\left(  \sqrt{n} \left(\widehat{\tau} - \tau \right) \mid \mathbf{X}, Q_{\mathbf{A}}(\sqrt{n} \widehat{\bm{\tau}}_{\mathbf{X}}) \leq a \right)  &= \sum^d_{j=1} (\mathbf{P}_j \mathbf{\Lambda}^{\nicefrac{1}{2}} \bm{\beta}_Z)^2 (1 - \nu_{j, \eta}) 
    \end{align*}
    where $\bm{\beta}_Z = \mathbf{V}^T \bm{\beta} = \mathbf{V}^T \mathbf{\Sigma}^{-1} \mathbf{V}_{x \tau}$, $\mathbf{\Lambda}$ is the diagonal matrix of eigenvalues of $\mathbf{\Sigma}$, and $\mathbf{P} = \mathbf{\Omega}^T \mathbf{V}$. Therefore, in order to reduce the variance of $\widehat{\tau}$ as much as possible relative to complete randomization, we must focus on minimizing $\sum^d_{j=1} (\mathbf{P}_j \mathbf{\Lambda}^{\nicefrac{1}{2}} \bm{\beta}_Z)^2 \nu_{j, \eta}$. Observe that we may write this summation as
    \begin{align*}
        \sum^d_{j=1} (\mathbf{P}_j \mathbf{\Lambda}^{\nicefrac{1}{2}} \bm{\beta}_Z)^2 \nu_{j, \eta} &= \bm{\beta}^T_Z \mathbf{\Lambda}^{\nicefrac{1}{2}} \mathbf{P}^T \mathbf{\Psi} \mathbf{P} \mathbf{\Lambda}^{\nicefrac{1}{2}} \bm{\beta}_Z \\[-0.1in]
        &= \text{tr}\left(\bm{\beta}^T_Z \mathbf{\Lambda}^{\nicefrac{1}{2}} \mathbf{P}^T \mathbf{\Psi} \mathbf{P} \mathbf{\Lambda}^{\nicefrac{1}{2}} \bm{\beta}_Z \right) \\
        &= \text{tr}\left(\mathbf{P} \mathbf{\Lambda}^{\nicefrac{1}{2}} \bm{\beta}_Z \bm{\beta}^T_Z \mathbf{\Lambda}^{\nicefrac{1}{2}} \mathbf{P}^T \mathbf{\Psi}    \right)
    \end{align*}
    where the second equality follows since $\bm{\beta}^T_Z \mathbf{\Lambda}^{\nicefrac{1}{2}} \mathbf{P}^T \mathbf{\Psi} \mathbf{P} \mathbf{\Lambda}^{\nicefrac{1}{2}} \bm{\beta}_Z$ is a scalar, and the third equality follows by the cyclic property of the trace. Importantly, note that 
    \begin{align*}
        \mathbf{P} \mathbf{\Lambda}^{\nicefrac{1}{2}} \bm{\beta}_Z \bm{\beta}^T_Z \mathbf{\Lambda}^{\nicefrac{1}{2}} \mathbf{P}^T &= (\mathbf{\Omega}^T \mathbf{V}) \mathbf{\Lambda}^{\nicefrac{1}{2}} (\mathbf{V}^T \bm{\beta}) (\mathbf{V}^T \bm{\beta})^T \mathbf{\Lambda}^{\nicefrac{1}{2}} (\mathbf{\Omega}^T \mathbf{V})^T \\
        &= \mathbf{\Omega}^T \mathbf{\Sigma}^{\nicefrac{1}{2}} \bm{\beta} \bm{\beta}^T \mathbf{\Sigma}^{\nicefrac{1}{2}} \mathbf{\Omega}
    \end{align*}
    where we have used the fact that $\mathbf{V} \mathbf{\Lambda}^{\nicefrac{1}{2}} \mathbf{V}^T = \mathbf{\Sigma}^{\nicefrac{1}{2}}$. Furthermore, recall that $\mathbf{\Omega}$ diagonalizes $\mathbf{\Sigma}^{\nicefrac{1}{2}} \mathbf{A} \mathbf{\Sigma}^{\nicefrac{1}{2}}$. Thus, if we let $\mathbf{A} = \bm{\beta} \bm{\beta}^T$ it follows that $\mathbf{\Omega}^T \mathbf{\Sigma}^{\nicefrac{1}{2}} \bm{\beta} \bm{\beta}^T \mathbf{\Sigma}^{\nicefrac{1}{2}} \mathbf{\Omega} = \bm{\eta}$, where $\bm{\eta}$ is the diagonal matrix of eigenvalues of $\mathbf{\Sigma}^{\nicefrac{1}{2}} \bm{\beta} \bm{\beta}^T \mathbf{\Sigma}^{\nicefrac{1}{2}}$. From here, note that $\mathbf{\Sigma}^{\nicefrac{1}{2}} \bm{\beta} \bm{\beta}^T \mathbf{\Sigma}^{\nicefrac{1}{2}}$ is a rank-one matrix, and therefore it has only one positive eigenvalue (with all others being zero). Since the trace of a matrix is equal to the sum of its eigenvalues, we know that
    \begin{align*}
        \eta_1 = \text{tr}(\mathbf{\Sigma}^{\nicefrac{1}{2}} \bm{\beta} \bm{\beta}^T \mathbf{\Sigma}^{\nicefrac{1}{2}}) = \text{tr}( \bm{\beta}^T \mathbf{\Sigma} \bm{\beta}) = \bm{\beta}^T \mathbf{\Sigma} \bm{\beta}.
    \end{align*}
    where the second equality follows using the cyclic property of the trace. Thus, when $\mathbf{A} = \bm{\beta} \bm{\beta}^T$, 
    \begin{align*}
        \sum^d_{j=1} (\mathbf{P}_j \mathbf{\Lambda}^{\nicefrac{1}{2}} \bm{\beta}_Z)^2 \nu_{j, \eta} &= \text{tr}\left(\mathbf{P} \mathbf{\Lambda}^{\nicefrac{1}{2}} \bm{\beta}_Z \bm{\beta}^T_Z \mathbf{\Lambda}^{\nicefrac{1}{2}} \mathbf{P}^T \mathbf{\Psi} \right) \\
        &=\text{tr}\left(\mathbf{\Omega}^T \mathbf{\Sigma}^{\nicefrac{1}{2}} \bm{\beta} \bm{\beta}^T \mathbf{\Sigma}^{\nicefrac{1}{2}} \mathbf{\Omega}\mathbf{\Psi} \right) \\
        &= \text{tr}\left(\bm{\eta} \mathbf{\Psi} \right) \\
        &= \eta_1 \nu^*_{1, \beta}
    \end{align*}
    where the last equality follows since $\bm{\eta}$ is a diagonal matrix that is zero everywhere except the first element and we define $\nu^*_{1, \beta} = \mathbb{E}[Z^2 \mid (\bm{\beta}^T \mathbf{\Sigma} \bm{\beta})Z^2 < a^\prime]$ for $Z \sim N(0, 1)$, where $a^\prime$ is chosen such that $\mathbb{P}( (\bm{\beta}^T \mathbf{\Sigma} \bm{\beta}) Z^2 < a^\prime) = \alpha$. Our goal now is to show that the variance of $\widehat{\tau}$ under Quadratic Form Rerandomization when $\mathbf{A} = \bm{\beta} \bm{\beta}^T$ is smaller than any other choice of $\mathbf{A}$. Let $\widetilde{\mathbf{\Omega}}$ be the eigenvectors of $\mathbf{\Sigma}^{\nicefrac{1}{2}} \mathbf{A} \mathbf{\Sigma}^{\nicefrac{1}{2}}$ for any other positive-definite or positive semi-definite choice of $\mathbf{A}$. Then, observe that
    \begin{align*}
        \eta_1 &= \bm{\beta}^T \mathbf{\Sigma} \bm{\beta} \\
        &= (\mathbf{V} \bm{\beta}_Z)^T \mathbf{\Sigma} \mathbf{V} \bm{\beta}_Z \\
        &= \bm{\beta}^T_Z \mathbf{\Lambda} \bm{\beta}_Z \\
        &= \bm{\beta}^T_Z \mathbf{\Lambda}^{\nicefrac{1}{2}}(\widetilde{\mathbf{\Omega}}^T \mathbf{V})^T (\widetilde{\mathbf{\Omega}}^T \mathbf{V}) \mathbf{\Lambda}^{\nicefrac{1}{2}}\bm{\beta}_Z \\
        &= \sum^d_{j=1} ((\widetilde{\mathbf{\Omega}}^T \mathbf{V})_j \mathbf{\Lambda}^{\nicefrac{1}{2}} \bm{\beta}_Z)^2,
    \end{align*}
    where the fourth equality follows since $\widetilde{\mathbf{\Omega}}^T \mathbf{V}$ is an orthogonal matrix. Thus, the difference in variances between $\mathbf{A} = \bm{\beta} \bm{\beta}^T$ and any other positive semi-definite or positive-definite choice of $\mathbf{A}$ is given by
    \begin{align*}
        \sum^d_{j=1} ((\widetilde{\mathbf{\Omega}}^T \mathbf{V})_j \mathbf{\Lambda}^{\nicefrac{1}{2}} \bm{\beta}_Z)^2(\nu^*_{1, \beta} - \nu_{j, \eta}).
    \end{align*}
    Then, if we can show that $\nu^*_{1, \beta} \leq \nu_{j, \eta}$ for each $j = 1, \ldots, d$, the proof is complete. To show this, observe that we may write $\nu^*_{1, \beta} = \mathbb{E}[Z^2 \mid Z^2 < q_\alpha]$ where $q_\alpha = a^\prime / (\bm{\beta}^T \mathbf{\Sigma} \bm{\beta})$. Importantly, since $a^\prime$ is chosen to have acceptance probability $\alpha$, $q_\alpha$ is simply the $\alpha$-quantile of a $\chi^2_1$ random variable. Next, fix $j$ and let
    \begin{align*}
        \mathcal{E}_{\mathbf{A}} = \left\{ \sum^d_{\ell=1} \eta_\ell Z^2_\ell < a\right\}
    \end{align*}
    denote the acceptance event for some arbitrary $\mathbf{A} \in \mathbf{S}^d_{+}$ such that $\mathbb{P}(\mathcal{E}_{\mathbf{A}}) = \alpha$. Our goal is to show that among all events $\mathcal{E}_{\mathbf{A}}$ with acceptance probability $\alpha$,
    \begin{align*}
        \mathbb{E}[Z^2_j \mid \mathcal{E}_{\mathbf{A}}] \geq \mathbb{E}[Z^2_j \mid Z^2_j \leq q_\alpha].
    \end{align*}
    From here, observe that
    \begin{align} \label{expectation_inequality}
         \mathbb{E}\left[Z^2_j \mathbb{I}(\mathcal{E}_{\mathbf{A}})\right] - \mathbb{E}\left[Z^2_j\mathbb{I}(Z^2_j\leq q_\alpha)\right] \overset{(i)}{=} \mathbb{E}\left[(Z^2_j-q_\alpha) \left\{ \mathbb{I}(\mathcal{E}_{\mathbf{A}}) -\mathbb{I}(Z^2_j\leq q_\alpha) \right\} \right] \overset{(ii)}{\geq} 0
    \end{align}
    where $(i)$ follows since $\mathbb{E}\left[(Z^2_j-q_\alpha) \left\{ \mathbb{I}(\mathcal{E}_{\mathbf{A}}) -\mathbb{I}(Z^2_j\leq q_\alpha) \right\} \right]$ expands to
    \begin{align*}
         \mathbb{E}\left[Z^2_j \mathbb{I}(\mathcal{E}_{\mathbf{A}})\right] - \mathbb{E}\left[Z^2_j\mathbb{I}(Z^2_j\leq q_\alpha)\right] - q_\alpha \mathbb{P}(\mathcal{E}_{\mathbf{A}}) + q_\alpha \mathbb{P}(Z^2_j\leq q_\alpha)
    \end{align*}
    and the last two terms cancel since $\mathbb{P}(Z^2_j\leq q_\alpha) = \mathbb{P}(\mathcal{E}_{\mathbf{A}}) = \alpha$. Next, $(ii)$ follows because when $Z^2_j \leq q_\alpha$, both $Z^2_j -q_\alpha\leq0$ and $\mathbb{I}(\mathcal{E}_{\mathbf{A}})-\mathbb{I}(Z^2_j\leq q_\alpha)\leq0$, while when $Z^2_j > q_\alpha$, both $Z^2_j -q_\alpha\geq0$ and $\mathbb{I}(\mathcal{E}_{\mathbf{A}}) -\mathbb{I}(Z^2_j \leq q_\alpha)\geq0$. Finally, since
    \begin{align*}
        \mathbb{E}[Z^2_j \mid \mathcal{E}_{\mathbf{A}}] = \frac{\mathbb{E}[Z^2_j \mathbb{I}(\mathcal{E}_{\mathbf{A}})]}{\mathbb{P}(\mathcal{E}_{\mathbf{A}})} = \frac{\mathbb{E}[Z^2_j \mathbb{I}(\mathcal{E}_{\mathbf{A}})]}{\alpha}
    \end{align*}
    and
    \begin{align*}
        \mathbb{E}[Z^2_j \mid Z^2_j \leq q_\alpha] = \frac{\mathbb{E}[Z^2_j \mathbb{I}(Z^2_j \leq q_\alpha)]}{\mathbb{P}(Z^2_j \leq q_\alpha)} = \frac{\mathbb{E}[Z^2_j \mathbb{I}(Z^2_j \leq q_\alpha)]}{\alpha}
    \end{align*}
    dividing both sides of \cref{expectation_inequality} by $\alpha$ and rearranging yields
    \begin{align*}
        \mathbb{E}[Z^2_j \mid \mathcal{E}_{\mathbf{A}}] \geq \mathbb{E}[Z^2_j \mid Z^2_j \leq q_\alpha]. 
    \end{align*}
    Consequently, it follows that $\nu^*_{1, \beta} \leq \nu_{j, \eta}$ for each $j = 1, \ldots, d$ and thus,
    \begin{align*}
    \text{Var}(\sqrt{n}(\widehat{\tau} - \tau)\mid \mathbf{X}, Q_{\mathbf{A}^*}(\sqrt{n} \widehat{\bm{\tau}}_{\mathbf{X}}) \leq a^\prime) \leq \text{Var}(\sqrt{n}(\widehat{\tau} - \tau) \mid \mathbf{X}, Q_{\mathbf{A}}(\sqrt{n} \widehat{\bm{\tau}}_{\mathbf{X}}) \leq a)
    \end{align*}
    where $\mathbf{A}^* = \bm{\beta}\bm{\beta}^T$.
    
Furthermore, we can show that the variance reduction of $\widehat{\tau}$ under optimal Quadratic Form Rerandomization is
\begin{align*}
     \text{Var}\left(  \sqrt{n} \left(\widehat{\tau} - \tau \right) \mid \mathbf{X}, Q_{\mathbf{A}^*}(\sqrt{n} \widehat{\bm{\tau}}_{\mathbf{X}}) \leq a \right) &= \nu^*_{1, \beta} \sum^d_{j=1} \beta^2_{Z,j} \lambda_j +  \text{Var}(\varepsilon) \\
     &\overset{(i)}{=} \nu^*_{1, \beta} \Big( \text{Var}(\sqrt{n} \left(\widehat{\tau} - \tau \right) \mid \mathbf{X}) - \text{Var}(\varepsilon) \Big) +  \text{Var}(\varepsilon) \\
     &\overset{(ii)}{=} \nu^*_{1, \beta}\Big(V_{\tau \tau} - \text{Var}(\varepsilon) \Big) + \text{Var}( \varepsilon) \\
    &= \nu^*_{1, \beta} V_{\tau \tau} + (1 - \nu^*_{1, \beta}) \text{Var}( \varepsilon) \\
    &\overset{(iii)}{=} \nu^*_{1, \beta} V_{\tau \tau} + (1 - \nu^*_{1, \beta}) V_{\tau \tau}(1 - R^2) \\
    &=  (1 - (1 - \nu^*_{1, \beta})R^2) V_{\tau \tau}.
\end{align*}
where in $(i)$ we have used the fact that $\sum^d_{j=1} \beta^2_{Z,j} \lambda_j = \text{Var}(\sqrt{n} \left(\widehat{\tau} - \tau \right) \mid \mathbf{X}) - \text{Var}(\varepsilon)$, in $(ii)$ we have used
$\text{Var}(\sqrt{n} \left(\widehat{\tau} - \tau \right) \mid \mathbf{X}) = V_{\tau \tau}$, and in $(iii)$ that $\text{Var}(\varepsilon) = V_{\tau \tau}(1 - R^2)$. Putting these facts together, we can see that the optimal percentage reduction in variance under Quadratic Form Rerandomization is $100(1 - \nu^*_{1, \beta})R^2$.
\end{proof}

\subsection{Proof of \texorpdfstring{\cref{euc_minimax}}{Theorem 6}} \label{euc_minimax_proof}

First, we will prove a technical lemma regarding the conditional expectation of Gamma random variables. Although this lemma can be derived from other works such as \cite{chapman_truncated_gamma}, we provide a proof here for completeness and for the convenience of the reader.
\begin{lemma} \label{Gamma_cond_exp}
    Suppose $X \sim \text{Gamma}(\alpha, \beta)$. Then, for some $a > 0$,
    \begin{align*}
        \mathbb{E}\left[X \mid X < a \right] = \alpha \beta \left(\frac{F(a; \alpha + 1, \beta)}{F(a; \alpha, \beta)}\right)
    \end{align*}
    where $F(a; \cdot, \cdot)$ is the cumulative distribution function of $X \sim \text{Gamma}(\cdot, \cdot)$.
\end{lemma}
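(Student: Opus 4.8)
The plan is to work directly with the density of the Gamma distribution and reduce the conditional expectation to a ratio of incomplete Gamma integrals, then recognize the numerator as a rescaled $\text{Gamma}(\alpha+1,\beta)$ cdf. Concretely, writing $f(x;\alpha,\beta)$ for the $\text{Gamma}(\alpha,\beta)$ density (with $\beta$ the scale parameter, consistent with the paper's convention $c\chi^2_\nu \sim \text{Gamma}(\nicefrac{\nu}{2}, 2c)$), I would first note that by definition of conditioning,
\begin{align*}
\mathbb{E}[X \mid X < a] = \frac{\int_0^a x\, f(x;\alpha,\beta)\, dx}{\mathbb{P}(X < a)} = \frac{1}{F(a;\alpha,\beta)} \int_0^a x\, f(x;\alpha,\beta)\, dx.
\end{align*}

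The key algebraic step is the identity $x\, f(x;\alpha,\beta) = \alpha\beta\, f(x;\alpha+1,\beta)$ for all $x>0$, which follows by substituting $f(x;\alpha,\beta) = \frac{x^{\alpha-1}e^{-x/\beta}}{\Gamma(\alpha)\beta^{\alpha}}$ and using $\Gamma(\alpha+1) = \alpha\Gamma(\alpha)$: both sides equal $\frac{x^{\alpha}e^{-x/\beta}}{\Gamma(\alpha)\beta^{\alpha}}$. Substituting this into the integral gives
\begin{align*}
\int_0^a x\, f(x;\alpha,\beta)\, dx = \alpha\beta \int_0^a f(x;\alpha+1,\beta)\, dx = \alpha\beta\, F(a;\alpha+1,\beta),
\end{align*}
and dividing by $F(a;\alpha,\beta)$ yields the claimed formula.

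There is no real obstacle here — the argument is a one-line density manipulation — so the main thing to be careful about is the parametrization convention: I would state explicitly at the outset that $\beta$ denotes the scale (so $\mathbb{E}[X] = \alpha\beta$), so that the unconditional limit $a \to \infty$ recovers $\alpha\beta\, \tfrac{1}{1} = \alpha\beta$ as a sanity check, and so that the later application in the proof of \cref{euc_minimax} (where the $\text{Gamma}$ arises from $\sum_j \eta_j \mathcal{Z}_j^2$ after moment matching) uses the same convention. I would also remark that $F(a;\alpha,\beta) > 0$ for any $a > 0$, so the conditional expectation is well-defined.
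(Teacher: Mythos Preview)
Your proof is correct and is considerably more direct than the paper's. The paper takes a longer route: it writes the conditional expectation as a ratio of integrals, converts both to lower incomplete gamma functions $\gamma(s,x)$ via the substitution $y = x/\beta$, applies the recurrence $\gamma(\alpha+1,a/\beta) = \alpha\gamma(\alpha,a/\beta) - (a/\beta)^\alpha e^{-a/\beta}$ to obtain the intermediate representation $\alpha\beta\bigl(1 - \beta\, f(a;\alpha+1,\beta)/F(a;\alpha,\beta)\bigr)$, and then performs a separate calculation to show $F(a;\alpha,\beta) - \beta f(a;\alpha+1,\beta) = F(a;\alpha+1,\beta)$. Your single density identity $x\,f(x;\alpha,\beta) = \alpha\beta\,f(x;\alpha+1,\beta)$ bypasses all of this and lands on the final form immediately. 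The paper's detour does produce the intermediate expression involving $f(a;\alpha+1,\beta)$, which has a nice interpretation as a hazard-type correction to the unconditional mean, but this is not used anywhere else in the paper, so your shorter argument loses nothing for the purposes of \cref{euc_minimax}.
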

\begin{proof}[\textbf{Proof:}] First, note that we parameterize $X \sim \text{Gamma}(\alpha, \beta)$ such that $X$ has the density
\begin{align*}
    f(x; \alpha, \beta) = \frac{x^{\alpha - 1} e^{- x / \beta}}{\beta^\alpha \Gamma(\alpha)}.
\end{align*}
Then, observe that
\begin{align*}
    \mathbb{E}\left[X \mid X < a \right] &= \frac{\mathbb{E}\left[X \cdot \mathbbm{1}(X < a) \right]}{\mathbb{P}(X < a)} = \frac{\int^a_0 x \left( \frac{x^{\alpha - 1} e^{- x / \beta}}{\beta^\alpha \Gamma(\alpha)}  \right) dx}{\int^a_0 \frac{x^{\alpha - 1} e^{- x / \beta}}{\beta^\alpha \Gamma(\alpha)} dx} = \frac{\frac{1}{\Gamma(\alpha)}\int^a_0  \frac{x^{\alpha} e^{- x / \beta}}{\beta^\alpha }  dx}{\frac{1}{\Gamma(\alpha)}\int^a_0 \frac{x^{\alpha - 1} e^{- x / \beta}}{\beta^\alpha } dx}.
\end{align*}
Next, we can write this expression as a ratio of lower incomplete gamma functions, $\gamma(s, x) = \int^x_0 t^{s-1} e^{-t} dt$, since
\begin{align*}
  \frac{\int^a_0  \frac{x^{\alpha} e^{- x / \beta}}{\beta^\alpha }  dx}{\int^a_0 \frac{x^{\alpha - 1} e^{- x / \beta}}{\beta^\alpha } dx}  &=    \frac{\int^a_0 (x / \beta)^{\alpha} e^{- x / \beta}  dx}{\frac{1}{\beta}\int^a_0 (x / \beta)^{\alpha - 1} e^{- x / \beta} dx} = \beta \frac{\int^{a / \beta}_0 y^{\alpha} e^{- y}  dy}{\int^{a / \beta}_0 y^{\alpha - 1} e^{- y} dy} = \beta \cdot \frac{\gamma(\alpha + 1, a / \beta)}{\gamma(\alpha, a / \beta)}
\end{align*}
where in the second to last equality we let $y = x / \beta$. We can further simplify this expression by noting that $\gamma(\alpha + 1, a / \beta) = \alpha \gamma( \alpha, a / \beta) - (a / \beta)^\alpha e^{- a / \beta}$. Then, we can see that
\begin{align*}
    \beta \cdot \frac{\gamma(\alpha + 1, a / \beta)}{\gamma(\alpha, a / \beta)} = \beta \left(\alpha - \frac{(a / \beta)^\alpha e^{- a / \beta}}{\gamma(\alpha, a / \beta)}\right) = \alpha \beta \left(1 - \frac{(a / \beta)^\alpha e^{- a / \beta}}{\alpha \gamma(\alpha, a / \beta)}\right).
\end{align*}
Finally, using the fact that $\Gamma(\alpha + 1) = \alpha \Gamma(\alpha)$, it follows that
\begin{align*}
     \frac{(a / \beta)^\alpha e^{- a / \beta}}{\alpha \gamma(\alpha, a / \beta)} = \frac{\frac{a^\alpha e^{- a / \beta}}{\beta^\alpha \Gamma(\alpha + 1)} }{\frac{\alpha \gamma(\alpha, a / \beta)}{\Gamma(\alpha + 1)}} = \frac{\beta \left(\frac{a^\alpha e^{- a / \beta}}{\beta^{\alpha+1} \Gamma(\alpha + 1)} \right) }{\frac{\gamma(\alpha, a / \beta)}{\Gamma(\alpha)}} = \beta \left(\frac{f(a; \alpha + 1, \beta)}{F(a; \alpha, \beta)} \right).
\end{align*}
Putting everything together, we have that
\begin{align*}
    \mathbb{E}\left[X \mid X < a \right] = \alpha \beta \left(1 - \beta \cdot \frac{f(a; \alpha + 1, \beta)}{F(a; \alpha, \beta)}\right).
\end{align*}
Note that this makes sense intuitively, since $\mathbb{E}[X] = \alpha \beta$, so we can see that as $a \to \infty$ we recover the unconditional mean. We can also see that the mean of $X$ is reduced by $\beta \cdot \frac{f(a; \alpha + 1, \beta)}{F(a; \alpha, \beta)}$. However, it is still possible to simplify this conditional expectation. To do so, observe that
\begin{align*}
    F(a; \alpha, \beta) - \beta f(a ; \alpha + 1, \beta) &= \frac{\gamma(\alpha, a / \beta)}{\Gamma(\alpha)} - \beta \left(\frac{a^\alpha e^{- a / \beta}}{\beta^{\alpha+1} \Gamma(\alpha + 1)} \right) \\
    &= \frac{\alpha \gamma(\alpha, a / \beta)}{\Gamma(\alpha + 1)} - \frac{\beta a^\alpha e^{- a / \beta}}{\beta^{\alpha+1} \Gamma(\alpha + 1)} \\
    &= \frac{\beta^{\alpha + 1} \alpha \gamma(\alpha, a / \beta) - \beta a^\alpha e^{- a / \beta}}{\beta^{\alpha + 1}\Gamma(\alpha + 1)} \\
    &= \frac{\beta^{\alpha + 1} \left( \gamma(\alpha + 1, a / \beta) + (a / \beta)^\alpha e^{- a / \beta}\right) - \beta a^\alpha e^{- a / \beta}}{\beta^{\alpha + 1}\Gamma(\alpha + 1)} \\
    &= \frac{ \gamma(\alpha + 1, a / \beta)}{\Gamma(\alpha + 1)} + \frac{\beta a^\alpha e^{- a / \beta} - \beta a^\alpha e^{- a / \beta}}{\beta^{\alpha + 1}\Gamma(\alpha + 1)} \\
    &= F(a; \alpha + 1, \beta)
\end{align*}
where the fourth equality follows since $\alpha \gamma( \alpha, a / \beta) = \gamma(\alpha + 1, a / \beta) + (a / \beta)^\alpha e^{- a / \beta}$. Thus,
\begin{align*}
    1 - \beta \cdot \frac{f(a; \alpha + 1, \beta)}{F(a; \alpha, \beta)} = \frac{F(a; \alpha, \beta) - \beta f(a; \alpha + 1, \beta)}{F(a; \alpha, \beta)} = \frac{F(a; \alpha + 1, \beta)}{F(a; \alpha, \beta)}
\end{align*}
which yields a final conditional expectation of
\begin{align*}
    \mathbb{E}\left[X \mid X < a \right] = \alpha \beta \left(\frac{F(a; \alpha + 1, \beta)}{F(a; \alpha, \beta)}\right)
\end{align*}
\end{proof}

\begin{proof}[\textbf{Proof:}] To begin, suppose that $\bm{\beta} \neq \mathbf{0}$ and that $d \geq 2$; when $d = 1$ then the result holds trivially since any rerandomization criterion invokes the same acceptance event. With this in mind, we proceed by directly evaluating $\nu^*_{1, \beta }$ to obtain a better understanding of its properties. Note that for $Z \sim \mathcal{N}(0, 1)$,
\begin{align*}
    \nu^*_{1, \beta } &= \mathbb{E}\left[Z^2 \mid (\bm{\beta}^T \mathbf{\Sigma} \bm{\beta}) Z^2 < a \right] = \mathbb{E}\left[Z^2 \mid  Z^2 < q_{\alpha, \bm{\beta}} \right]
\end{align*}
where $q_{\alpha, \bm{\beta}} = a / (\bm{\beta}^T \mathbf{\Sigma} \bm{\beta})$. Note that because $a$ is specifically chosen to yield an acceptance probability of $\alpha$, it follows that $\alpha = \mathbb{P}( \bm{\beta}^T \mathbf{\Sigma} \bm{\beta}) Z^2 \leq a) = \mathbb{P}( Z^2 \leq q_{\alpha, \bm{\beta}})$; thus $q_{\alpha, \bm{\beta}} = q_{\alpha}$ is simply the $\alpha$-quantile of a $\chi^2_1$ random variable and does not depend on $\bm{\beta}$ after normalizing. By \cref{Gamma_cond_exp}, we know that 
\begin{align*}
    \nu^*_{1, \beta } &= \frac{F(q_{\alpha}; 1/2 + 1, 2)}{F(q_{\alpha}; 1/2, 2)}
    \end{align*}
    where $F(t; \kappa, \theta)$ is the cumulative distribution function of a Gamma random variable with shape and scale parameters $\kappa$ and $\theta$. Note that
\begin{align*}
    F(t;\kappa,2) = \frac{1}{\Gamma(\kappa)} \int_0^{t/2}x^{\kappa-1}e^{-x}dx.
\end{align*}
Thus, as $t\downarrow0$, the approximation $e^{-x}=1+O(x)$ holds uniformly over $0\leq x\leq t/2$. Therefore,
\begin{align*}
    F(t;\kappa,2) &= \frac{1}{\Gamma(\kappa)} \int_0^{t/2} x^{\kappa-1}\{1+O(x)\}dx \\
    &= \frac{1}{\Gamma(\kappa)} \left[ \int_0^{t/2}x^{\kappa-1}dx + O\!\left( \int_0^{t/2}x^\kappa dx\right) \right] \\
    &= \frac{(t/2)^\kappa}{\Gamma(\kappa+1)} + O(t^{\kappa+1}).
\end{align*}
Thus, after noting that $\Gamma(3/2)=\sqrt{\pi}/2$ and $\Gamma(5/2)=3\sqrt{\pi}/4$, we can simply plug in $\kappa = 1/2$ and $\kappa = 3/2$ to see that 
\begin{align*}
    \nu^*_{1,\beta} &= \frac{ \frac{1}{3} \sqrt{\frac{2}{\pi}} q_\alpha^{\nicefrac{3}{2}} + O(q_\alpha^{\nicefrac{5}{2}})}{ \sqrt{\frac{2}{\pi}} q_\alpha^{\nicefrac{1}{2}} + O(q_\alpha^{\nicefrac{3}{2}})} = \frac{\frac{1}{3}q_\alpha + O(q_\alpha^2) }{ 1+O(q_\alpha) } = \frac{q_\alpha}{3} + O(q_\alpha^2).
\end{align*}
Moreover, observe that
\begin{align*}
     \alpha = F(q_\alpha;1/2,2) = \sqrt{\frac{2}{\pi}}q_\alpha^{\nicefrac{1}{2}} + O(q_\alpha^{\nicefrac{3}{2}}) = q_\alpha^{\nicefrac{1}{2}} \left( \sqrt{\frac{2}{\pi}} + O(q_\alpha) \right),
\end{align*}
so it follows that $q_\alpha=O(\alpha^2)$, and consequently, $\nu^*_{1,\beta}=O(\alpha^2)$. Note that this $O(\alpha^2)$ bound is uniform over $||\bm{\beta}||_2<c$, because $\nu^*_{1,\beta}$ depends on $\bm{\beta}$ only through the normalized
threshold $q_\alpha$. As a result, we can now see that uniformly over $|| \bm{\beta}||_2 < c$,
\begin{align*}
    (\bm{\beta}^T \mathbf{\Sigma} \bm{\beta})\nu^*_{1,\beta} \leq \lambda_{\max}(\mathbf{\Sigma}) ||\bm{\beta}||_2^2 O(\alpha^2) \notag \leq c^2\lambda_{\max}(\mathbf{\Sigma})O(\alpha^2) = o(\alpha^{\nicefrac{2}{d}}),
\end{align*}
where the last equality uses $d\geq2$, since $\alpha^2/\alpha^{2/d}=\alpha^{2-2/d}\to0$. Now, recall that under \cref{asymptotic_norm_condition} and \cref{general_balance_condition} (and as $n \to \infty$),
\begin{align*}
    \text{Var}\left(\sqrt{n}(\widehat{\tau} - \tau) \mid \mathbf{X}, Q_{\mathbf{A}}(\sqrt{n} \widehat{\bm{\tau}}_{\mathbf{X}}) \leq a \right) = \bm{\beta}^T \mathcal{C}_{\mathbf{A}} \bm{\beta} + \text{Var}( \varepsilon )
\end{align*}
where we define $\mathcal{C}_{\mathbf{A}} = \text{Cov}\left( \sqrt{n} \widehat{\bm{\tau}}_{\mathbf{X}} \mid \mathbf{X}, Q_{\mathbf{A}}(\sqrt{n} \widehat{\bm{\tau}}_{\mathbf{X}}) \leq a  \right)$. Similarly, under the oracle,
\begin{align*}
    \text{Var}\left(\sqrt{n}(\widehat{\tau} - \tau) \mid \mathbf{X}, Q_{\mathbf{A}^*}(\sqrt{n} \widehat{\bm{\tau}}_{\mathbf{X}}) \leq a^\prime \right) = \nu^*_{1,\beta} \big( \bm{\beta}^T \mathbf{\Sigma} \bm{\beta}\big) + \text{Var}( \varepsilon ).
\end{align*}
For notational convenience, let
\begin{align*}
    \Delta_{\bm{\beta}}(\mathbf{A}) = \left| \text{Var}\!\left( \sqrt{n}(\widehat{\tau}-\tau) \mid \mathbf{X}, Q_{\mathbf{A}^*} (\sqrt{n}\widehat{\bm{\tau}}_{\mathbf{X}}) \leq
a^\prime\right) - \text{Var}\!\left( \sqrt{n}(\widehat{\tau}-\tau) \mid \mathbf{X}, Q_{\mathbf{A}} (\sqrt{n}\widehat{\bm{\tau}}_{\mathbf{X}}) \leq
a \right) \right|
\end{align*}
denote the absolute difference between the variance under the oracle and under any arbitrary quadratic form. Plugging in each decomposition, it is clear that $\Delta_{\bm{\beta}}(\mathbf{A})  = |\bm{\beta}^T \mathcal{C}_{\mathbf{A}} \bm{\beta} - \nu^*_{1,\beta} \big( \bm{\beta}^T \mathbf{\Sigma} \bm{\beta}\big)|$. From here, let $u_{\bm{\beta}} = \bm{\beta}^T \mathcal{C}_{\mathbf{A}} \bm{\beta}$ and $v_{\bm{\beta}} = \nu^*_{1,\beta} \big( \bm{\beta}^T \mathbf{\Sigma} \bm{\beta}\big)$. Since both $u_{\bm{\beta}}$ and $v_{\bm{\beta}}$ are nonnegative it follows that
\begin{align*}
    \left| |u_{\bm{\beta}}-v_{\bm{\beta}}| - u_{\bm{\beta}} \right| \leq v_{\bm{\beta}}.
\end{align*}
Thus, taking the supremum over $||\bm{\beta} ||_2<c$ yields
\begin{align*}
\left| \sup_{||\bm{\beta}||_2<c} |u_{\bm{\beta}}-v_{\bm{\beta}}| - \sup_{||\bm{\beta}||_2<c} u_{\bm{\beta}}
\right|  \leq \sup_{||\bm{\beta}||_2<c} \left| |u_{\bm{\beta}}-v_{\bm{\beta}}| - u_{\bm{\beta}} \right| \leq \sup_{||\bm{\beta}||_2<c} v_{\bm{\beta}} = o(\alpha^{\nicefrac{2}{d}}).
\end{align*}
Consequently,
\begin{align*}
\sup_{||\bm{\beta}||_2<c} \Delta_{\bm{\beta}}(\mathbf{A})   = \sup_{||\bm{\beta}||_2<c} \bm{\beta}^T\mathcal{C}_{\mathbf{A}}\bm{\beta} + o(\alpha^{\nicefrac{2}{d}}).
\end{align*}
From here, since $\mathcal{C}_{\mathbf{A}}$ is positive semi-definite,
\begin{align*}
    \sup_{||\bm{\beta}||_2<c} \bm{\beta}^T\mathcal{C}_{\mathbf{A}}\bm{\beta} = c^2\eta_{\max}(\mathcal{C}_{\mathbf{A}}),
\end{align*}
where we say $\eta_{\max}(\mathcal{C}_{\mathbf{A}})$ is the maximum eigenvalue of $\mathcal{C}_{\mathbf{A}}$. Thus, minimizing $\Delta_{\bm{\beta}}(\mathbf{A})$ across $\mathbf{A} \in \mathbf{S}^d_{+}$ is equivalent to minimizing $\eta_{\max}(\mathcal{C}_{\mathbf{A}})$. To proceed, we restrict our attention to positive-definite choices of $\mathbf{A}$ in order to apply the small-$\alpha$ approximation of $\nu_{j, \eta}$. Recall that
\begin{align*}
    \mathcal{C}_{\mathbf{A}} &= \mathbf{\Sigma}^{\nicefrac{1}{2}} \mathbf{\Omega} \Big(\text{diag}\{(\nu_{j, \eta})_{1 \leq j \leq d} \} \Big) \mathbf{\Omega}^T \mathbf{\Sigma}^{\nicefrac{1}{2}} \\
    &= p_d\alpha^{\nicefrac{2}{d}} \underbrace{\text{det}(\mathbf{\Sigma}^{\nicefrac{1}{2}} \mathbf{A} \mathbf{\Sigma}^{\nicefrac{1}{2}})^{\nicefrac{1}{d}} \mathbf{\Sigma}^{\nicefrac{1}{2}} (\mathbf{\Sigma}^{\nicefrac{1}{2}} \mathbf{A} \mathbf{\Sigma}^{\nicefrac{1}{2}})^{-1} \mathbf{\Sigma}^{\nicefrac{1}{2}}}_{:=\mathbf{M}_\mathbf{A}} + o(\alpha^{\nicefrac{2}{d}}),
\end{align*}
where the remainder is in operator norm for fixed $d$ and fixed
positive-definite $\mathbf{A}$. Note that
\begin{align*}
     \text{det}(\mathbf{M}_{\mathbf{A}}) = \text{det}(\mathbf{\Sigma}^{\nicefrac{1}{2}} \mathbf{A} \mathbf{\Sigma}^{\nicefrac{1}{2}}) \text{det}(\mathbf{\Sigma}^{\nicefrac{1}{2}} (\mathbf{\Sigma}^{\nicefrac{1}{2}} \mathbf{A} \mathbf{\Sigma}^{\nicefrac{1}{2}})^{-1} \mathbf{\Sigma}^{\nicefrac{1}{2}}) = \text{det}(\mathbf{\Sigma}).
\end{align*}
Because $\mathbf{M}_{\mathbf{A}}$ is positive-definite, it follows that
\begin{align*}
    \eta_{\max}(\mathbf{M}_{\mathbf{A}}) \geq \text{det}(\mathbf{M}_{\mathbf{A}})^{\nicefrac{1}{d}} =
    \text{det}(\mathbf{\Sigma})^{\nicefrac{1}{d}}
\end{align*}
where equality holds if and only if all of the eigenvalues of $\mathbf{M}_{\mathbf{A}}$ are equal. Equivalently, when $\mathbf{M}_{\mathbf{A}} \propto \text{det}(\mathbf{\Sigma})^{\nicefrac{1}{d}} \mathbf{I}_d$, which follows immediately if $\mathbf{\Sigma}^{\nicefrac{1}{2}} \mathbf{A} \mathbf{\Sigma}^{\nicefrac{1}{2}} \propto \mathbf{\Sigma}$, which occurs when $\mathbf{A} = \omega \mathbf{I}_d$ for some $\omega > 0$. Finally, we rule out singular choices of $\mathbf{A}$ to expand our optimization to $\mathbf{S}^d_{+}$. Suppose $\mathbf{A}$ is positive semi-definite with rank $k < d$. Then, recall from the discussion in \cref{pca_k_section} that $\nu_{j, \eta}(k) = 1$ for $j > k$. Thus, one could adversarially select $\bm{\beta}$ in directions that cannot be balanced by a positive semi-definite quadratic form, such that the worst case variance gap remains bounded away from zero even as $\alpha \to 0$. Putting everything together, it follows that for a sufficiently small $\alpha$,
\begin{align*}
            \operatorname*{arg\,min}_{\mathbf{A}\in\mathbf{S}^d_{+}}
        \sup_{||\bm{\beta}||_2<c} \left| \Delta_{\bm{\beta}}(\mathbf{A}) \right| = \{\omega\mathbf{I}_d:\omega>0\}.
\end{align*}
\end{proof}

\subsection{Proof of \texorpdfstring{\cref{PCA_QFR}}{Theorem 7}} \label{PCA_QFR_proof}
\begin{proof}[\textbf{Proof:}]
First, we will rewrite the covariance of the covariate mean differences under Principal Components Quadratic Form Rerandomization as
\begin{align*}
    \text{Cov}(\sqrt{n}(\bar{\mathbf{X}}_T - \bar{\mathbf{X}}_C) \mid \mathbf{X}, Q^k_\mathbf{A}(\sqrt{n} \widehat{\bm{\tau}}_{\mathbf{Z}}) \leq a) = \mathbf{V}\text{Cov}(\sqrt{n}(\bar{\mathbf{Z}}_T - \bar{\mathbf{Z}}_C) \mid \mathbf{X}, Q^k_\mathbf{A}( \sqrt{n}\widehat{\bm{\tau}}_{\mathbf{Z}}) \leq a) \mathbf{V}^T
\end{align*}
where $\bar{\mathbf{Z}}_T - \bar{\mathbf{Z}}_C = \mathbf{V}^T(\bar{\mathbf{X}}_T - \bar{\mathbf{X}}_C)$ and $Q^k_\mathbf{A}(\sqrt{n}\widehat{\bm{\tau}}_{\mathbf{Z}})$ is defined as in \cref{pca_k_section}. From here, let $\mathbf{Z}_k = \bar{\mathbf{Z}}^{(k)}_T - \bar{\mathbf{Z}}^{(k)}_C \in \mathbb{R}^k$ be the covariate mean differences for the first $k$ principal components and $\mathbf{Z}_{d-k} = \bar{\mathbf{Z}}^{(d - k)}_T - \bar{\mathbf{Z}}^{(d - k)}_C \in \mathbb{R}^{d-k}$ be the mean differences for the last $d - k$ components. Observe that under \cref{asymptotic_norm_condition}, $\sqrt{n}\mathbf{Z}_k \mid \mathbf{X} \sim \mathcal{N}(\mathbf{0}, \mathbf{\Lambda}_k)$ where $\mathbf{\Lambda}_{k}$ represents the diagonal matrix of the first $k$ eigenvalues of $\mathbf{\Lambda}$. Then, under \cref{asymptotic_norm_condition} and \cref{general_balance_condition}, we can apply \cref{Theorem1} to see that, as $n \to \infty$,
\begin{align*}
    \text{Cov}(\sqrt{n} \mathbf{Z}_k \mid \mathbf{X}, Q^k_\mathbf{A}(\sqrt{n} \widehat{\bm{\tau}}_{\mathbf{Z}}) \leq a) &= \mathbf{\Lambda}^{\nicefrac{1}{2}}_k \mathbf{\Omega}_k \Big( \text{diag}\{(\nu_{j, \eta}(k))_{1 \leq j \leq k} \} \Big) \mathbf{\Omega}^T_k \mathbf{\Lambda}^{\nicefrac{1}{2}}_k
\end{align*}
where $\nu_{j,\eta}(k) = \mathbb{E}\left[\mathcal{Z}^2_j \mid  \sum^k_{\ell=1} \eta_\ell \mathcal{Z}^2_\ell \leq a \right]$ and $\eta_1, \ldots, \eta_k$ are the eigenvalues of $\mathbf{\Lambda}^{\nicefrac{1}{2}}_k \mathbf{A} \mathbf{\Lambda}^{\nicefrac{1}{2}}_k$ and $\mathbf{\Omega}_k$ is the orthogonal matrix of eigenvectors of $\mathbf{\Lambda}^{\nicefrac{1}{2}}_k \mathbf{A} \mathbf{\Lambda}^{\nicefrac{1}{2}}_k$. Next, using the asymptotic normality of $\sqrt{n}\mathbf{Z} \mid \mathbf{X} \sim \mathcal{N}(\mathbf{0}, \mathbf{\Lambda})$, it is clear that $\mathbf{Z}_{d-k}$ and $\mathbf{Z}_k$ are independent since $\mathbf{\Lambda}$ is a diagonal matrix. Thus, it follows that
\begin{align*}
    \text{Cov}(\sqrt{n}\mathbf{Z}_{d-k} \mid \mathbf{X}, Q^k_\mathbf{A}(\sqrt{n}\widehat{\bm{\tau}}_{\mathbf{Z}}) \leq a) &= \text{Cov}(\sqrt{n}\mathbf{Z}_{d-k} \mid \mathbf{X}) = \mathbf{\Lambda}_{d-k}
\end{align*}
as there are no constraints imposed on the bottom $d - k$ principal components. Similarly, if we consider the conditional covariance between $\mathbf{Z}_k$ and $\mathbf{Z}_{d - k}$ where we let $Z_{k, i}$ be the $i$th element of $\mathbf{Z}_k$ and $Z_{d-k, j}$ be the $j$th element of $\mathbf{Z}_{d - k}$, then
\begin{align*}
    \mathbb{E}\left[Z_{k, i} Z_{d-k, j} \mid \mathbf{X},  Q^k_{\mathbf{A}}(\sqrt{n}\widehat{\bm{\tau}}_{\mathbf{Z}}) \leq a \right] &= \mathbb{E}\left[Z_{k, i} \mathbb{E}\left( Z_{d-k, j} \mid \mathbf{X} \right) \mid  \mathbf{X}, Q^k_{\mathbf{A}}( \sqrt{n}\widehat{\bm{\tau}}_{\mathbf{Z}}) \leq a \right] = 0.
\end{align*}
Putting everything together, it follows that under \cref{asymptotic_norm_condition} and \cref{general_balance_condition}, as $n \to \infty$,
\begin{align*}
    \text{Cov}(\sqrt{n}(\bar{\mathbf{X}}_T - \bar{\mathbf{X}}_C) \mid \mathbf{X}, Q^k_\mathbf{A}(\sqrt{n} \widehat{\bm{\tau}}_{\mathbf{Z}}) \leq a) &= \mathbf{V}\begin{pmatrix}
        \mathbf{\Lambda}^{\nicefrac{1}{2}}_k \mathbf{\Omega}_k \mathbf{\Psi}_k \mathbf{\Omega}^T_k \mathbf{\Lambda}^{\nicefrac{1}{2}}_k  & \mathbf{0} \\
        \mathbf{0} & \mathbf{\Lambda}_{d-k}
    \end{pmatrix} \mathbf{V}^T \\
     &= \mathbf{V} \mathbf{\Lambda}^{\nicefrac{1}{2}}  \begin{pmatrix}
        \mathbf{\Omega}_k \mathbf{\Psi}_k \mathbf{\Omega}^T_k   & \mathbf{0} \\
        \mathbf{0} & \mathbf{I}_{d-k}
    \end{pmatrix} \mathbf{\Lambda}^{\nicefrac{1}{2}} \mathbf{V}^T \\
    &= \mathbf{\Sigma}^{\nicefrac{1}{2}} \mathbf{V}  \begin{pmatrix}
        \mathbf{\Omega}_k \mathbf{\Psi}_k \mathbf{\Omega}^T_k   & \mathbf{0} \\
        \mathbf{0} & \mathbf{I}_{d-k}
    \end{pmatrix} \mathbf{V}^T  \mathbf{\Sigma}^{\nicefrac{1}{2}} 
\end{align*}
where $\mathbf{\Psi}_k = \text{diag}\{(\nu_{j, \eta}(k))_{1 \leq j \leq k} \}$, which completes the proof.
\end{proof}

\subsection{Proof of \texorpdfstring{\cref{drop_pcs}}{Proposition 2}} \label{drop_pcs_proof}
\begin{proof}[\textbf{Proof:}]
    First, observe that for $\mathbf{A}_d \in \mathbb{R}^{d \times d}$, under \cref{asymptotic_norm_condition} and \cref{general_balance_condition}, as $n \to \infty$ we know that
    \begin{align*}
        \text{Var}(\sqrt{n}(\widehat{\tau} - \tau) \mid \mathbf{X}, Q^d_\mathbf{A}(\sqrt{n} \widehat{\bm{\tau}}_{\mathbf{X}}) \leq a_d) &= \sum^d_{j=1} \left((\mathbf{\Omega}^T   \mathbf{V})_j \mathbf{\Lambda}^{\nicefrac{1}{2}}  \bm{\beta}_Z \right)^2 \nu_{j, \eta}(d) + V_{\tau \tau}(1 - R^2)
    \end{align*}
    where $V_{\tau \tau} = \text{Var}(\sqrt{n}(\widehat{\tau} - \tau) \mid \mathbf{X})$, $\nu_{j, \eta}(d)$ is defined as in \cref{q_definition}, and $R^2$ is the squared multiple correlation between the potential outcomes and $\mathbf{X}$. Next, following \cref{PCA_QFR}, for some $\mathbf{A}_k \in \mathbb{R}^{k \times k}$ with $k < d$ then $\text{Var}(\sqrt{n}(\widehat{\tau} - \tau) \mid \mathbf{X}, Q^k_\mathbf{A}(\sqrt{n} \widehat{\bm{\tau}}_{\mathbf{Z}}) \leq a_k)$ is given by
    \begin{align*}
         \bm{\beta}^T\mathbf{\Sigma}^{\nicefrac{1}{2}} \mathbf{V}  \begin{pmatrix}
        \mathbf{\Omega}_k \Big( \text{diag}(\nu_{j, \eta}(k))_{1 \leq j \leq k} \Big) \mathbf{\Omega}^T_k   & \mathbf{0} \\
        \mathbf{0} & \mathbf{I}_{d-k}
    \end{pmatrix} \mathbf{V}^T  \mathbf{\Sigma}^{\nicefrac{1}{2}} \bm{\beta}  + V_{\tau \tau}(1 - R^2),
    \end{align*}
    or, after simplifying following the proof of \cref{diff_in_vars}, by
    \begin{align*}
         \sum^k_{j=1}(\mathbf{\Omega}^T_k \mathbf{\Lambda}^{\nicefrac{1}{2}}_k \bm{\beta}^{(k)}_Z)^2_j \nu_{j, \eta}(k) + \sum^d_{j = k+1} \beta^2_{Z,j} \lambda_j  + V_{\tau \tau}(1 - R^2),
    \end{align*}
    where $\bm{\beta}^{(k)}_Z$ denotes the first $k$ values of $\bm{\beta}_Z$ and $\mathbf{\Lambda}_k$ contains the first $k$ eigenvalues of $\mathbf{\Sigma}$. Note that $a_d$ and $a_k$ are chosen to have common acceptance probability $\alpha$. Therefore, it follows that
    \begin{align*}
        \text{Var}(\sqrt{n}(\widehat{\tau} - \tau) \mid \mathbf{X}, Q^d_\mathbf{A}(\sqrt{n} \widehat{\bm{\tau}}_{\mathbf{X}}) \leq a_d) \geq \text{Var}(\sqrt{n}(\widehat{\tau} - \tau) \mid \mathbf{X}, Q^k_\mathbf{A}(\sqrt{n} \widehat{\bm{\tau}}_{\mathbf{Z}}) \leq a_k)
    \end{align*}
if and only if
    \begin{align*}
        \sum^d_{j=1} \left((\mathbf{\Omega}^T   \mathbf{V})_j \mathbf{\Lambda}^{\nicefrac{1}{2}}  \bm{\beta}_Z \right)^2 \nu_{j, \eta}(d)  \geq \sum^k_{j=1}(\mathbf{\Omega}^T_k \mathbf{\Lambda}^{\nicefrac{1}{2}}_k \bm{\beta}^{(k)}_Z)^2_j \nu_{j, \eta}(k) + \sum^d_{j = k+1} \beta^2_{Z,j} \lambda_j 
    \end{align*}
    If we assume that $\mathbf{\Sigma}$ and $\mathbf{\Sigma}^{\nicefrac{1}{2}} \mathbf{A}_d \mathbf{\Sigma}^{\nicefrac{1}{2}}$ share an eigenbasis, as well as that $\mathbf{\Lambda}^{\nicefrac{1}{2}}_k$ and $\mathbf{\Lambda}^{\nicefrac{1}{2}}_k \mathbf{A}_k \mathbf{\Lambda}^{\nicefrac{1}{2}}_k$ share an eigenbasis, this inequality simplifies nicely. Then, it follows that the variance of $\widehat{\tau}$ under the reduced quadratic form is less than under the full model if and only if
    \begin{align*}
        \sum^k_{j=1} \beta^2_{Z, j} \lambda_j (\nu_{j, \eta}(d) - \nu_{j, \eta}(k)) \geq \sum^d_{j=k+1} \beta^2_{Z, j} \lambda_j (1 - \nu_{j, \eta}(d)).
    \end{align*}
\end{proof}
\subsection{Proof of \texorpdfstring{\cref{eigenvalues_qfr}}{Corollary 1}} \label{eigenvalues_qfr_proof}
\begin{proof}[\textbf{Proof:}] We are looking to solve for $\lambda$ such that
\begin{align} \label{gen_eigen_eq}
    \text{det}\left(\text{Cov}\left(\sqrt{n}\left(\bar{\mathbf{X}}_T - \bar{\mathbf{X}}_C\right) \mid \mathbf{X}, Q_\mathbf{A}(\sqrt{n} \widehat{\bm{\tau}}_{\mathbf{X}}) \leq a\right) - \lambda \mathbf{\Sigma} \right) = 0.
\end{align}
Then, applying \cref{Theorem1}, we can see that under \cref{asymptotic_norm_condition} and \cref{general_balance_condition}, as $n \to \infty$,
    \begin{align*}
        \text{Cov}\left(\sqrt{n}\left(\bar{\mathbf{X}}_T - \bar{\mathbf{X}}_C \right) \mid \mathbf{X}, Q_{\mathbf{A}}(\sqrt{n} \widehat{\bm{\tau}}_{\mathbf{X}}) \leq a \right) &= \mathbf{\Sigma}^{\nicefrac{1}{2}} \mathbf{\Omega} \Big( \text{diag}\{(\nu_{j, \eta})_{1 \leq j \leq d} \} \Big) \mathbf{\Omega}^T \mathbf{\Sigma}^{\nicefrac{1}{2}}
    \end{align*}
    where $\mathbf{\Omega}$ is the orthogonal matrix of eigenvectors of $\mathbf{\Sigma}^{\nicefrac{1}{2}} \mathbf{A} \mathbf{\Sigma}^{\nicefrac{1}{2}}$. Thus, it follows that the determinant from \cref{gen_eigen_eq} can be written as
    \begin{align*}
        \text{det}\left(\mathbf{\Sigma}^{\nicefrac{1}{2}} \mathbf{\Omega} \mathbf{\Psi} \mathbf{\Omega}^T \mathbf{\Sigma}^{\nicefrac{1}{2}} - \lambda \mathbf{\Sigma} \right) &= \text{det}\left( \left(\mathbf{\Sigma}^{\nicefrac{1}{2}} \mathbf{\Omega} \mathbf{\Psi} \mathbf{\Omega}^T \mathbf{\Sigma}^{-\nicefrac{1}{2}} - \lambda \mathbf{I}_d \right) \mathbf{\Sigma} \right) \\
        &=  \text{det}\left( \mathbf{\Sigma}^{\nicefrac{1}{2}} \mathbf{\Omega} \mathbf{\Psi} \mathbf{\Omega}^T \mathbf{\Sigma}^{-\nicefrac{1}{2}} - \lambda \mathbf{I}_d \right) \text{det}\left(\mathbf{\Sigma} \right)
    \end{align*}
     where for notational convenience we have defined $\mathbf{\Psi} = \text{diag}\{(\nu_{j, \eta})_{1 \leq j \leq d} \}$ and the second equality follows by the property of the determinant that $\text{det}(\mathbf{A} \mathbf{B}) = \text{det}(\mathbf{A}) \text{det}(\mathbf{B})$ for any two matrices $\mathbf{A}$ and $\mathbf{B}$. Thus, solving for $\lambda$ such that $\text{det}\left( \mathbf{\Sigma}^{\nicefrac{1}{2}} \mathbf{\Omega} \mathbf{\Psi} \mathbf{\Omega}^T \mathbf{\Sigma}^{-\nicefrac{1}{2}} - \lambda \mathbf{I}_d \right) = 0$ equivalently solves for \cref{gen_eigen_eq}. Importantly, we are now solving for the eigenvalues of $\mathbf{\Sigma}^{\nicefrac{1}{2}} \mathbf{\Omega} \mathbf{\Psi} \mathbf{\Omega}^T \mathbf{\Sigma}^{-\nicefrac{1}{2}}$, which is a similar matrix to $\mathbf{\Psi}$ because there exists an invertible matrix $\mathbf{P}$ such that $\mathbf{P}^{-1} \mathbf{\Sigma}^{\nicefrac{1}{2}} \mathbf{\Omega} \mathbf{\Psi} \mathbf{\Omega}^T \mathbf{\Sigma}^{-\nicefrac{1}{2}} \mathbf{P} = \mathbf{\Psi}$, i.e.\ $\mathbf{P} = \mathbf{\Sigma}^{\nicefrac{1}{2}} \mathbf{\Omega}$. Therefore, $\mathbf{\Sigma}^{\nicefrac{1}{2}} \mathbf{\Omega} \mathbf{\Psi} \mathbf{\Omega}^T \mathbf{\Sigma}^{-\nicefrac{1}{2}}$ and $\mathbf{\Psi}$ share the same eigenvalues, which are given by $\nu_{1, \eta}, \ldots, \nu_{d, \eta}$ since $\mathbf{\Psi}$ is a diagonal matrix.

Next, we solve for the eigenvalues of $\text{Cov}\left(\sqrt{n}\widehat{\bm{\tau}}_{\mathbf{X}} \mid \mathbf{X}, Q_{\mathbf{A}}(\sqrt{n} \widehat{\bm{\tau}}_{\mathbf{X}}) \leq a \right)$. To do so, we make the simplifying assumption that $\mathbf{\Gamma}$ diagonalizes $\mathbf{\Sigma}^{\nicefrac{1}{2}} \mathbf{A} \mathbf{\Sigma}^{\nicefrac{1}{2}}$, i.e., that $\mathbf{\Sigma}$ and $\mathbf{\Sigma}^{\nicefrac{1}{2}} \mathbf{A} \mathbf{\Sigma}^{\nicefrac{1}{2}}$ share an eigenbasis. As an example, a sufficient condition for this assumption is that the matrix $\mathbf{A}$ can be written as $\mathbf{\Gamma} \mathbf{D} \mathbf{\Gamma}^T$ for any diagonal matrix $\mathbf{D}$. Under this condition,
    \begin{align*}
       \mathbf{\Gamma}^T \left( \mathbf{\Sigma}^{\nicefrac{1}{2}} \mathbf{A} \mathbf{\Sigma}^{\nicefrac{1}{2}} \right) \mathbf{\Gamma} = \mathbf{\Gamma}^T \left( \mathbf{\Sigma}^{\nicefrac{1}{2}} \mathbf{\Gamma} \mathbf{D} \mathbf{\Gamma}^T \mathbf{\Sigma}^{\nicefrac{1}{2}} \right) \mathbf{\Gamma} = \mathbf{\Lambda}^{\nicefrac{1}{2}} \mathbf{D} \mathbf{\Lambda}^{\nicefrac{1}{2}}.
    \end{align*}
    Therefore, in this setting we can let $\mathbf{\Omega} = \mathbf{\Gamma}$, which simplifies the covariance to
    \begin{align*}
        \text{Cov}\left(\sqrt{n}\left(\bar{\mathbf{X}}_T - \bar{\mathbf{X}}_C \right) \mid \mathbf{X}, Q_{\mathbf{A}}(\sqrt{n} \widehat{\bm{\tau}}_{\mathbf{X}}) \leq a \right) &= \Big( \mathbf{\Gamma} \mathbf{\Lambda}^{\nicefrac{1}{2}} \mathbf{\Gamma}^T \Big) \mathbf{\Gamma} \mathbf{\Psi} \mathbf{\Gamma}^T \Big( \mathbf{\Gamma} \mathbf{\Lambda}^{\nicefrac{1}{2}} \mathbf{\Gamma}^T \Big) \\
        &=  \mathbf{\Gamma} \mathbf{\Lambda}^{\nicefrac{1}{2}} \mathbf{\Psi} \mathbf{\Lambda}^{\nicefrac{1}{2}} \mathbf{\Gamma}^T.
    \end{align*}
Thus, plugging this expression into the determinant calculation yields
    \begin{align*}
       \text{det}\left( \mathbf{\Gamma} \mathbf{\Lambda}^{\nicefrac{1}{2}} \mathbf{\Psi} \mathbf{\Lambda}^{\nicefrac{1}{2}} \mathbf{\Gamma}^T   - \lambda \mathbf{I}_d \right) &= \text{det}\left( \mathbf{\Gamma} \mathbf{\Lambda}^{\nicefrac{1}{2}} \mathbf{\Psi} \mathbf{\Lambda}^{\nicefrac{1}{2}} \mathbf{\Gamma}^T  - \lambda \mathbf{\Gamma} \mathbf{\Gamma}^T \right) \\
        &= \text{det}\left(\mathbf{\Gamma} \Big(  \mathbf{\Lambda}^{\nicefrac{1}{2}} \mathbf{\Psi} \mathbf{\Lambda}^{\nicefrac{1}{2}}  - \lambda \mathbf{I}_d \Big) \mathbf{\Gamma}^T \right).
    \end{align*}
Then, using the property of the determinant that $\text{det}(\mathbf{A} \mathbf{B}) = \text{det}(\mathbf{A}) \text{det}(\mathbf{B})$ for any two matrices $\mathbf{A}$ and $\mathbf{B}$, we may equivalently solve for $\text{det}\left(  \mathbf{\Lambda}^{\nicefrac{1}{2}} \mathbf{\Psi} \mathbf{\Lambda}^{\nicefrac{1}{2}}  - \lambda \mathbf{I}_d \right) = 0$ since
    \begin{align*}
        \text{det}\left(\mathbf{\Gamma} \Big(  \mathbf{\Lambda}^{\nicefrac{1}{2}} \mathbf{\Psi} \mathbf{\Lambda}^{\nicefrac{1}{2}}  - \lambda \mathbf{I}_d \Big) \mathbf{\Gamma}^T \right) = \text{det}\left(\mathbf{\Gamma} \right) \text{det}\left(  \mathbf{\Lambda}^{\nicefrac{1}{2}} \mathbf{\Psi} \mathbf{\Lambda}^{\nicefrac{1}{2}}  - \lambda \mathbf{I}_d \right) \text{det}\left( \mathbf{\Gamma}^T \right), 
    \end{align*}
    so solving for $\lambda$ that makes $\text{det}\left(  \mathbf{\Lambda}^{\nicefrac{1}{2}} \mathbf{\Psi} \mathbf{\Lambda}^{\nicefrac{1}{2}}  - \lambda \mathbf{I}_d \right) = 0$ also solves for the eigenvalues of $\text{Cov}\left(\sqrt{n}\widehat{\bm{\tau}}_{\mathbf{X}} \mid \mathbf{X}, Q_{\mathbf{A}}(\sqrt{n} \widehat{\bm{\tau}}_{\mathbf{X}}) \leq a \right)$. From here, it is easy to see that
    \begin{align*}
        \mathbf{\Lambda}^{\nicefrac{1}{2}} \mathbf{\Psi} \mathbf{\Lambda}^{\nicefrac{1}{2}} = \begin{pmatrix}
            \lambda_1 \nu_{1, \eta} & & 0\\
            & \ddots & \\
           0 & & \lambda_d \nu_{d, \eta}
        \end{pmatrix}
    \end{align*}
    so letting $\lambda = \lambda_j \nu_{j, \eta}$ makes the determinant zero for all $j = 1, \ldots, d$. Thus, under \cref{asymptotic_norm_condition} and \cref{general_balance_condition}, the eigenvalues of $\text{Cov}\left(\sqrt{n}\widehat{\bm{\tau}}_{\mathbf{X}} \mid \mathbf{X}, Q_{\mathbf{A}}(\sqrt{n} \widehat{\bm{\tau}}_{\mathbf{X}}) \leq a \right)$ are given by $\lambda_1 \nu_{1, \eta}, \ldots, \lambda_d \nu_{d, \eta}$ when $\mathbf{\Sigma}$ and $\mathbf{\Sigma}^{\nicefrac{1}{2}} \mathbf{A} \mathbf{\Sigma}^{\nicefrac{1}{2}}$ share an eigenbasis.
\end{proof}

\end{document}